\newcommand{\argmin}{\mathop{\mathrm{argmin}}}
\newtheorem{assumption}{Assumption}
\newtheorem{theorem}{Theorem}[section]
\newtheorem{corollary}{Corollary}[theorem]
\newcommand{\zhou}[1]{{\color{black}{#1}}}
\newcommand{\liu}[1]{{\color{black}{#1}}}
\newcommand{\liuR}[1]{{\color{black}{#1}}}
\begin{document}
%
\title{\liu{Multi-Job Intelligent Scheduling with Cross-Device Federated Learning}}
%
%
%
%

\author{Ji Liu\IEEEauthorrefmark{1}\IEEEauthorrefmark{3},
        Juncheng Jia\IEEEauthorrefmark{1}\IEEEauthorrefmark{4},
        Beichen Ma\IEEEauthorrefmark{3},
        Chendi Zhou\IEEEauthorrefmark{4},
        Jingbo Zhou\IEEEauthorrefmark{3},
        Yang Zhou\IEEEauthorrefmark{5},\\
        Huaiyu Dai\IEEEauthorrefmark{6},~and~
        Dejing Dou\IEEEauthorrefmark{3}
\thanks{
\IEEEauthorrefmark{1} Corresponding author.}
\thanks{\IEEEauthorrefmark{3} J. Liu, B. Ma, J. Zhou, and D. Dou are with Baidu Inc., Beijing, China.}
\thanks{\IEEEauthorrefmark{4} J. Jia and C. Zhou are with Soochow University, China.}
\thanks{\IEEEauthorrefmark{5} Y. Zhou is with Auburn University, United States.}
\thanks{\IEEEauthorrefmark{6} H. Dai is with North Carolina State University, United States.}
}

\IEEEtitleabstractindextext{%
\begin{abstract}
Recent years have witnessed a large amount of decentralized data in various (edge) devices of end-users, while the decentralized data aggregation remains complicated for machine learning jobs because of regulations and laws. 
As a practical approach to handling decentralized data, Federated Learning (FL) enables collaborative global machine learning model training without sharing sensitive raw data. 
The servers schedule devices to jobs within the training process of FL. 
In contrast, device scheduling with multiple jobs in FL remains a critical and open problem.
In this paper, we propose a novel multi-job FL framework, which enables the training process of multiple jobs in parallel. 
The multi-job FL framework is composed of a system model and a scheduling method.
The system model enables a parallel training process of multiple jobs, with a cost model based on the data fairness and the training time of diverse devices during the parallel training process. 
We propose \liu{a novel intelligent scheduling approach based on multiple scheduling methods, including an original reinforcement learning-based scheduling method and an original Bayesian optimization-based scheduling method,} which corresponds to a small cost while scheduling devices to multiple jobs.
We conduct extensive experimentation with diverse jobs and datasets. 
The experimental results reveal that our proposed approaches significantly outperform baseline approaches in terms of training time (up to \liu{12.73} times faster) and accuracy (up to \liu{46.4}\% higher).  
\end{abstract}

\begin{IEEEkeywords}
Federated learning, Scheduling, Multi-job, Parallel execution, Distributed learning.
\end{IEEEkeywords}}

\maketitle

\IEEEdisplaynontitleabstractindextext

%
\IEEEpeerreviewmaketitle

\IEEEraisesectionheading{\section{Introduction}\label{sec:introduction}}

In recent years, we have witnessed a large amount of decentralized data over various Internet of Things (IoT) devices, mobile devices, etc. \cite{liu2021distributed}, which can be exploited to train machine learning models of high accuracy for diverse artificial intelligence applications. Since the data contain sensitive information of end-users, a few stringent legal restrictions \cite{GDPR, CCL, CCPA, chik2013singapore} have been put in place to protect data security and privacy. In this case, it is difficult or even impossible to aggregate the decentralized data into a single server or a data center to train machine learning models. To enable collaborative training with decentralized data, Federated Learning (FL) \cite{mcmahan2017communication}, which does not transfer raw data, have emerged as \liuR{a} practical approach.

FL was first introduced to collaboratively train a global model with non-Independent and Identically Distributed (non-IID) data distributed across mobile devices \cite{mcmahan2017communication}. 
During the training process of FL, the raw data remains decentralized without being transferred to a single server or a single data center \cite{kairouz2019advances, yang2019federated}.
FL only allows the intermediate data to be transferred from the distributed devices, which can be the weights or the gradients of a model. 
FL generally utilizes a parameter server architecture \cite{smola2010architecture,liu2022large,liu2021heterps}, where a server (or a group of servers) coordinates the training process with numerous devices. 
To collaboratively train a global model, the server selects (schedules) several devices to perform local model updates based on their local data, and then it aggregates the local models to obtain a new global model. 
This process is repeated multiple times to generate a global model of high accuracy.

While current FL solutions \cite{mcmahan2017communication, pilla2021optimal} focus on a single-task job or a multi-task job \cite{Smith2017Multi-Task}, FL with multiple jobs \cite{han2020marble} remains an open problem. The major difference between the multi-task job and multiple jobs is that the tasks of the multi-task job share some common parts of the model, while the multiple jobs do not interact with each other in terms of the model. The multi-job FL deals with the simultaneous training process of multiple independent jobs. Each job corresponds to multiple updates during the training process of a global model with the corresponding decentralized data. While the FL with a single job generally selects a portion of devices to update the model, the other devices remain idle, and the efficiency thus is low. The multi-job FL can well exploit diverse devices for multiple jobs simultaneously, which brings high efficiency. The available devices are generally heterogeneous \cite{li2020federated, Li2021Heterogenous}, i.e., the computing and communication capacity of each device is different, and the data in each device may also differ. \liuR{For instance, multiple machine learning jobs, e.g., CTR models \cite{Zhao2019AIBox,Guo2021ScaleFreeCTR}, mobile keyboard prediction \cite{hard2018federated}, and travel time prediction \cite{fang2021ssml}, may be concurrently executed with FL. The concurrent execution can be carried out with the same group of users. In addition, industry-level machine learning jobs, e.g., recommendation system jobs \cite{qin2019duerquiz}, speech recognition \cite{masterson2015baidu}, etc., may be adapted to be executed in parallel with FL for privacy issues. 
}

During the training process of multiple jobs, the devices need to be scheduled for each job. At a given time, a device can be scheduled to one job. However, only a portion of the available devices are scheduled to one job to reduce the influence of stragglers \cite{mcmahan2017communication}. Powerful devices should be scheduled to jobs to accelerate the training process, while other eligible devices should also participate in the training process to increase the fairness of data to improve the accuracy of the final global models. The fairness of data refers to the fair participation of the data in the training process of FL, which can be indicated by the standard deviation of the times to be scheduled to a job \cite{pitoura2007load,finkelstein2008fairness}.

While the scheduling problem of devices is typical NP-hard \cite{Du1989NPHard, liu2020job}, some solutions have already been proposed for the training process of FL \cite{McMahan2017Communication-efficien, Nishio2019Client, Li2021Heterogenous, Abdulrahman2021FedMCCS} or distributed systems \cite{barika2019scheduling}, which generally only focus on a single job with FL. In addition, these methods either cannot address the heterogeneity of devices \cite{McMahan2017Communication-efficien}, or do not consider the data fairness during the training process \cite{Nishio2019Client, Li2021Heterogenous, Abdulrahman2021FedMCCS}, which may lead to low accuracy. 

In this paper, we propose a Multi-Job Federated Learning (MJ-FL) framework to enable the efficient training of multiple jobs with heterogeneous edge devices. The MJ-FL framework consists of a system model and \liu{a novel intelligent scheduling approach}. The system model enables the parallel training process of multiple jobs. With the consideration of both the efficiency of the training process, i.e., the time to execute an iteration, and the data fairness of each job for the accuracy of final models, we propose a cost model based on the training time and the data fairness within the system model. We propose \liu{an intelligent scheduling approach based on multiple scheduling methods, including} two \liu{original} scheduling methods, i.e., reinforcement learning-based and Bayesian optimization-based, to schedule the devices for each job. To the best of our knowledge, we are among the first to study FL with multiple jobs. \liuR{This paper is an extension of a conference version \cite{liu2022Efficient}, with an extra meta-scheduling approach, additional theoretical proof, and extensive experimental results.} We summarize our contributions as follows:

\begin{itemize}
    \item We propose MJ-FL, a multi-job FL framework composed of a parallel training process for multiple jobs and a cost model for scheduling methods.
    We propose combining the capability and data fairness in the cost model to improve the efficiency of the training process and the accuracy of the global model.
    \item We propose two scheduling methods, i.e., Reinforcement Learning (RL)-based and Bayesian Optimization (BO)-based methods, to schedule the devices to diverse jobs \liuR{(more details including the method to estimate the loss in Section \ref{subsec:problem} are added compared with \cite{liu2022Efficient})}. Each method has advantages in a specific situation. The BO-based method performs better for simple jobs, while the RL-based method is more suitable for complex jobs. \liuR{In addition, we provide theoretical convergence analysis in Section \ref{sec:conv}.}
    \item \liu{We propose a novel intelligent scheduling approach based on multiple scheduling methods \liuR{(extra contribution compared with \cite{liu2022Efficient})}. The novel intelligent scheduling approach is a meta-scheduling approach that coordinates multiple scheduling methods to achieve excellent performance \liuR{with an adapted dynamic cost model}.}
    \item We carry out extensive experimentation to validate the proposed approach. We exploit multiple jobs, composed of Resnet18, CNN, AlexNet, VGG, and LeNet, to demonstrate the advantages of our proposed approach using both IID and non-IID datasets \liuR{(with extra extensive experimental results for the meta-scheduling approach)}.
\end{itemize}

The rest of the paper is organized as follows. We present the related work in Section \ref{sec:related_work}. Then, we explain the system model and formulate the problem with a cost model in Section \ref{sec:system_model}. We present the scheduling methods in Section \ref{sec:solutions}. We provide theoretical convergence analysis in Section \ref{sec:conv}. The experimental results with diverse models and datasets are given in Section \ref{sec:experiment}. Finally, Section \ref{sec:conclusion} concludes the paper.

\section{Related Work}
\label{sec:related_work}

In order to protect the security and privacy of decentralized raw data, FL emerges as a promising approach, which enables training a global model with decentralized data  \cite{mcmahan2017communication, yang2019federated, li2020federated, liu2021distributed}. Based on the data distribution, FL can be classified into three types, i.e., horizontal, vertical, and hybrid \cite{yang2019federated, liu2021distributed}. The horizontal FL addresses the decentralized data of the same features, while the identifications are different. The vertical FL handles the decentralized data of the same identifications with different features. The hybrid FL deals with the data of different identifications and different features. In addition, FL includes two variants: cross-device FL and cross-silo FL \cite{kairouz2019advances}. The cross-device FL trains global machine learning models with a huge number of mobile or IoT devices, while the cross-silo FL handles the collaborative training process with the decentralized data from multiple organizations or geo-distributed datacenters. In this paper, we focus on the horizontal and cross-device FL. 

Current FL approaches \cite{Bonawitz19, liu2020fedvision, yurochkin2019bayesian,  Wang2020Federated, Zhang2022FedDUAP,JinAccelerated2022} generally deal with a single job, i.e., with a single global model. While some FL approaches have been proposed to handle multiple tasks \cite{Smith2017Multi-Task, chen2021matching}, the tasks share some common parts of a global model and deal with the same types of data. In addition, the devices are randomly selected (scheduled) in these approaches. 

A few scheduling approaches \cite{McMahan2017Communication-efficien, Nishio2019Client, Li2021Heterogenous, Abdulrahman2021FedMCCS, barika2019scheduling, Nishio2019Client, Li2021Heterogenous, Abdulrahman2021FedMCCS, sun2020deepweave} exist for single-job scheduling while the device scheduling with multi-job FL is rarely addressed. The scheduling methods in the above works are mainly based on some heuristics. For instance, the greedy method \cite{shi2020device} and the random scheduling method \cite{McMahan2017Communication-efficien} are proposed for FL, while genetic algorithms \cite{barika2019scheduling} are exploited for distributed systems. However, these methods do not consider the fairness of data, which may lead to low accuracy for multi-job FL. The black-box optimization-based methods, e.g., RL \cite{sun2020deepweave}, BO \cite{kim2020probabilistic}, and deep neural network \cite{zang2019hybrid}, have been proposed to improve the efficiency, i.e., the reduction of execution time, in distributed systems. They do not consider data fairness either, which may lead to low accuracy for multi-job FL. \liuR{Although ensemble learning or ensemble method consisting of multiple models \cite{chen2019edge}, has been exploited for scheduling parallel tasks \cite{emani2015celebrating}, proper cost models and ensemble mechanism should be well designed.}

\zhou{
\begin{table*}[htbp]
\caption{Summary of Main Notations}
\vspace{-4mm}
\begin{center}
\begin{tabular}{cc}
\toprule
Notation & Definition \\
\hline
\label{tab:summary}
$\mathcal{K}$; $|\mathcal{K}|$ & Set of all devices; size of $\mathcal{K}$ \\
$M$; $m$; $T$ & The total number of jobs; index of jobs; total training time \\
$\mathcal{D}_k^m$; $D_k^m$; $d_k^m$ & Local dataset of Job $m$ on Device $k$; size of $\mathcal{D}_k^m$; batch size of the local update of Device $k$ \\
$\mathcal{D}^m$; $D^m$ & Global dataset of Job $m$; size of $\mathcal{D}^m$ \\
$F_k^m(\boldsymbol{w})$; $F^m(\boldsymbol{w})$ & Local loss function of Job $m$ in Device $k$; global loss function of Job $m$ \\
$\boldsymbol{w}_{k,r}^m(j)$ & Local model of Job $m$ in Device $k$ in the $j$-th local update of Round $r$ \\
$R_m$ & The maximum rounds for Job $m$ during the execution \\
$R_m'$ & The minimum rounds for Job $m$ to achieve the required performance (loss value or accuracy) \\
$l_m$ & The desired loss value for Job $m$\\
$\tau_m$; $C_m$ &Number of local epochs of Job $m$; the ratio between the number of devices scheduled to Job $m$ and $|\mathcal{K}|$\\
$S_m, s_{k,m}^r$ & The frequency vector for Job $m$; the frequency of Device $k$ scheduled to Job $m$ at Round $r$\\
$\mathcal{V}_m^r$ & A set of devices scheduled to Job $m$ at Round $r$\\
$\mathcal{V}_{o}; \mathcal{V}_{o}^r$ & A set of occupied devices; the set of occupied devices in Round $r$ \\
$\zeta^{m,r}_{k,j}$ & The sampled dataset for Job $m$ at local iteration $h$ Round $r$ on Device $k$\\
\bottomrule
\end{tabular}
\end{center}
\vspace{-6mm}
\end{table*}
}

Different from all existing works, we propose a system model for the multi-job FL with the consideration of both efficiency and accuracy. \liu{In addition, we propose a novel intelligent scheduling approach based on multiple scheduling methods. To improve the efficiency of the training process, we propose two original scheduling methods, i.e., RL and BO,} for multi-job FL, which are suitable for diverse models and for both IID and non-IID datasets.

\section{System Model and Problem Formulation}
\label{sec:system_model}

In this section, we first explain the motivation for multi-job FL. Then, we propose our multi-job FL framework 
, consisting of a multi-job FL process and a cost model. Afterward, we formally define the problem to address in this paper. \liu{Please see the meanings of the major notations in Table \ref{tab:summary}. }

\subsection{Motivation for Multi-Job Federated Learning}
Let us assume a scenario where multiple FL jobs are processed simultaneously, e.g., image classification, speech recognition, and text generation. These jobs can be trained in parallel to exploit the available devices efficiently. However, while each device can only update the model of one job at a given time slot, it is critical to schedule devices to different jobs during the training process. 
As the devices are generally heterogeneous, some may possess high computation or communication capability while others may not. 
In addition, the data fairness of multiple devices may also impact the convergence speed of the training process. 
For instance, if only specific powerful devices are scheduled for a job, the model can only learn from the data stored on these devices, while the knowledge from the data stored on other devices may be missed. In order to accelerate the training process of multiple jobs with high accuracy, it is critical to consider how to schedule devices while considering both the computing and communication capability and the data fairness.

A straightforward approach is to train each job separately using the mechanism explained in \cite{McMahan2017Communication-efficien}, while exploiting the existing scheduling of single-job FL, e.g., FedAvg \cite{McMahan2017Communication-efficien}.
In this way, simple parallelism is considered while the devices are not fully utilized and the system is of low efficiency. 
In addition, a direct adaptation of existing scheduling methods to multi-job FL cannot address the efficiency and the accuracy at the same time.
Thus, it is critical to propose a reasonable and effective approach for the multi-job FL.

\subsection{Multi-job Federated Learning Framework}

In this paper, we focus on an FL environment composed of a server module and multiple devices. The server module (Server) may consist of a single parameter server or a group of parameter servers \cite{li2014scaling}. In this section, we present a multi-job FL framework, which is composed of a process for the multi-job execution and a cost model to estimate the cost of the execution.

\subsubsection{Multi-job FL Process}

Within the multi-job FL process, we assume that $K$ devices, denoted by the set $\mathcal{K}$, collaboratively train machine learning models for $M$ jobs, denoted by the set $\mathcal{M}$. Each Device $k$ is assumed to have $M$ local datasets corresponding to the $M$ jobs without loss of generality, and the dataset of the $m$-th job on Device $k$ is expressed as $\mathcal{D}_k^m = \{\boldsymbol{x}_{k,d}^m \in \mathbb{R}^{n_m}, y_{k,d}^m \in \mathbb{R} \}_{d=1}^{D_k^m}$ with $D_k^m = |\mathcal{D}_k^m|$ as the number of data samples, $\boldsymbol{x}_{k,d}^m$ representing the $d$-th $n_m$-dimentional input data vector of Job $m$ at Device $k$, and $y_{k,d}^m$ denoting the labeled output of $\boldsymbol{x}_{k,d}^m$. The whole dataset of Job $m$ is denoted by $\mathcal{D}^m = \bigcup_{k \in \mathcal{K}} \mathcal{D}_k^m$ with $D^m=\sum_{k \in \mathcal{K}} D_k^m$.
The objective of multi-job FL is to learn respective model parameters $\{\boldsymbol{w}^m\}$ based on the decentralized datasets. 

The global learning problem of multi-job FL can be expressed by the following formulation:
\vspace{-3mm}
\begin{equation}\label{eq:eq1}
\min_{\boldsymbol{W}}\sum_{m=1}^M\mathbb{L}_m\textrm{, with }\mathbb{L}_m = \sum_{k=1}^K\frac{D_k^m}{D^m}F_k^m(\boldsymbol{w}^m),
\vspace{-2mm}
\end{equation}
where $\mathbb{L}_m$ is the loss value of Job $m$, $F_k^m(\boldsymbol{w}^m)=\frac{1}{D_k^m}\sum_{\{\boldsymbol{x}_{k,d}^m,y_{k,d}^m\}\in\mathcal{D}_k^m} f^m(\boldsymbol{w}^m;\boldsymbol{x}_{k,d}^m,y_{k,d}^m)$ is the loss value of Job $m$ at Device $k$, $\boldsymbol{W}: \equiv\{\boldsymbol{w}^1,\boldsymbol{w}^2,...,\boldsymbol{w}^M\}$ is the set of weight vectors for all jobs, and $f^m(\boldsymbol{w}^m;\boldsymbol{x}_{k,d}^m, y_{k,d}^m)$ captures the error of the model parameter $\boldsymbol{w}^m$ on the data pair $\{\boldsymbol{x}_{k,d}^m, y_{k,d}^m\}$.

\begin{figure*}[htbp]
\centerline{\includegraphics[width=0.8\linewidth]{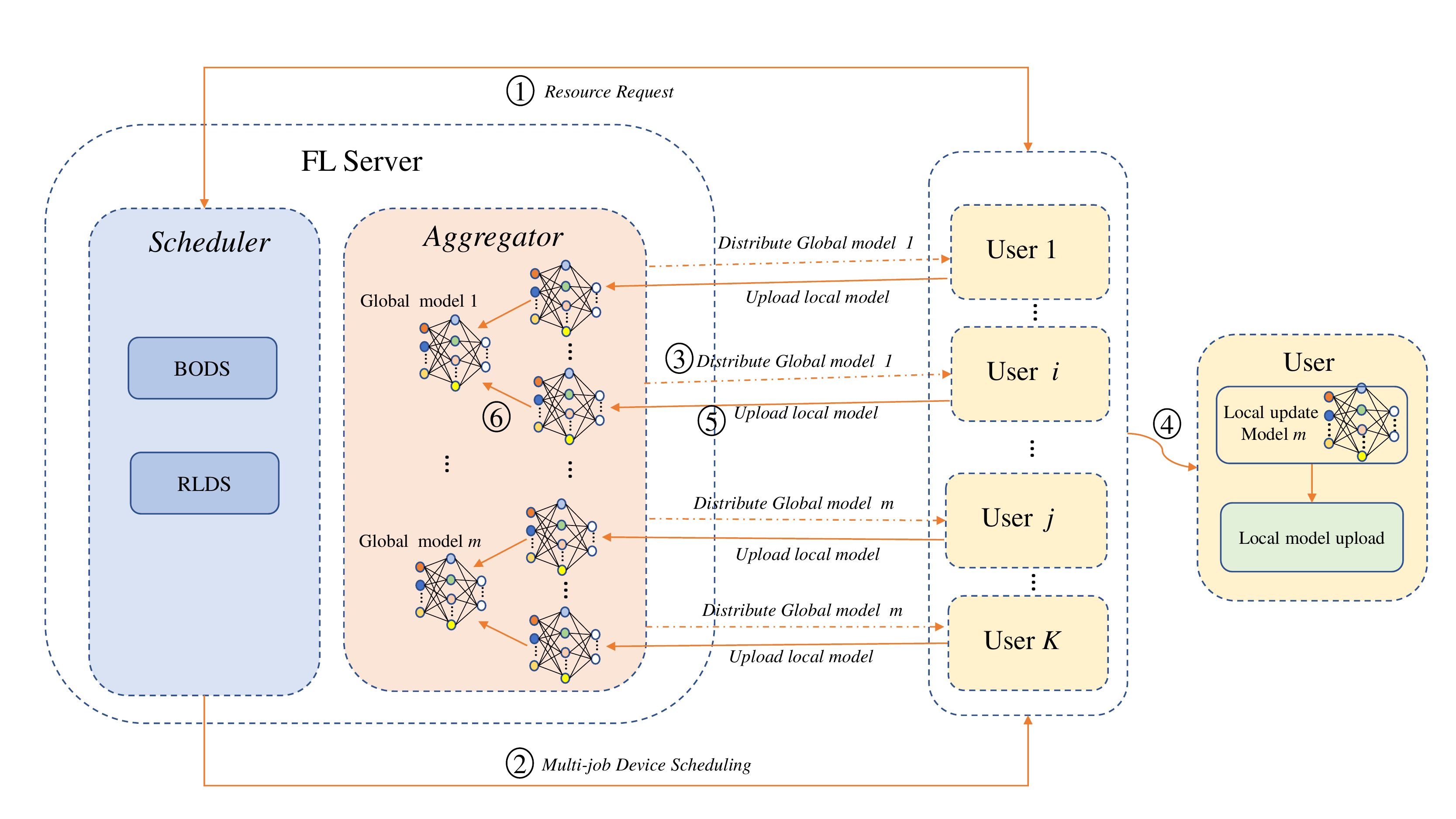}}
\vspace{-3mm}
\caption{The training process within the Multi-job Federated Learning Framework.}
\label{framework}
\vspace{-6mm}
\end{figure*}

In order to solve the problem defined in Formula \ref{eq:eq1}, the Server needs to continuously schedule devices for different jobs to iteratively update the global models until the training processes of the corresponding job converge or achieve the target performance requirement (in terms of accuracy or loss value). We design a multi-job FL process as shown in Figure \ref{framework}. The Server first initializes a global model for each job. The initialization can be implemented randomly or from the pre-training process with public data. To know the current status of devices, the Server sends requests to available devices in Step \textcircled{1}. Then, in Step \textcircled{2}, the Server schedules devices for the current job based on the scheduling plan generated by the scheduling method (see details in Section \ref{sec:solutions}). The scheduling plan is a set of devices selected to execute the local training process of the current job. Note that the scheduling process generates a scheduling plan for each job during the training process of multiple jobs, i.e., with an online strategy, while the scheduling processes for multiple jobs are carried out in parallel. The Server distributes the latest global model for the current job to the scheduled devices in Step \textcircled{3}, and then the model is updated in each device based on the local data in Step \textcircled{4}. Afterward, each device sends the updated model to the Server after its local training in Step \textcircled{5}. Finally, the Server aggregates the models of scheduled devices to generate a new global model in Step \textcircled{6}. The combination of Steps \textcircled{1} - \textcircled{6} is denoted by a round, which is repeated for each job until the corresponding global model reaches the expected performance (accuracy, loss value, or convergence). Please note that multiple jobs are executed in parallel asynchronously, while a device can only be scheduled to one job at a given time. In addition, we assume that each job is of equal importance.

As an FL environment may contain GPUs or other high-performance chips, it is beneficial to train multiple jobs simultaneously to reduce training time while achieving target accuracy. 
Within each round, Step \textcircled{6} exploits FedAvg \cite{McMahan2017Communication-efficien} to aggregate multiple models within each job, which can ensure the optimal convergence \cite{li2019convergence, Zhou2018On}.Within our framework, the sensitive raw data is kept within each device, while only the models are allowed to be transferred. Other methods, e.g., homomorphic encryption \cite{paillier1999public} and differential privacy \cite{dwork2008differential}, can be exploited to protect the privacy of sensitive data. 

\subsubsection{Cost Model}

In order to measure the performance of each round, we exploit a cost model defined in Formula \ref{eq:totalCost}, which is composed of time cost and data fairness cost. The data fairness has a significant impact on convergence speed. 
\vspace{-1mm}
\begin{equation}\label{eq:totalCost}
\vspace{-1mm}
Cost_{m}^{r}(\mathcal{V}_m^r) = \alpha * \mathscr{T}_m^{r}(\mathcal{V}_m^r) + \beta * \mathscr{F}_m^{r}(\mathcal{V}_m^r),
\end{equation}
where $\alpha$ and $\beta$ are the weights of time cost and fairness cost respectively, $\mathscr{T}_m^{r}(\cdot)$ represents the execution time of the training process in Round $r$ with the set of scheduled devices $\mathcal{V}_m^r$, and $\mathscr{F}_m^{r}(\cdot)$ is the corresponding data fairness cost. \liu{We choose the linear combination because of its convenience and excellent performance. In practice, we empirically set $\alpha$ and $\beta$ based on the information from previous execution and adjust them using small epochs. We increase $\alpha$ for fast convergence and increase $\beta$ mainly for high accuracy.}

As defined in Formula \ref{eq:time}, the execution time of a round depends on the slowest device in the set of scheduled devices. 
\vspace{-1mm}
\begin{equation}\label{eq:time}
\mathscr{T}_{m}^{r}(\mathcal{V}_m^r) = \max_{k \in \mathcal{V}_m^r}\{t_m^k\},
\vspace{-1mm}
\end{equation}
where $t_m^k$ is the execution time of Round $r$ in Device $k$ for Job $m$. $t_m^k$ is composed of the communication time and the computation time, which is complicated to estimate and differs for different devices. In this study, we assume that the execution time of each device follows the shift exponential distribution as defined in Formula \ref{eq:distribution} \cite{Shi2021Joint, Lee2018Speeding}: 
\vspace{-2mm}
\begin{equation}
\vspace{-2mm}
\label{eq:distribution}
    \begin{split}
        P[t_m^k \text{\textless} t]= \left \{
            \begin{array}{ll}
                1-e^{-\frac{\mu_k}{\tau_m D^m_k}(t - \tau_m a_k D^m_k)}, & t\geq \tau_m a_k D^m_k,\\
                0,   & \rm{otherwise},
            \end{array}
        \right.
    \end{split}
\end{equation}
where the parameters $a_k > 0$ and  $\mu_k > 0$ are the maximum and fluctuation of the computation and communication capability, which is combined into one quantity, of Device $k$, respectively. 
Moreover, we assume that the calculation time of model aggregation has little impact on the training process because of the strong computation capability of the Server and the low complexity of the model.

The data fairness of Round $r$ corresponding to Job $m$ is indicated by the deviation of the frequency of each device to be scheduled to Job $m$ defined in Formula \ref{eq:fairness}.
\vspace{-1mm}
\begin{equation}\label{eq:fairness}
\mathscr{F}_m^{r}(\mathcal{V}_m^r)=\frac{1}{|\mathcal{K}|}\sum_{k \in \mathcal{K}}(s_{k,m}^r-\frac{1}{|\mathcal{K}|}\sum_{k \in \mathcal{K}}s_{k,m}^r)^2,
\vspace{-2mm}
\end{equation}
where $s_{k,m}^r$ is the frequency of Device $k$ to be scheduled to Job $m$, and $\mathcal{K}$ and $|\mathcal{K}|$ are the set of all devices and the size, respectively. $s_{k,m}^r$ is calculated by counting the total number of the appearance of Device $k$ to be scheduled to Job $m$ in the set of scheduling plans for Job $m$, i.e., $\{\mathcal{V}_m^1,..., \mathcal{V}_m^r\}$.
\liu{
In particular, $S^r_m = \{s^r_{1, m}, ..., s^r_{|\mathcal{K}|,m}\}$ represents the frequency vector of Job $m$ at Round $r$. At the beginning, i.e., Round 0, each $s^0_{k, m} \in S^0_m$ is 0. $s_{k, m}^r$ represents the frequency of Device $k$ scheduled to Job $m$ at Round $r$. Then, we can calculate $s_{k, m}^{r+1}$ using the following formula:
\vspace{-3mm}
\begin{equation}
s_{k, m}^{r+1} = 
\begin{cases}
\displaystyle s_{k, m}^{r} + 1,\hspace*{0.45in}$if$~$Device$~k \in \mathcal{V}_m^r \\
\displaystyle s_{k, m}^{r},\hspace*{0.7in}$otherwise$
\end{cases}
\end{equation}
\vspace{-3mm}
}

\liu{Please note that the ``data fairness'' is reflected by the deviation of the frequency of each device scheduled to a job \cite{petrangeli2014multi}, which is different from the ``fairness'' (the bias of the machine learning models concerning certain features) in machine learning \cite{mehrabi2021survey}. Formula \ref{eq:fairness} is inspired by \cite{petrangeli2014multi}, and we are among the first to extend this idea from distributed or network systems to FL. When the devices are non-uniformly sampled with low data fairness, the convergence is slowed down \cite{li2019convergence, Zhou2018On}. In addition, data fairness is important due to the underlying data heterogeneity across the devices. Data fairness can help arbitrarily select devices without harming the learning performance.}

\subsection{Problem Formulation}
\label{subsec:problem}

The problem we address is how to reduce the training time when given a loss value for each job. While the execution of each job is carried out in parallel, the problem can be formulated as follows:
\vspace{-2mm}
\begin{equation}
\vspace{-2mm}
\begin{split}
\label{eq:problem}
    &\displaystyle\min_{\mathcal{V}_m^r}\Big\{ \sum_{m = 1}^M \sum_{r=1}^{R_m'} \mathscr{T}_{m}^{r}(\mathcal{V}_m^r)\Big\}\\
    \text{s.t.} &\begin{cases}
    \displaystyle \mathbb{L}_m(R_m') \leq l_m, \\
    \displaystyle\mathcal{V}_m^r \subset \mathcal{K}, \forall m\in \{1,2,...,M\}, \forall r\in \{1,2,...,R_m'\},
    \end{cases}
\end{split}
\end{equation}
where $l_m$ is the given loss value of Job $m$, $R_m'$ represents the minimum number of rounds to achieve the given loss in the real execution, and $\mathbb{L}_m(R_m')$ is the loss value of the trained model at Round $R_m'$, defined in Formula \ref{eq:eq1}.

\liu{
We assume Stochastic Gradient Descent (SGD) is utilized to train models, which converges at a rate of $O(r)$ with $r$ representing the number of rounds \cite{2018Optimus}. Inspired by \cite{li2019convergence}, we exploit Formula \ref{eq:accuracy} to roughly estimate the loss value of the global model for Job $m$ at Round $r$.
\vspace{-2mm}
\begin{equation}
\vspace{-2mm}
\label{eq:accuracy}
Loss_m(r) = \frac{1}{\gamma^0_m r + \gamma^1_m} + \gamma^2_m,
\end{equation}
where $\gamma_m^0$, $\gamma_m^1$ and $\gamma_m^2$ represent non-negative coefficients of the convergence curve of Job $m$. $\gamma_m^0$, $\gamma_m^1$ and $\gamma_m^2$ can be calculated based on previous execution.
In addition, we assume that the real number of rounds corresponding to the same loss value has 30\% error compared with $r$ (from the observation of multiple executions). Given a loss value of a model, we exploit this loss estimation method to calculate the maximum rounds for each job. Given a loss value of a model, we utilize this loss estimation method to calculate the number of rounds as $R_m^c$ and take $(1+0.3) * R_m^c$ as $R_m$ defined in Table \ref{tab:summary}. Please note that this estimation is different from the loss value during the real execution; i.e., $R_m'$ can be different from $R_m$.

}

As it requires the global information of the whole training process, which is hard to predict, to solve the problem, we transform the problem to the following one, which can be solved with limited local information of each round. In addition, in order to achieve the given loss value of Job $m$ within a short time (the first constraint in Formula \ref{eq:problem}), we need to consider the data fairness within the total cost in Formula \ref{eq:problem2}, within which the data fairness can help reduce $R_m'$ to minimize the total training time.
\vspace{-2mm}
\begin{equation}
\vspace{-1mm}
\begin{split}
\label{eq:problem2}
    \min_{\mathcal{V}_m^r}\Big\{& TotalCost(\mathcal{V}_m^r)\Big\}, \\
    Total&Cost(\mathcal{V}_m^r) = \sum_{m' = 1}^M Cost_{m'}^{r}(\mathcal{V}_{m'}^r), \\
    s.t. \qquad & \mathcal{V}_{m'}^r \subset \mathcal{K}, \forall {m'}\in \{1,2,...,M\},
\end{split}
\end{equation}
where $Cost_{m}^{r}(\mathcal{V}_m^r)$ can be calculated based on Formula \ref{eq:totalCost} with a set of scheduled devices $\mathcal{V}_m^r$ to be generated using a scheduling method for Job $m$.
Since the scheduling results of one job may potentially influence the scheduling of other jobs, we consider the cost of other jobs when scheduling devices to the current job in this problem.
As the search space is $O(2^{|\mathcal{K}|})$, this scheduling problem is a combinatorial optimization problem \cite{toth2000optimization} and NP-hard \cite{Du1989NPHard}.

\section{Device Scheduling for Multi-job FL}
\label{sec:solutions}

In this section, we propose two original scheduling methods, i.e., BO-based and RL-based, \liu{and a novel intelligent scheduling approach, i.e., meta-greedy, }to address the problem defined in Formula \ref{eq:problem2}. The scheduling plan generated by a scheduling method is defined in Formula \ref{eq:schedulingPlan}:
\vspace{-2mm}
\begin{equation}
\vspace{-2mm}
\label{eq:schedulingPlan}
    \mathcal{V'}_m^r = \argmin_{\mathcal{V}_m^r\subset {\{\mathcal{K} \backslash \mathcal{V}_{o}^r\}}} TotalCost(\mathcal{V}_m^r),
\end{equation}
where $\mathcal{V'}_m^r$ is a scheduling plan, $\mathcal{K} \backslash \mathcal{V}_{o}^r$ represents the set of available devices to schedule, $TotalCost(\mathcal{V}_m^r)$ is defined in Formula \ref{eq:problem2}, and $\mathcal{K}$ and $\mathcal{V}_{o}^r$ are the set of all devices and the set of occupied devices in Round $r$, respectively.

\liu{The BO-based and RL-based methods are designed for different model complexities, and we choose the better one based on known profiling information with small tests (a few epochs) to avoid possible limitations. RLDS favors complex jobs, as it can learn the influence among diverse devices. The influence refers to the concurrent, complementary, and latent impacts of the data in multiple devices for diverse jobs. However, BODS favors simple jobs, while it relies on simple statistical knowledge. The complexity of jobs is determined by the number of parameters of models and the size of the training dataset. We consider the probability to release the devices in $\mathcal{V}_{o}$ in BODS and RLDS, and possible concurrent occupation of other devices for other jobs, which is not explained in the paper to simplify the explanation. During the execution, we assume that a fraction of the devices is sampled for each job and some devices can be unavailable for the execution. In order to optimize the scheduling process for diverse types of models, we further propose a meta-greedy scheduling approach, which takes advantage of multiple scheduling methods and chooses the most appropriate scheduling plan from the results of the scheduling methods.}

\begin{figure}[t]
\vspace{-4mm}
\begin{algorithm}[H]		
    \caption{Bayesian Optimization-Based Scheduling}
    \label{alg:bayesian}
    {\bf{Input:}}\\
    \hspace*{0.3in}$\mathcal{V}_{o}:$ A set of occupied devices\\
    \hspace*{0.3in}$S_m:$ A vector of the frequency of each device sched- \hspace*{0.6in}uled to Job $m$\\
    \hspace*{0.3in}$R_m:$ The maximum round of the current Job $m$\\
    \hspace*{0.3in}$l_m:$ The desired loss value for Job $m$.\\
    {\bf{Output:}}\\
    \hspace*{0.3in} $\mathcal{V}_m = \{\mathcal{V}_{m}^{*1},...,\mathcal{V}_{m}^{*R_m}\}:$ a set of scheduling plans, \hspace*{0.35in}each with the size $|\mathcal{K}| \times C_m$ 
    \begin{algorithmic}[1]
        \State $\Pi_L$ $\leftarrow$ Randomly generate a set of observation points and calculate the cost \label{line:randomGeneration}
        \For{$r \in \{1,...R_m\}$ and $l_m$ is not achieved}
            \State $\Pi'$ $\leftarrow$ Randomly generate a set of observation points \hspace*{0.2in}with the devices within $\mathcal{K} \backslash \mathcal{V}_{o}$ \label{line:randomGenerationInLoop}
            \State $\mathcal{V}_{m}^{*r} \leftarrow \underset{\mathcal{V}\subset \Pi'}{arg\max} \alpha_{\rm EI}(\mathcal{V};\Pi')$ \label{line:chooseNext}
            \State FL training of Job $m$ with $\mathcal{V}_{m}^{*r}$ and update $S_m$, $\mathcal{V}_{o}$ \label{line:boTraining}
            \State $\mathbb{C}_{r}=TotalCost(\mathcal{V}_{m}^{*r})$ \label{line:calculateCost}
            \State $\Pi_{L + r} \gets \Pi_{L + r - 1} \cup (\mathcal{V}_{m}^{*r}, \mathbb{C}_{r})$\label{line:addPoint}
        \EndFor
   \end{algorithmic}
\end{algorithm}
\vspace{-8mm}
\end{figure}

\subsection{Bayesian Optimization-Based Scheduling}

As the Gaussian Process (GP) \cite{Srinivas2010Gaussian} can well represent linear and non-linear functions, BO-based methods \cite{Shahriari2061BO} can exploit a GP to find a near-optimal solution for the problem defined in Formula \ref{eq:schedulingPlan}.
In this section, we propose a Bayesian Optimization-based Device Scheduling method (BODS).

We adjust a GP to fit the cost function $TotalCost(\cdot)$. The GP is composed of a mean function $\mu$ defined in Formula \ref{eq:mu} and a covariance function $\rm{K}$ defined in Formula \ref{eq:kernel} with a $Matern$ kernel \cite{williams2006Ggaussian}.
\vspace{-2mm}
\begin{equation}\label{eq:mu}
\vspace{-2mm}
\mu(\mathcal{V}_{m}^r)=\underset{\mathcal{V}_{m}^r \subset {\{\mathcal{K} \backslash \mathcal{V}_{o}^r\}}}{\mathbb{E}}[TotalCost(\mathcal{V}_{m}^r)]
\end{equation}
\begin{equation}
\vspace{-2mm}
\begin{split}
\label{eq:kernel}
& {\rm K}(\mathcal{V}_{m}^r, \mathcal{V}_{m}'^r) = \underset{\mathcal{V}_{m}^r \subset {\{\mathcal{K} \backslash \mathcal{V}_{o}^r\}}, \mathcal{V}_{m}'^r \subset {\{\mathcal{K} \backslash \mathcal{V}_{o}^r\}}}{\mathbb{E}}\\
& [(TotalCost(\mathcal{V}_{m}^r)-\mu(\mathcal{V}_{m}^r)) (TotalCost(\mathcal{V}_{m}'^r)-\mu(\mathcal{V}_{m}'^r))]
\end{split}
\end{equation}

The BODS is explained in {\bf{Algorithm \ref{alg:bayesian}}}.
First, we generate a random set of observation points and calculate the cost according to the Formula \ref{eq:totalCost} (Line \ref{line:randomGeneration}). Each observation point is a pair of scheduling plan and cost for the estimation of mean function and the covariance function. Then, in each round, we randomly sample a set of scheduling plans (Line \ref{line:randomGenerationInLoop}), within which we use updated $\mu$ and $K$ based on $\Pi_{L + r - 1}$ (Line \ref{line:chooseNext}) to select the one with the largest reward. Afterward, we perform the FL training for Job $m$ with the generated scheduling plan (Line \ref{line:boTraining}). In the meanwhile, we calculate the cost corresponding to the actual execution (Line \ref{line:calculateCost}) according to Formula \ref{eq:problem2} and update the observation point set (Line \ref{line:addPoint}).

Let ($\mathcal{V}_l^r$, $\mathbb{C}_l$) denote an observation point $l$ for Job $m$ in Round $r$, where $\mathcal{V}_l^r=\{\mathcal{V}_{l, 1}^r,...,\mathcal{V}_{l, M}^r\}$ and $\mathbb{C}_l$ is the cost value of $TotalCost(\mathcal{V}_{l,m}^r)$ while the scheduling plans of other jobs are updated with the ones in use in Round $r$.
At a given time, we have a set of observations $\Pi_{L-1}=\{(\mathcal{V}_1^r, \mathbb{C}_1),...,(\mathcal{V}_{L-1}^r, \mathbb{C}_{L-1})\}$ composed of $L-1$ observation points.
We denote the minimum cost value within the $L - 1$ observations by $\mathbb{C}^{+}_{L-1}$. Then, we exploit Expected Improvement(EI) \cite{Jones1998Efficient} to select a new scheduling plan $\mathcal{V}_{m}^{*r}$ in Round $r$ what improves $\mathbb{C}^{+}_{L-1}$ the most, which is the utility function.
Please note that this is not an exhaustive search since we randomly select several observation points (a subset of the whole search space) at the beginning and add new observation points using the EI method. \liu{
The utility function is defined in Formula \ref{eq:equility}.
\begin{figure}[htbp]
\centering
\centerline{\includegraphics[width=1.0\linewidth]{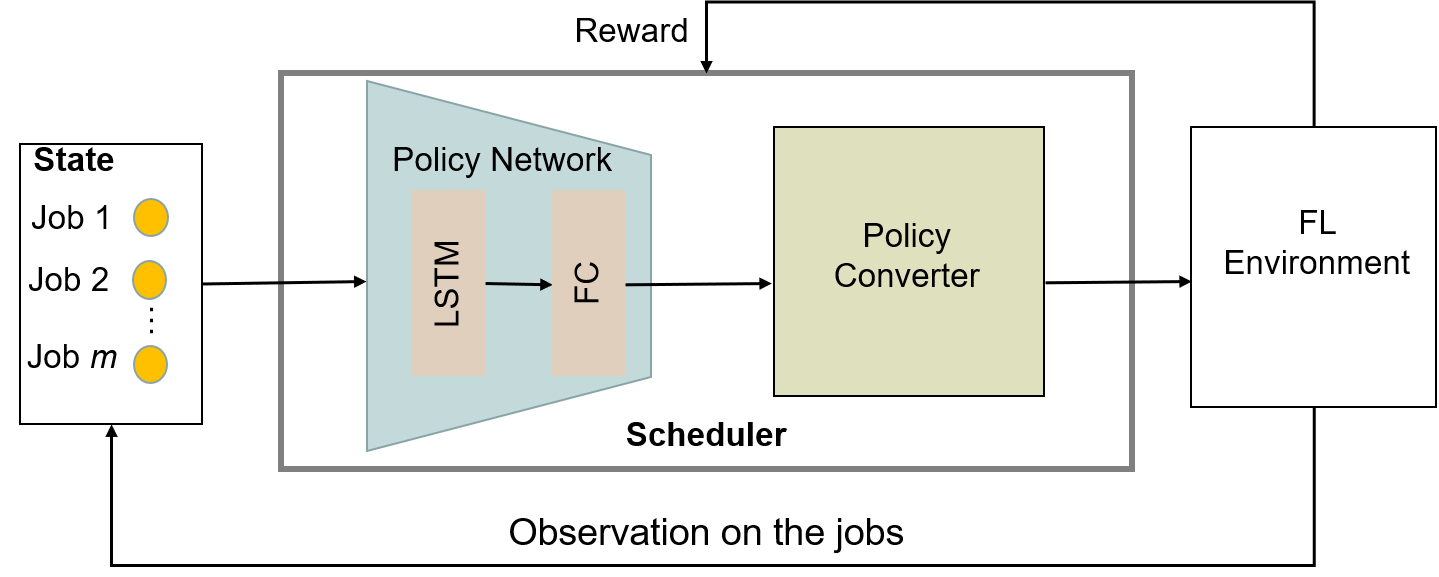}}
\vspace{-2mm}
\caption{The architecture of the RLDS.}
\label{RLDS-framework}
\vspace{-6mm}
\end{figure}
\vspace{-2mm}
\begin{equation}
\vspace{-2mm}
\label{eq:equility}
    u(\mathcal{V}_{m}^{*r})=max(0, \mathbb{C}^{+}_{L-1}-TotalCost(\mathcal{V}_{m}^{*r})),
\end{equation}
where we receive a reward $\mathbb{C}^{+}_{L-1}-TotalCost(\mathcal{V}_{m}^{*r})$ if $TotalCost(\mathcal{V}_{m}^{*r})$ turns out to be less than $\mathbb{C}^{+}_{L-1}$, and no reward otherwise. Then, we use the following formula, which is also denoted an acquisition function, to calculate the expected reward of a given scheduling plan $\mathcal{V}$.
\begin{equation}\label{eq:eq7}
\begin{aligned}
    \alpha_{\rm EI}(\mathcal{V};\Pi_{L-1}) = &\mathbb{E}[u(\mathcal{V})|\mathcal{V}, \Pi_{L-1}]\\
    =&(\mathbb{C}^{+}_{L-1}-\mu(\mathcal{V}))\Phi(\mathbb{C}^{+}_{L-1}; \mu(\mathcal{V}), 
    {\rm K}(\mathcal{V},\mathcal{V}))\\
    &+ {\rm K}(\mathcal{V},\mathcal{V})\mathcal{N}(\mathbb{C}^{+}_{L-1}; \mu(\mathcal{V}), \rm K(\mathcal{V},\mathcal{V})),
 \end{aligned}
\end{equation}
where $\Phi$ is the Cumulative Distribution Function (CDF) of the standard Gaussian distribution. Finally, we can choose the scheduling plan with the largest reward as the next observation point, i.e., $\mathcal{V}_{L,m}^{*r}$.
}

\subsection{Reinforcement Learning-Based Scheduling}
In order to learn more information about the near-optimal scheduling patterns for complex jobs, we further propose a Reinforcement Learning-based Device Scheduling (RLDS) method as shown in Figure \ref{RLDS-framework}. In addition, the method is inspired by \cite{Hongzi2019rniLeang, sun2020deepweave}. The scheduler of RLDS consists of a policy network and a policy converter. During the process of device scheduling, RLDS collects the status information of jobs as the input of the policy network. Afterwards, the policy network generates a list of probabilities on all devices as the output. Finally, the policy converter converts the list into a scheduling plan.

\subsubsection{Policy Network} The policy network is implemented using a Long Short-Term Memory (LSTM) network followed by a fully connected layer, which can learn the device sharing relationship among diverse jobs. We take the computation and communication capability of available devices to be used in Formula \ref{eq:distribution}, and the data fairness of each job defined in Formula \ref{eq:fairness} as the input. The network calculates the probability of each available device to be scheduled for a job.

\subsubsection{Policy Converter} The Policy Converter generates a scheduling plan based on the probability of each available device which \liuR{is} calculated by the policy network with the $\epsilon$-greedy strategy \cite{xia2015online}.

\subsubsection{Training} In the training process of RLDS, we define the reward as $\mathscr{R}^m = -1 * TotalCost(\mathcal{V}_m^r)$. Inspired by \cite{williams1992simple, Zoph2017Neural}, we exploit Formula \ref{eq:rlupdate} to update the policy network:
\vspace{-2mm}
\begin{equation}\label{eq:rlupdate}
\begin{aligned}
    \theta^{'} = \theta + &\frac{\eta}{N} \sum_{n=1}^N \sum_{k \in \mathcal{V}_{n,m}^{r}}^{ \mathcal{V}_{n,m}^{r} \subset \mathcal{K}\backslash\mathcal{V}_{o}^r} \nabla_\theta &\log P(\mathscr{S}_k^m|\mathscr{S}_{(k-1):1}^m; \theta) \\&(\mathscr{R}_n^m - b_m),
\end{aligned}
\end{equation}
where $\theta'$ and $\theta$ represent the updated parameters and the current parameters of the policy network, respectively, $\eta$ represents the learning rate, $N$ is the number of scheduling plans to update the model in Round $r$ ($N > 1$ in the pre-training process and $N = 1$ during the execution of multiple jobs), $P$ represents the probability calculated based on the RL model, $\mathscr{S}_k^m = 1$ represents that Device $k$ is scheduled to Job $m$, and $b_m$ is the baseline value for reducing the variance of the gradient. 

\begin{figure}[t]
\vspace{-4mm}
\begin{algorithm}[H]		
    \caption{Reinforcement Learning Based Pre-Training}
    \label{alg:reinforcement-pre-training}
    {\bf{Input:}}\\
    \hspace*{0.3in}$\mathcal{V}_{o}:$ A set of occupied devices\\
    \hspace*{0.3in}$S_m:$ A vector of the frequency of each device sched- \hspace*{0.6in}uled to Job $m$\\
    \hspace*{0.3in}$N:$ The number of scheduling plans used to train the \hspace*{0.55in}network for each round\\
    \hspace*{0.3in}$R_m:$ The maximum round of the current Job $m$\\
    \hspace*{0.3in}$l_m:$ The desired loss value for Job $m$.\\
    {\bf{Output:}}\\
    \hspace*{0.3in} $\theta:$ Parameters of the pre-trained policy network
    
    \begin{algorithmic}[1]
        \State $\theta$ $\leftarrow$ randomly initialize the policy network, $b_m\leftarrow0$ \label{line:ap:loadModel}
        \For{$r \in \{1,2,...,R_m\}$ and $l_m$ is not achieved}
            \State $\mathcal{V}_{m}^{r} \leftarrow$ generate a set of $N$ scheduling plans\label{line:ap:generatePlan}
            \For{$\mathcal{V}_{n, m}^{r} \in \mathcal{V}_{m}^{r}$}
                \State $\mathscr{R}_n^m \leftarrow$ $-1 *TotalCost(\mathcal{V}_{n,m}^{r})$ \label{line:ap:computeRD}
            \EndFor
            \State Update $\theta$ according to Formula \ref{eq:rlupdate} \label{line:ap:updateTheta}
            \State $b_m$ $\leftarrow$ (1 - $\gamma)$ * $b_m$ + $\frac{\gamma}{N}$ * $\sum_{n=1}^N \mathscr{R}_n^m$ \label{line:ap:updateBm}
            \State $\mathcal{V}_{m}^{*r}$ $\leftarrow$ $\argmin_{\mathcal{V}_{n,m}^{r} \in \mathcal{V}_{m}^{r} }$TotalCost($\mathcal{V}_{n,m}^{r}$)\label{line:ap:bestSchedulingPlan}
            \State Update $S_m$, $\mathcal{V}_{o}$ with $\mathcal{V}_{m}^{*r}$\label{line:ap:updateParameters}
        \EndFor
   \end{algorithmic}
\end{algorithm}
\vspace{-8mm}
\end{figure}

\liu{
We pre-train the policy network using {\bf{Algorithm \ref{alg:reinforcement-pre-training}}}. First, we randomly initialize the policy network (Line \ref{line:ap:loadModel}). We use the latest policy network and the $\epsilon$-Greedy method to generate $N$ scheduling plans (Line \ref{line:ap:computeRD}). The parameters are updated based on the Formula \ref{eq:rlupdate} (Line \ref{line:ap:updateTheta}), and the baseline value $b_m$ is also updated with the consideration of the historical value (Line \ref{line:ap:updateBm}). Afterward, we choose the best scheduling plan that corresponds to the minimum total cost, i.e., the maximum reward (Line \ref{line:ap:bestSchedulingPlan}). Finally, we update the frequency matrix $S_m$ and the set of occupied devices $\mathcal{V}_{o}$, while assuming that the best scheduling plan is used for the multi-job FL (Line \ref{line:ap:updateParameters}).
}

\begin{figure}[t]
\vspace{-4mm}
\begin{algorithm}[H]		
    \caption{Reinforcement Learning-Based Scheduling}
    \label{alg:reinforcement}
    {\bf{Input:}}\\
    \hspace*{0.3in}$\mathcal{V}_{o}:$ A set of occupied devices\\
    \hspace*{0.3in}$S_m:$ A vector of the frequency of each device sched- \hspace*{0.6in}uled to Job $m$\\
    \hspace*{0.3in}$R_m:$ The maximum round of the current Job $m$\\
    \hspace*{0.3in}$l_m:$ The desired loss value for Job $m$.\\
    {\bf{Output:}}\\
    \hspace*{0.3in} $\mathcal{V}_m = \{\mathcal{V}_{m}^{1},...,\mathcal{V}_{m}^{R_m}\}:$ a set of scheduling plans, \hspace*{0.35in}each with the size $|\mathcal{K}| \times C_m$ 
    
    \begin{algorithmic}[1]
        \State $\theta$ $\leftarrow$ pre-trained policy network, $\Delta\theta\leftarrow0$, $b_m\leftarrow0$ \label{line:loadModel}
        \For{$r \in \{1,2,...,R_m\}$ and $l_m$ is not achieved}
            \State $\mathcal{V}_{m}^{r} \leftarrow$ generate a scheduling plan using the policy \hspace*{0.2in}network \label{line:generatePlan}
            \State FL training of Job $m$ and update $S_m$, $\mathcal{V}_{o}$ \label{line:training}
            \State Compute $\mathscr{R}^m$\label{line:computeRD}
            \State Update $\theta$ according to Formula \ref{eq:rlupdate} \label{line:updateTheta}
            \State $b_m$ $\leftarrow$ (1 - $\gamma)$ * $b_m$ + $\gamma$ * $\mathscr{R}_n^m$ \label{line:updateBm}
        \EndFor
   \end{algorithmic}
\end{algorithm}
\vspace{-9mm}
\end{figure}

\liu{After the pre-training, we exploit RLDS during the training process of multiple jobs within the MJ-FL framework as shown in {\bf{Algorithm \ref{alg:reinforcement}}}. First, we load the pre-trained policy network and initialize the parameters $\Delta\theta$, $b_m$ (Line \ref{line:loadModel}). }
When generating a scheduling plan for Job $m$, the latest policy network is utilized (Line \ref{line:generatePlan}). We perform the FL training for Job $m$ with the generated scheduling plan and update the frequency matrix $S_m$ and the set of occupied devices $\mathcal{V}_{o}$ (Line \ref{line:training}). Afterward, we calculate the reward corresponding to the real execution (Line \ref{line:computeRD}). The parameters are updated based on the Formula \ref{eq:rlupdate} (Line \ref{line:updateTheta}), while the baseline value $b_m$ is updated with the consideration of the historical value (Line \ref{line:updateBm}).

\liu{
\subsection{Meta-Greedy Scheduling}

Multiple jobs of diverse structures and layers exist within the multi-job federated learning environment. Some scheduling methods, e.g., BODS, favor simple jobs, while some other scheduling methods, e.g., RLDS, prefer complex jobs. In addition, heuristic scheduling methods, e.g., Greedy and Genetic, can be exploited to schedule devices, as well. In this case, we take advantage of the existing scheduling methods, and further propose a meta-greedy scheduling approach as shown in Figure \ref{Meta-Greedy-framework}. In Round $r$, the Meta-Greedy executes six scheduling methods in parallel, appends the solution generated by each method to the candidate scheduling solutions set $\Theta^r_m$, and selects a scheduling plan from the set of candidate device scheduling solutions according to Formula \ref{eq:RefactoredtotalCost} as the solution for the $r$-th round under the current job.  The meta-greedy scheduling approach chooses the most appropriate one from the scheduling plans generated by multiple methods, which can be closer to the optimal solution compared with that of a single method.

The Meta-Greedy algorithm is shown in {\bf{Algorithm \ref{alg:MetaGreedy}}}. First, for each round (Line \ref{meta:eachRound}), we exploit diverse scheduling methods, e.g., BODS, RLDS, Random \cite{McMahan2017Communication-efficien}, FedCS \cite{Nishio2019Client}, Genetic \cite{barika2019scheduling}, and Greedy \cite{shi2020device}, to generate scheduling plan candidates (Line \ref{meta:generatePlans}). \liuR{As each scheduling method may have superior performance in a specific environment (Random corresponds to high final accuracy; Genetic corresponds to high accuracy at the beginning of the training process for the jobs of moderate complexity; Greedy corresponds to high accuracy at the beginning of the training process of simple jobs; FedCS corresponds to high accuracy at the middle of the the training process; see details in Section \ref{sec:experiment}), we take these 6 methods in Meta-Greedy.} Then, we choose a scheduling plan that corresponds to the smallest total cost according to Formula \ref{eq:RefactoredtotalCost} (Line \ref{meta:selectPlan}). Finally, we utilize the selected scheduling plan for the training process (Line \ref{meta:usePlan}). 

\begin{figure}[htbp]
\centering
\centerline{\includegraphics[width=1.0\linewidth]{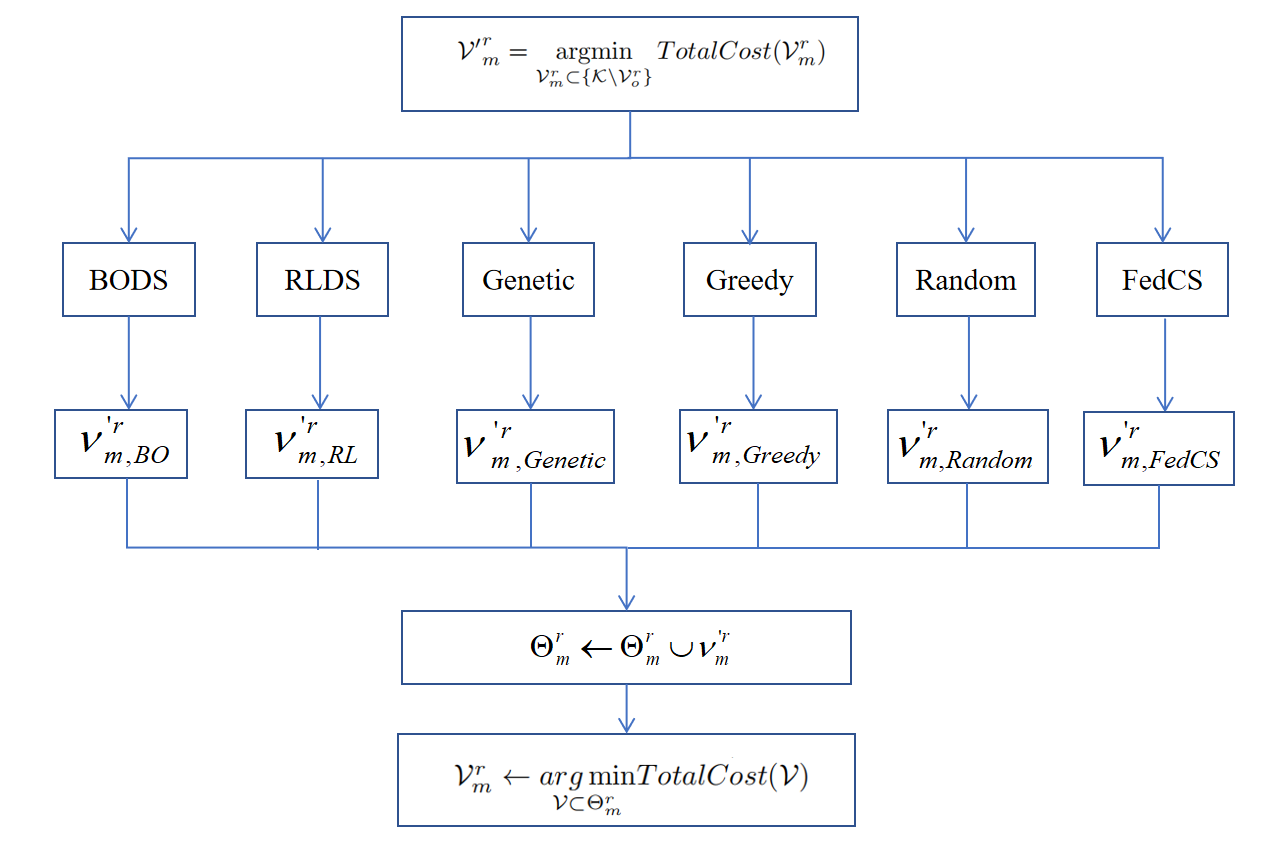}}
\vspace{-2mm}
\caption{The architecture of the Meta-Greedy.}
\label{Meta-Greedy-framework}
\vspace{-2mm}
\end{figure}

\begin{table}[htbp]
  \caption{Experimental Setup of Group A. Size represents the size of training samples and test samples (number of training samples/number of test samples). ``Emnist-L'' represents ``Emnist-Letters'' and ``Emnist-D'' represents ``Emnist-Digitals''.}
  \vspace{-2mm}
  \label{tab:setup-GroupA}
  \begin{tabular}{cccl}
    \toprule
    datasets&  Cifar10& Emnist-L & Emnist-D\\
    \midrule
    Features& 32x32&  28x28& 28x28\\
    Network model& VGG16& CNN& LeNet5\\
    Parameters& 26,233K& 3,785K& 62K\\
    Size & 50k/10k & 124.8k/20.8k & 240k/40k\\
    Local epochs& 5& 5& 5\\
    Mini-batch size& 30& 10& 64\\
  \bottomrule
\end{tabular}
\vspace{-6mm}
\end{table}

We propose a dynamic cost model for the selection in Line \ref{meta:selectPlan}. While Formula \ref{eq:totalCost} can also be exploited for the selection, the influence of the data fairness becomes smaller in the later round of FL as the frequency of the participation has little change with a small increase of one. We reconstruct the cost model using Formula \ref{eq:RefactoredtotalCost} to replace Formula \ref{eq:totalCost}. 
\vspace{-2mm}
\begin{equation}
\vspace{-2mm}
\label{eq:RefactoredtotalCost}
    ReCost(\mathcal{V}_m^r) = \alpha * \mathscr{T}_m^{r}(\mathcal{V}_m^r) + \beta^r * \mathscr{F}_m^{r}(\mathcal{V}_m^r),
\end{equation}
where $\beta^r$ dynamically changes according to $r$ and is defined in Formula \ref{eq:dynamicBeta}.
\vspace{-2mm}
\begin{equation}
\vspace{-2mm}
\label{eq:dynamicBeta}
    \beta^r = \beta*\Omega(r), r \geq 1
\end{equation}
where $\Omega(r)$ is a function based on Round $r$. $\Omega(r)$ should enhance the impact of data fairness when $r$ becomes significant. We take $\Omega(r)=\liu{\sqrt{r}}$ in our algorithm because of its excellent performance (see details in Section \ref{sec:experiment}).
}

\liuR{
\section{Convergence Analysis}
\label{sec:conv}

In this section, we present the theoretical convergence proof for our multi-job FL with arbitrary scheduling methods. We first introduce the assumptions and then present the convergence theorem and corollary. 

\begin{assumption} 
\label{assump:lip}
\textit{Lipschitz gradient: The function $F^m_k$ is $L$-smooth for each device $k \in \mathcal{N}$ i.e., $\parallel \nabla F^m_k(x) - \nabla F^m_k(y) \parallel  \leq  L \parallel x - y \parallel$.}
\end{assumption}

\begin{assumption} 
\label{assump:unbiase}
\textit{Unbiased stochastic gradient: $\mathbb{E}_{\zeta^{m,r}_{k,h} \sim \mathcal{D}_i} [\nabla f^m_k(w^m; \zeta^{m,r}_{k,h})] = \nabla F^m_k(w^m) $.}
\end{assumption}

\begin{assumption} 
\label{assump:localVariance}
\textit{Bounded local variance: For each device $k \in \mathcal{N}$, the variance of its stochastic gradient is bounded: $\mathbb{E}_{\zeta^{m,r}_{k,h} \sim \mathcal{D}_i} \parallel \nabla f^m_k(w^m, \zeta^{m,r}_{k,h}) - \nabla F^m_k(w^m) \parallel^2 \le{\sigma^2}$.}
\end{assumption}

\begin{figure}[t]
\vspace{-4mm}
\begin{algorithm}[H]		
    \caption{Meta-Greedy Scheduling}
    \label{alg:MetaGreedy}
    {\bf{Input:}}\\
    \hspace*{0.3in}$\mathcal{V}_{o}:$ A set of occupied devices\\
    \hspace*{0.3in}$S_m:$ A vector of the frequency of each device sched- \hspace*{0.6in}uled to Job $m$\\
    \hspace*{0.3in}$R_m:$ The maximum round of the current Job $m$\\
    \hspace*{0.3in}$l_m:$ The desired loss value for Job $m$.\\
    {\bf{Output:}}\\
    \hspace*{0.3in} $\mathcal{V}_m = \{\mathcal{V}_{m}^{1},...,\mathcal{V}_{m}^{R_m}\}:$ a set of scheduling plans, \hspace*{0.35in}each with the size $|\mathcal{K}| \times C_m$ 
    
    \begin{algorithmic}[1]
        \State Initialize $\mathcal{V}_m\leftarrow \varnothing$
        \For{$r \in \{1,2,...,R_m\}$ and $l_m$ is not achieved} \label{meta:eachRound}
            \State ${\Theta}_{m}^{r} \leftarrow$ generate a set of candidate scheduling plans using BODS, RLDS, Genetic, Greedy, FedCS and Random within $\mathcal{K}\backslash \mathcal{V}_o$ \label{meta:generatePlans}
            \State $\mathcal{V}_m^r \leftarrow \underset{\mathcal{V} \subset \Theta_m^r}{arg\min ReCost(\mathcal{V})}$ \label{meta:selectPlan}
            \State $\mathcal{V}_m \leftarrow \mathcal{V}_m \cup \mathcal{V}_m^r$ \label{meta:usePlan}
        \EndFor
   \end{algorithmic}
\end{algorithm}
\vspace{-8mm}
\end{figure}

\begin{table}[htbp]
  \caption{Experimental Setup of Group B. Size represents the size of training samples and test samples (number of training samples/number of test samples).}
  \vspace{-2mm}
  \setlength\tabcolsep{5pt}
  \label{tab:setup-GroupB}
  \begin{tabular}{cccl}
    \toprule
    datasets&   Fashion\_{mnist}&  Cifar10&  Mnist\\
    \midrule
    Features& 28x28&  32x32& 28x28\\
    Network model& CNN& ResNet18& AlexNet\\
    Parameters& 225K& 598K& 3,275K\\
    Size & 60K/10K & 50K/10K & 60K/10K\\
    Local epochs& 5& 5& 5\\
    Mini-batch size& 10& 30& 64\\
  \bottomrule
\end{tabular}
\vspace{-6mm}
\end{table}

\begin{assumption} 
\label{assump:sequared}
\textit{Bounded local gradient: For each device $k \in \mathcal{N}$, the expected squared of stochastic gradient is bounded: $\mathbb{E}_{\zeta^{m,r}_{k,h} \sim \mathcal{D}_i}\parallel \nabla f^m_k(w^m, \zeta^{m,r}_{k,h}) \parallel^2 \le{G^2}$.}
\end{assumption}

Assumption \ref{assump:lip} has been made in \cite{li2019convergence,Li2022FedHiSyn}, while Assumptions \ref{assump:unbiase}, \ref{assump:localVariance}, \ref{assump:sequared}, have been exploited in \cite{reddi2020adaptive}. When Assumptions \ref{assump:lip} - \ref{assump:sequared} hold, we get the following theorem.

\begin{theorem}
Suppose that Assumptions 1 to 4 hold, and consider that $F^m_k$ is a non-convex function. When the learning rate satisfies $0 < \eta^m \leq \frac{1}{L}$, then for all $R \ge 1$ we have:
\begin{equation}
\label{eq:eqTheorem}
\begin{aligned}
    &\frac{1}{RH} \sum^{R}_{r=1} \sum^H_{h=1} \mathbb{E} \parallel \nabla 
    F^m(\bar{w}^m_{r,h}) \parallel^2 \\
    \leq &~\frac{2}{{\eta^m_{r,h}}RH}(F^m(\bar{w}^m_{1,1}) - {F^m}^*) + L {\eta^m_{r,h}} \sigma^2 \\
    & + L^2 Q^2 (H-1)^2 {\eta^m_{r,h}}^2 G^2 
\end{aligned}
\end{equation}
where ${F^m}^*$ is the local optimal value, $R$ refers to the number of rounds for Job $m$ during the execution, $H$ represents the number of local iterations, and $Q$ is the upper bound ratio between the learning rate at local iteration $0$ and $h'$, i.e., $\eta^m_{r, 0} \leq Q \eta^m_{r, h'}$.
\end{theorem}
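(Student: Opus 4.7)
The plan is to follow the standard local-SGD non-convex convergence template (as in \cite{li2019convergence,reddi2020adaptive}), and adapt it so that the scheduling choice $\mathcal{V}_m^r$ enters only through an average over selected devices, leaving the bound scheduling-agnostic. The target quantity $\bar{w}^m_{r,h}$ should be read as the virtual average of the local iterates of the devices in $\mathcal{V}_m^r$ at local step $h$ of round $r$. I will work one job $m$ at a time and drop the superscript $m$ where it is unambiguous.

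\textbf{Step 1: Per-step descent via $L$-smoothness.} First I would apply Assumption~\ref{assump:lip} to $F^m$ at $\bar{w}^m_{r,h+1}=\bar{w}^m_{r,h}-\eta^m_{r,h}\,\bar{g}^m_{r,h}$, where $\bar{g}^m_{r,h}$ is the average of the per-device stochastic gradients at local iterate $w^m_{k,r,h}$. The descent lemma gives
\begin{equation*}
F^m(\bar{w}^m_{r,h+1}) \le F^m(\bar{w}^m_{r,h}) - \eta^m_{r,h}\,\langle \nabla F^m(\bar{w}^m_{r,h}),\, \bar{g}^m_{r,h}\rangle + \tfrac{L}{2}{\eta^m_{r,h}}^2\,\|\bar{g}^m_{r,h}\|^2 .
\end{equation*}
Taking expectation and using Assumption~\ref{assump:unbiase} replaces $\mathbb{E}[\bar{g}^m_{r,h}]$ by the average of $\nabla F^m_k(w^m_{k,r,h})$. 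Splitting $\|\bar{g}\|^2 = \|\bar{g}-\mathbb{E}\bar{g}\|^2+\|\mathbb{E}\bar{g}\|^2$ and invoking Assumption~\ref{assump:localVariance} produces the familiar noise term $\tfrac{L}{2}{\eta^m_{r,h}}^2\sigma^2/|\mathcal{V}_m^r|$, which I will upper bound by $\tfrac{L}{2}{\eta^m_{r,h}}^2\sigma^2$ to get the $L\eta\sigma^2$ term in \eqref{eq:eqTheorem}.

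\textbf{Step 2: Handling client drift.} The next step uses the identity $2\langle a,b\rangle = \|a\|^2+\|b\|^2-\|a-b\|^2$ with $a=\nabla F^m(\bar{w}^m_{r,h})$ and $b$ the average of the local gradients. This turns the inner product into $-\tfrac{\eta^m_{r,h}}{2}\|\nabla F^m(\bar{w}^m_{r,h})\|^2$ minus a drift term of the form $\tfrac{\eta^m_{r,h}}{2}\cdot\tfrac{1}{|\mathcal{V}_m^r|}\sum_{k\in \mathcal{V}_m^r}\|\nabla F^m_k(w^m_{k,r,h})-\nabla F^m_k(\bar{w}^m_{r,h})\|^2$, which by Assumption~\ref{assump:lip} is at most $\tfrac{L^2\eta^m_{r,h}}{2}\cdot\tfrac{1}{|\mathcal{V}_m^r|}\sum_{k}\|w^m_{k,r,h}-\bar{w}^m_{r,h}\|^2$. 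The crux of the proof is bounding this drift. Unrolling $w^m_{k,r,h}$ from the synchronization point $w^m_{k,r,0}=\bar{w}^m_{r,0}$ gives $w^m_{k,r,h}-\bar{w}^m_{r,h} = -\sum_{h'=0}^{h-1}\eta^m_{r,h'}(\nabla f^m_k(w^m_{k,r,h'},\zeta)-\text{avg})$. Squaring, using Jensen's inequality over the at most $H-1$ summands, applying the learning-rate ratio $\eta^m_{r,h'}\le Q\,\eta^m_{r,h}$ from the statement, and invoking Assumption~\ref{assump:sequared}, I obtain $\mathbb{E}\|w^m_{k,r,h}-\bar{w}^m_{r,h}\|^2 \le Q^2(H-1)^2{\eta^m_{r,h}}^2 G^2$. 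This is exactly where the factor $L^2 Q^2(H-1)^2{\eta^m_{r,h}}^2 G^2$ in \eqref{eq:eqTheorem} is generated.

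\textbf{Step 3: Absorbing the $\|\mathbb{E}\bar{g}\|^2$ term and telescoping.} The condition $\eta^m_{r,h}\le 1/L$ is used precisely so that the coefficient $\tfrac{L}{2}{\eta^m_{r,h}}^2$ in front of $\|\mathbb{E}\bar{g}\|^2$ can be absorbed into the negative $\tfrac{\eta^m_{r,h}}{2}\|\nabla F^m(\bar{w}^m_{r,h})\|^2$ term produced in Step~2 (after one more $L$-smoothness step relating $\|\mathbb{E}\bar{g}\|^2$ to $\|\nabla F^m(\bar{w}^m_{r,h})\|^2$ plus drift). Rearranging yields
\begin{equation*}
\tfrac{\eta^m_{r,h}}{2}\mathbb{E}\|\nabla F^m(\bar{w}^m_{r,h})\|^2 \le \mathbb{E} F^m(\bar{w}^m_{r,h}) - \mathbb{E} F^m(\bar{w}^m_{r,h+1}) + \tfrac{L}{2}{\eta^m_{r,h}}^2\sigma^2 + \tfrac{L^2}{2}Q^2(H-1)^2{\eta^m_{r,h}}^3 G^2 .
\end{equation*}
I would then telescope this inequality over $h=1,\ldots,H$ and $r=1,\ldots,R$, use $F^m(\bar{w}^m_{R+1,1})\ge {F^m}^*$ to drop the final term, divide through by $\tfrac{\eta^m_{r,h}}{2}RH$, and match the three resulting terms against \eqref{eq:eqTheorem}.

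\textbf{Main obstacle.} The hard part will be Step~2: cleanly bounding the client-drift $\mathbb{E}\|w^m_{k,r,h}-\bar{w}^m_{r,h}\|^2$ while respecting that the learning rate $\eta^m_{r,h}$ varies across local iterations (this is what forces the $Q^2$ factor) and that the device set $\mathcal{V}_m^r$ is selected by an arbitrary scheduling rule. I would handle the latter by noting that the drift bound is uniform over $k$ (it only uses per-device assumptions), so it is independent of how $\mathcal{V}_m^r$ is chosen; this is the key observation that lets the scheduling method drop out of the right-hand side of \eqref{eq:eqTheorem}.
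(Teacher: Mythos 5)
Your proposal follows essentially the same route as the paper's proof: the $L$-smoothness descent lemma, the polarization identity $\langle a,b\rangle=\tfrac12\left(\|a\|^2+\|b\|^2-\|a-b\|^2\right)$ applied to the cross term, a bias/variance split of the aggregated stochastic gradient via Assumptions 2--3, a client-drift bound obtained by unrolling from the synchronization point, applying Jensen over at most $H-1$ summands, the ratio condition $\eta^m_{r,0}\leq Q\,\eta^m_{r,h'}$, and Assumption 4 to produce $Q^2(H-1)^2{\eta^m_{r,h}}^2G^2$, followed by telescoping over $h$ and $r$ and using ${F^m}^*$ as a lower bound. The one place where your write-up deviates, and where as literally described it would not reproduce the stated constants, is the treatment of $\|\mathbb{E}\bar{g}\|^2=\|\sum_{k}p^m_{k,r}\nabla F^m_k(w^{m,k}_{r,h})\|^2$. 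The polarization identity yields \emph{three} terms, including $-\tfrac{\eta^m_{r,h}}{2}\|\sum_k p^m_{k,r}\nabla F^m_k(w^{m,k}_{r,h})\|^2$; your Step 2 records only the $-\tfrac{\eta^m_{r,h}}{2}\|\nabla F^m(\bar{w}^m_{r,h})\|^2$ piece and the drift piece. The paper keeps that middle term and cancels it directly against the $+\tfrac{L}{2}{\eta^m_{r,h}}^2\|\sum_k p^m_{k,r}\nabla F^m_k(w^{m,k}_{r,h})\|^2$ coming from the quadratic term of the descent lemma: the combined term has coefficient $-\tfrac{\eta^m_{r,h}-L{\eta^m_{r,h}}^2}{2}\leq 0$ precisely when $\eta^m_{r,h}\leq 1/L$, so it is simply dropped. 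Your alternative in Step 3 --- bounding $\|\mathbb{E}\bar{g}\|^2$ by $\|\nabla F^m(\bar{w}^m_{r,h})\|^2$ plus drift via an extra smoothness step and absorbing into the negative gradient term --- necessarily introduces a factor of $2$ from Young's inequality, so it either forces a step size stricter than $1/L$ or leaves a coefficient smaller than $\tfrac{\eta^m_{r,h}}{2}$ on the gradient norm, changing the constants in the final bound. This is easily repaired by keeping the middle term of the identity; everything else in your plan, including the observation that the drift bound is uniform in $k$ and hence scheduling-agnostic, matches the paper's argument.
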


\begin{proof}
The proof can be found in Appendix.
\end{proof}

Then, we can get the following corollary:

\begin{table*}[ht]
  \centering
  \caption{The convergence accuracy and the time required to achieve the target accuracy for different methods in Group A. The numbers in parentheses represent the target accuracy, and "/" represents that the target accuracy is not achieved.}
  \vspace{-2mm}
  \setlength\tabcolsep{1.5pt}
  \label{tab:GroupA}
  \begin{tabular}{cccccccc cccccccc}
    \toprule
    &\multicolumn{7}{c}{Convergence Accuracy} & &\multicolumn{7}{c}{Time (min)}\\
    \cline{2-8} \cline{10-16}
    &  Random& Genetic& FedCS& Greedy& BODS& RLDS& Meta-Greedy  & & Random& Genetic& FedCS& Greedy& BODS& RLDS& Meta-Greedy\\
    \cline{1-16}
    &\multicolumn{15}{c}{Non-IID}\\
    
    \midrule
    VGG& 0.55&	0.54& 0.55&	0.43& 0.57& 0.56& \bf{0.577}& VGG (0.55)& 2486& 1164.3& 1498.5& /& 455.1& 406.8& \bf{291.3}\\
    Cnn& \bf{0.90}&	0.80& 0.80&	0.83& \bf{0.90}& 0.897& \bf{0.90}& Cnn (0.80)& 44.25& 95.85& 27.39& 43.04& 15.88& 17.6& \bf{11.9}\\
    LeNet& 0.990& 0.988& 0.990& 0.986& \bf{0.991}& \bf{0.991}& \bf{0.991}& LeNet (0.984)& 43.81& 30.15&	33.37& 43.76& 28.84& 22.54& \bf{12.37}\\
    
     \cline{1-16}
    &\multicolumn{15}{c}{IID}\\
    \cline{1-16}
    VGG& \bf{0.610}&	0.558&	0.603&	0.522&	0.603&	0.605&	0.602& VGG (0.55)& 126.9& 231.4& 87.5& /& 57.7& 43.81& \bf{30.34}\\
    Cnn& \bf{0.943}& 0.928&	0.942&	0.928&	\bf{0.943}&	0.935&	0.936& Cnn (0.930)& 52.05& 176.85& 27.45& 26.48& 19.25& 13.0& \bf{12.88}\\
    LeNet&	0.9945&	0.9928&	0.9934&	0.990&	\bf{0.9946}&	\bf{0.9946}& 0.9936& LeNet (0.99)& 14.94& 6.03& 7.12& 17.98&	5.18& 4.02& \bf{2.02}\\
  \bottomrule
\end{tabular}
\vspace{-2mm}
\end{table*}

\begin{table*}[tb]
  \centering
  \caption{The convergence accuracy and the time required to achieve the target accuracy for different methods in Group B. The numbers in parentheses represent the target accuracy, and "/" represents that the target accuracy is not achieved.}
  \vspace{-2mm}
  \setlength\tabcolsep{1pt}
  \label{tab:GroupB}
  \begin{tabular}{cccccccc cccccccc}
    \toprule
    &\multicolumn{7}{c}{Convergence Accuracy} & &\multicolumn{7}{c}{Time (min)}\\
    \cline{2-8} \cline{10-16}
    & Random& Genetic& FedCS& Greedy& BODS& RLDS& Meta-Greedy & & Random& Genetic& FedCS& Greedy& BODS& RLDS& Meta-Greedy\\
    \cline{1-16}
    &\multicolumn{15}{c}{Non-IID}\\
    
    \midrule
    ResNet&	0.546& 0.489& 0.523& 0.403&	0.583&	0.562&	\bf{0.590}& ResNet (0.50)& 852.9& 621.5& 402.3& /& 219.8& 168.9& \bf{149.8}\\
    Cnn& 0.824&	0.767&	0.823&	0.764&	0.836&	0.830&	\bf{0.845}& Cnn (0.73)& 47.1& 22.0& 	18.5& 70.8& 13.8& 15.5& \bf{13.0}\\
    AlexNet& 0.989&	0.986&	0.987&	0.871&	0.990&	0.990&	\bf{0.990}& AlexNet (0.976)&  140.57& 	60.0& 	82.87& 	181.2& 59.2& 53.87& \bf{45.0}\\
    
    \cline{1-16}
    &\multicolumn{15}{c}{IID}\\
    \cline{1-16}
    ResNet&	0.787&	0.754& 0.782& 0.743& 0.791& 0.785& \bf{0.799}& ResNet (0.740)& 65.93& 32.51& 31.4& 52.93& 15.9& 12.82& \bf{10.6}\\
    Cnn& 0.868&	0.867& 0.868& 0.868& 0.869& 0.869& \bf{0.871}& Cnn (0.867)& 120.19& 38.99& 89.6& 36.13& 32.83&	19.7& \bf{16.13}\\
    AlexNet& 0.9938& 0.9938& 0.9939& 0.9935& 0.9939& \bf{0.9943}& 0.9940& AlexNet (0.9933)& 35.08& 19.44& 20.97& /& 21.65& 12.9& \bf{10.8}\\
  \bottomrule
\end{tabular}
\vspace{-2mm}
\end{table*}

\zhou{
\begin{table*}[htbp]
  \centering
  \caption{The time required to achieve the target accuracy for jobs executed sequentially with FedAvg. ``*'' indicates that it fails to achieve the target accuracy.}
  \vspace{-2mm}
  \setlength\tabcolsep{10pt}
  \renewcommand{\arraystretch}{1.3}
  \label{tab:Signal}
  \begin{tabular}{cccc cccc}
    \toprule
     & \multicolumn{3}{c}{non-IID/IID}&  & \multicolumn{3}{c}{non-IID/IID}\\
    \cline{2-4} \cline{6-8}
    Job&  VGG& CNN& LeNet& & ResNet& CNN& AlexNet\\
    \cline{1-8}
    Target Accuracy&  0.55/0.55& 0.80/0.93& 0.984/0.99& &
       0.50/0.74& 0.73/0.867& 0.976/0.9933\\
    \midrule
    Time (min)&  2483.4/133.3& 53.1/45.5& 50.5/18.01&  &
               897.2/*& 35.8/322.6& 115.8/65.16\\
 \bottomrule
 \vspace{-9mm}
 \end{tabular}
\end{table*}
}

\begin{corollary}
When we choose $\eta^m_r = \frac{1}{L\sqrt{RH}}$ and $Q \leq (RH)^{\frac{1}{4}}$, we have:
\begin{equation}
\label{eq:eqCorollary}
\begin{aligned}
&\frac{1}{RH} \sum^{R}_{r=1} \sum^H_{h=1} \mathbb{E} \parallel \nabla F^m(\bar{w}^m_{r,h}) \parallel^2 \\
\leq &~\frac{2L}{\sqrt{RH}}(F^m(\bar{w}^m_{1,1}) - {F^m}^*) + \frac{1}{\sqrt{RH}} \sigma^2
    + \frac{1}{\sqrt{RH}} (H-1)^2 G^2 \\
= &~\mathcal{O}(\frac{1}{\sqrt{RH}})
\end{aligned}
\end{equation}
\end{corollary}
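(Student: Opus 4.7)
The plan is to derive the corollary as an immediate consequence of Theorem~1 by substituting the prescribed learning rate $\eta^m_r = 1/(L\sqrt{RH})$ and then using the stated bound on $Q$ to collapse the third term into the desired rate. Before invoking the theorem, I would first verify the admissibility condition $0 < \eta^m_r \leq 1/L$; for $\eta^m_r = 1/(L\sqrt{RH})$ this is automatic whenever $R,H \geq 1$, so no additional hypothesis is required.

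Next I would plug the learning rate into each of the three summands on the right-hand side of \eqref{eq:eqTheorem} separately. The first term $\frac{2}{\eta^m_{r,h} RH}(F^m(\bar w^m_{1,1}) - {F^m}^*)$ rearranges to $\frac{2L}{\sqrt{RH}}(F^m(\bar w^m_{1,1}) - {F^m}^*)$ via $2/(\eta^m_r RH) = 2L\sqrt{RH}/(RH)$. The second term $L\eta^m_{r,h} \sigma^2$ reduces to $\sigma^2/\sqrt{RH}$, since the factor $L$ cancels with the $L$ in the denominator of $\eta^m_r$. The third term $L^2 Q^2 (H-1)^2 (\eta^m_{r,h})^2 G^2$ reduces, after the factors of $L^2$ cancel with $(\eta^m_r)^2 = 1/(L^2 RH)$, to $Q^2(H-1)^2 G^2 / (RH)$.

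The main (and essentially only) step that requires an inequality rather than a direct substitution is bounding this third term by $(H-1)^2 G^2 / \sqrt{RH}$. This is where the hypothesis $Q \leq (RH)^{1/4}$ enters: squaring yields $Q^2 \leq \sqrt{RH}$, so
\begin{equation*}
\frac{Q^2 (H-1)^2 G^2}{RH} \;\leq\; \frac{\sqrt{RH}\,(H-1)^2 G^2}{RH} \;=\; \frac{(H-1)^2 G^2}{\sqrt{RH}},
\end{equation*}
which is precisely the form stated in the corollary. Summing the three bounds produces the inequality in \eqref{eq:eqCorollary}.

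Finally, to pass to the asymptotic $\mathcal{O}(1/\sqrt{RH})$ conclusion, I would observe that the constants $L$, $F^m(\bar w^m_{1,1}) - {F^m}^*$, $\sigma^2$, and $(H-1)^2 G^2$ are problem-dependent quantities that do not grow with $R$, so each of the three additive contributions is of order $1/\sqrt{RH}$, and their sum inherits this rate. I do not expect any real obstacle here: the argument is mechanical substitution together with one inequality, and the only points where care is needed are the cancellation of the $L$ factors in the second and third terms, and the correct exponent matching $Q^2 \leq \sqrt{RH}$ which fixes the precise threshold $(RH)^{1/4}$ on $Q$.
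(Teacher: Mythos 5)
Your proposal is correct and follows essentially the same route as the paper's own proof: substitute $\eta^m_r = \frac{1}{L\sqrt{RH}}$ into the three terms of Theorem~1, then apply $Q \leq (RH)^{\frac{1}{4}}$ (i.e., $Q^2 \leq \sqrt{RH}$) to reduce the third term from $\frac{Q^2(H-1)^2 G^2}{RH}$ to $\frac{(H-1)^2 G^2}{\sqrt{RH}}$. The only addition beyond the paper is your explicit check that $\eta^m_r \leq \frac{1}{L}$, which is a harmless (and welcome) verification.
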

We can find that, the training process of multi-job FL converges to a stationary point of $f(w^*)$ with a convergence rate of $\mathcal{O}(\frac{1}{\sqrt{RH}})$ for each job.
}

\section{Experiments}
\label{sec:experiment}
In this section, we present the experimental results to show the efficiency of our proposed scheduling methods within MJ-FL. We compared the performance of \liu{Meta-Greedy, RLDS, and BODS} with six baseline methods, i.e., Random \cite{McMahan2017Communication-efficien}, FedCS \cite{Nishio2019Client}, Genetic \cite{barika2019scheduling}, Greedy \cite{shi2020device}, \liu{Deep Neural Network (DNN), and Simulated Annealing (SA)} \cite{van1987simulated}.

\subsection{Federated Learning Setups}

In the experiment, we take three jobs as a group to be executed in parallel. 
We carry out the experiments with two groups, i.e., Group A with VGG-16 (VGG) \cite{simonyan2015very}, CNN (CNN-A-IID and CNN-A-non-IID) \cite{lecun1998gradient}, and LeNet-5 (LeNet) \cite{lecun1998gradient}, and Group B with Resnet-18 (ResNet) \cite{He2016}, CNN (CNN-B) \cite{lecun1998gradient}, and Alexnet \cite{Krizhevsky2012ImageNet}, while each model corresponds to one job. The complexity of the models is as follows: AlexNet $<$ CNN-B $<$ ResNet and LeNet $<$ CNN (CNN-A-IID and CNN-A-non-IID) $<$ VGG. We exploit the datasets of Cifar-10 \cite{krizhevsky2009learning}, emnist-letters \cite{cohen2017emnist}, emnist-digital \cite{cohen2017emnist}, Fashion-MNIST \cite{xiao2017fashion}, and MNIST \cite{lecun1998gradient} in the training process.

CNN-A-IID comprises of two $3\times 3$ convolution layers, one with 32 channels and the other with 64 channels. Each layer is followed by one batch normalization layer and $2\times 2$ max pooling. Then, there are one flatten layer and three fully-connected layers (1568, 784, and 26 units) after the two convolution layers. In addition, we make a simple modification of CNN-A-IID to CNN-A-non-IID since the convergence behavior of CNN on non-IID in Group A is not satisfiable. CNN-A-non-IID consists of three $3\times 3$ convolution layers (32, 64, 64 channels, each of them exploits ReLU activations, and each of the first two convolution layers is followed by $2\times 2$  max pooling), followed by one flatten layer and two fully-connected layers (64, 26 units). CNN-B consists of two $2\times 2$ convolution layers (64, 32 channels, each of them exploits ReLU activations) followed by a flatten layer and a fully-connected layer, and each convolution layer is followed by a dropout layer with 0.05. In addition, the other parameters are shown in Tables \ref{tab:setup-GroupA} and \ref{tab:setup-GroupB}.

\begin{figure*}[htbp]
\centering
\begin{subfigure}{0.3\linewidth}
\includegraphics[width=\linewidth]{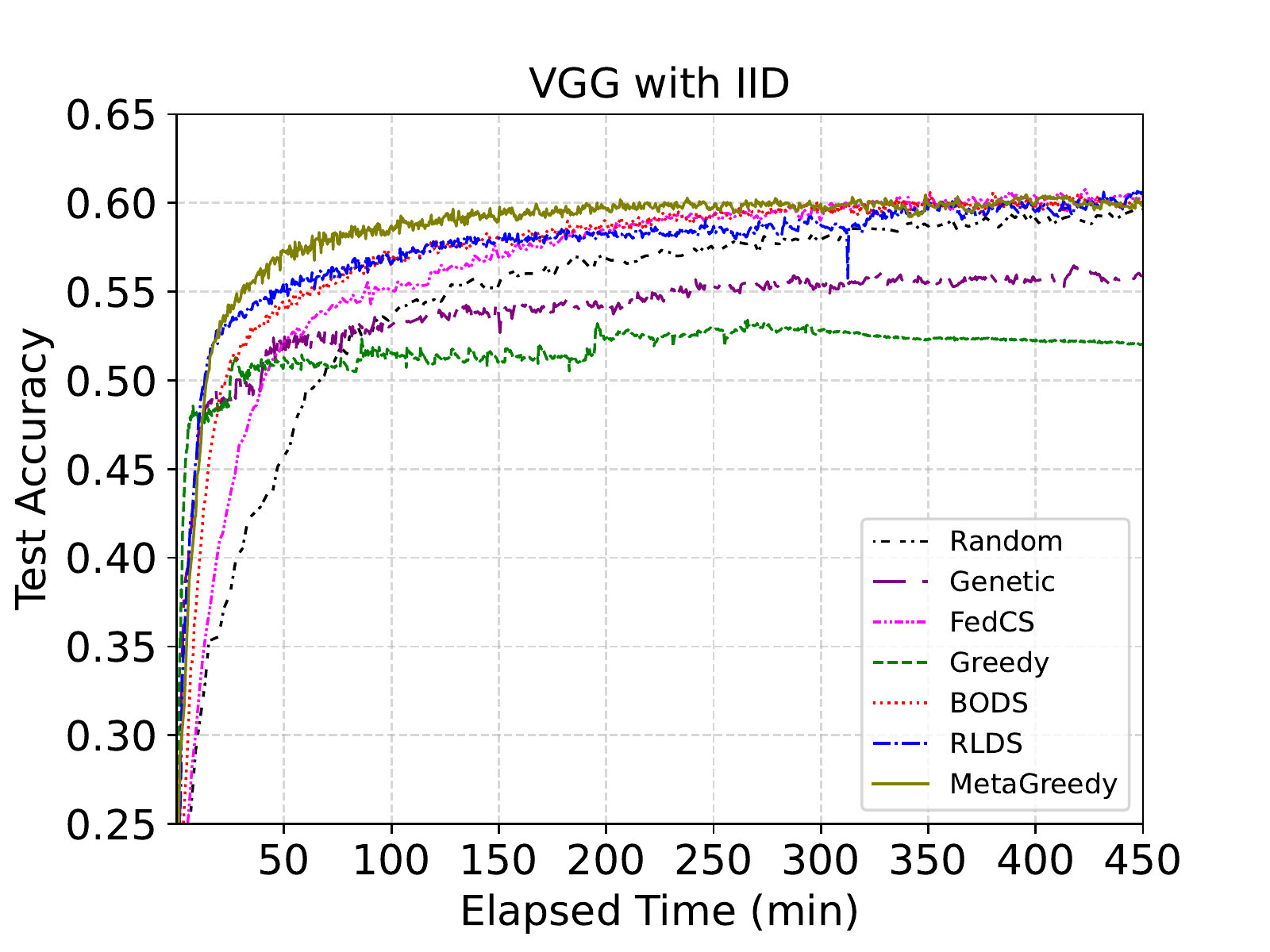}
\vspace{-2mm}
\caption{}
\label{figVgiid}
\end{subfigure}
\begin{subfigure}{0.3\linewidth}
\includegraphics[width=\linewidth]{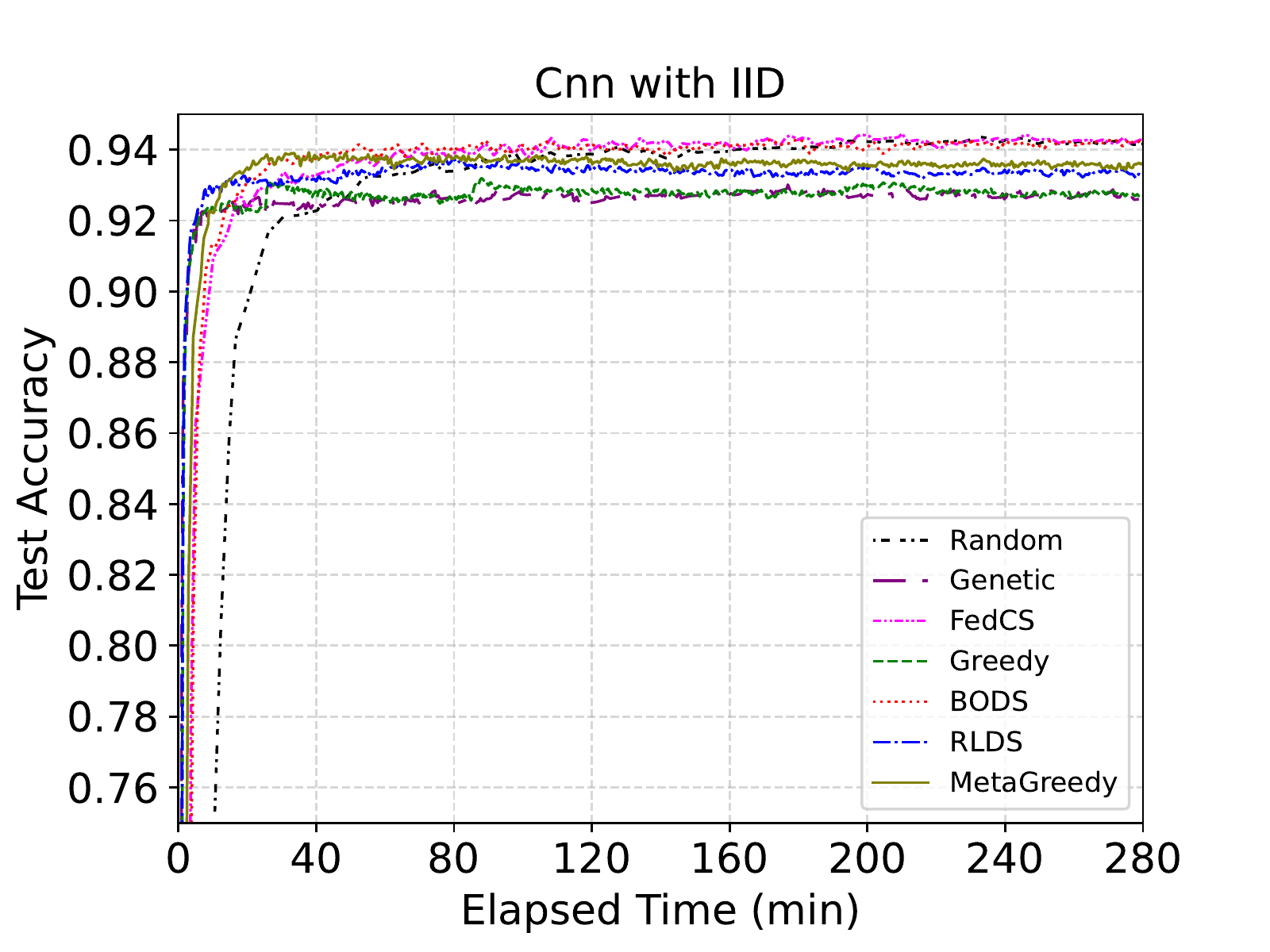}
\vspace{-4mm}
\caption{}
\label{figCiid}
\end{subfigure}
\begin{subfigure}{0.3\linewidth}
\includegraphics[width=\linewidth]{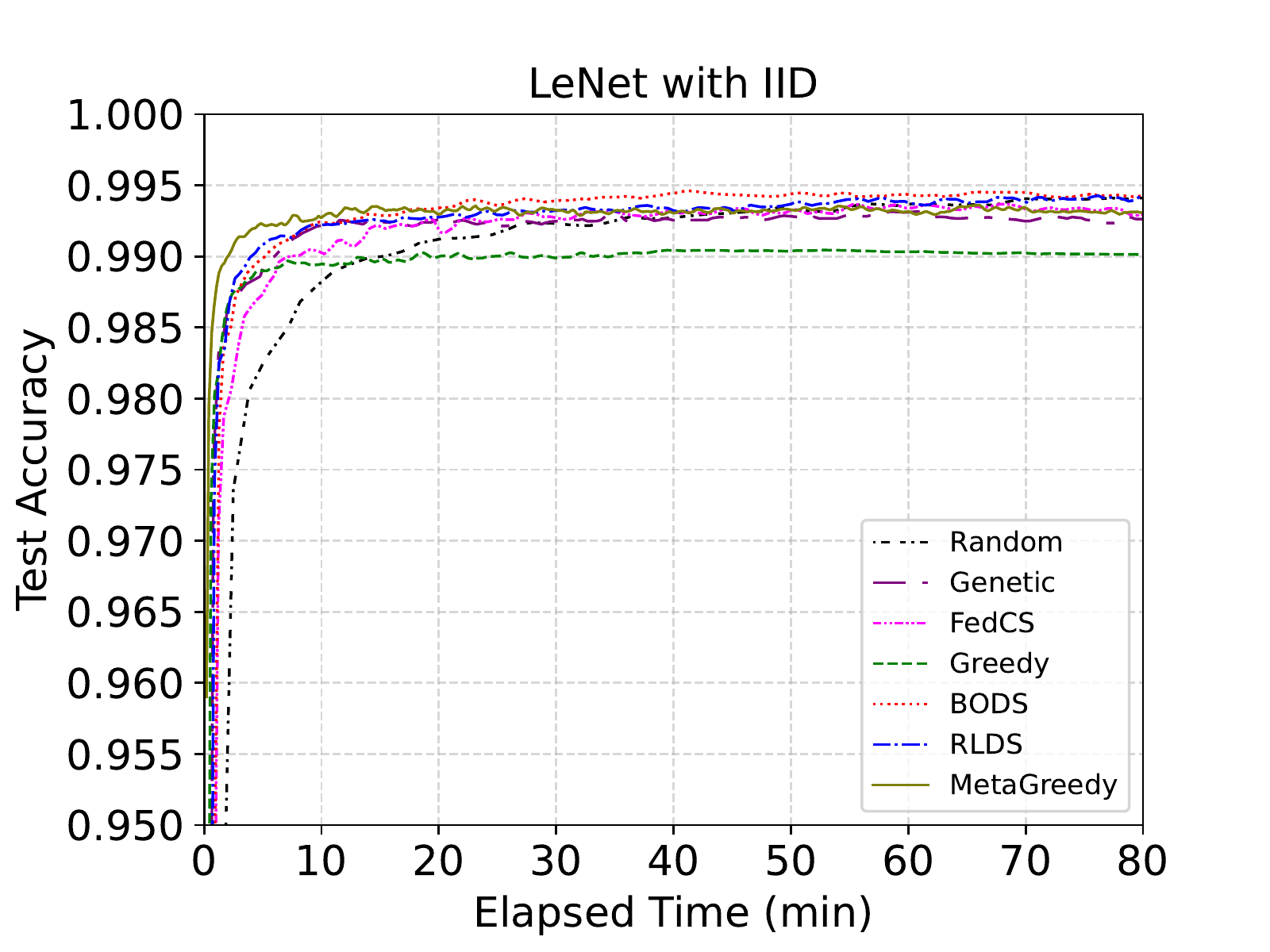}
\vspace{-4mm}
\caption{}
\label{figLeiid}
\end{subfigure}
\vspace{-2mm}
\caption{The convergence accuracy of different jobs in Group A changes over time with the IID distribution.}
\vspace{-5mm}
\label{fig:detail-on-IID-GroupA}
\end{figure*}

\begin{figure*}[htbp]
\centering
\begin{subfigure}{0.3\linewidth}
\includegraphics[width=\linewidth]{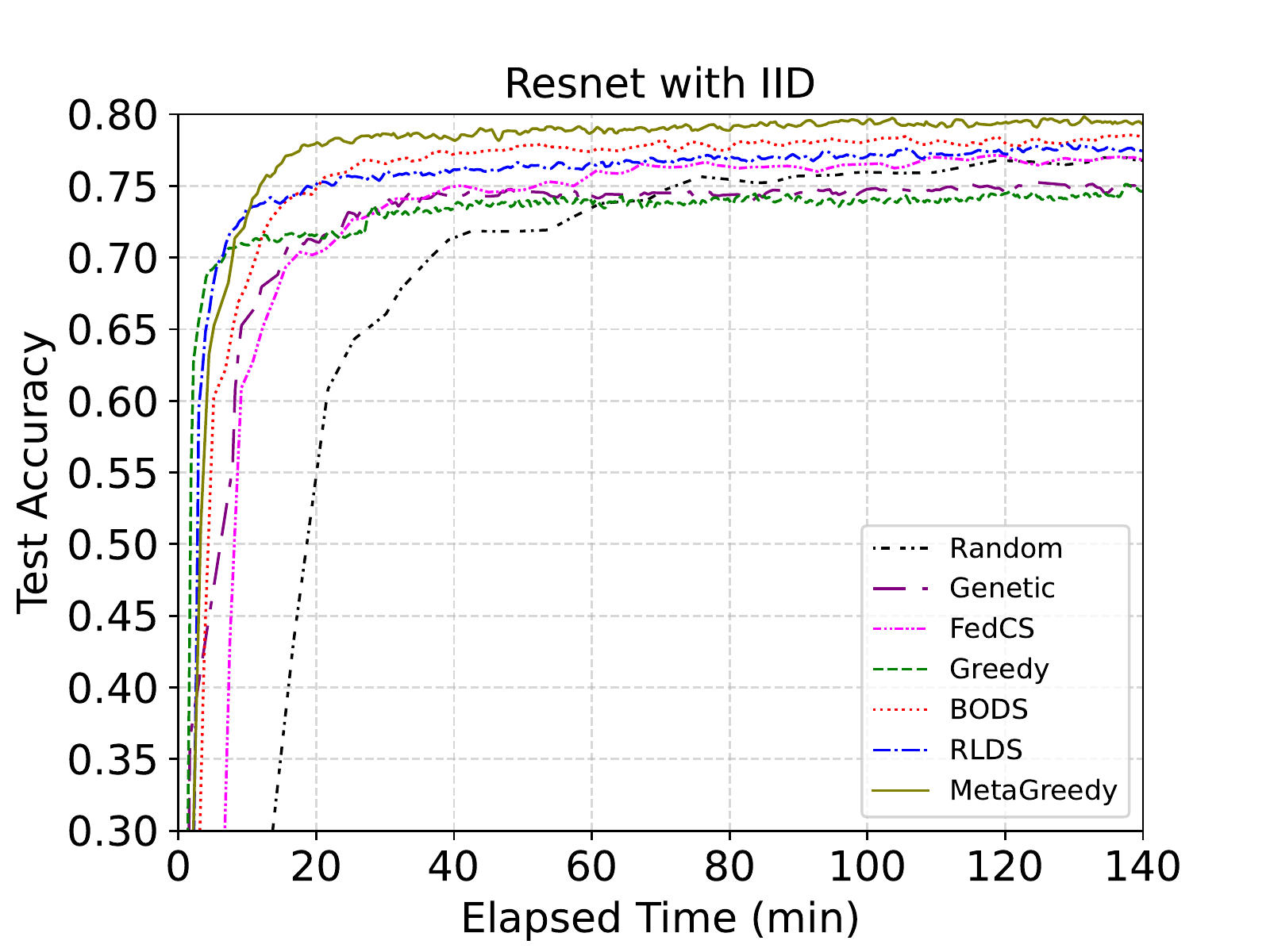}
\vspace{-4mm}
\caption{}
\label{figReiid}
\end{subfigure}
\begin{subfigure}{0.3\linewidth}
\includegraphics[width=\linewidth]{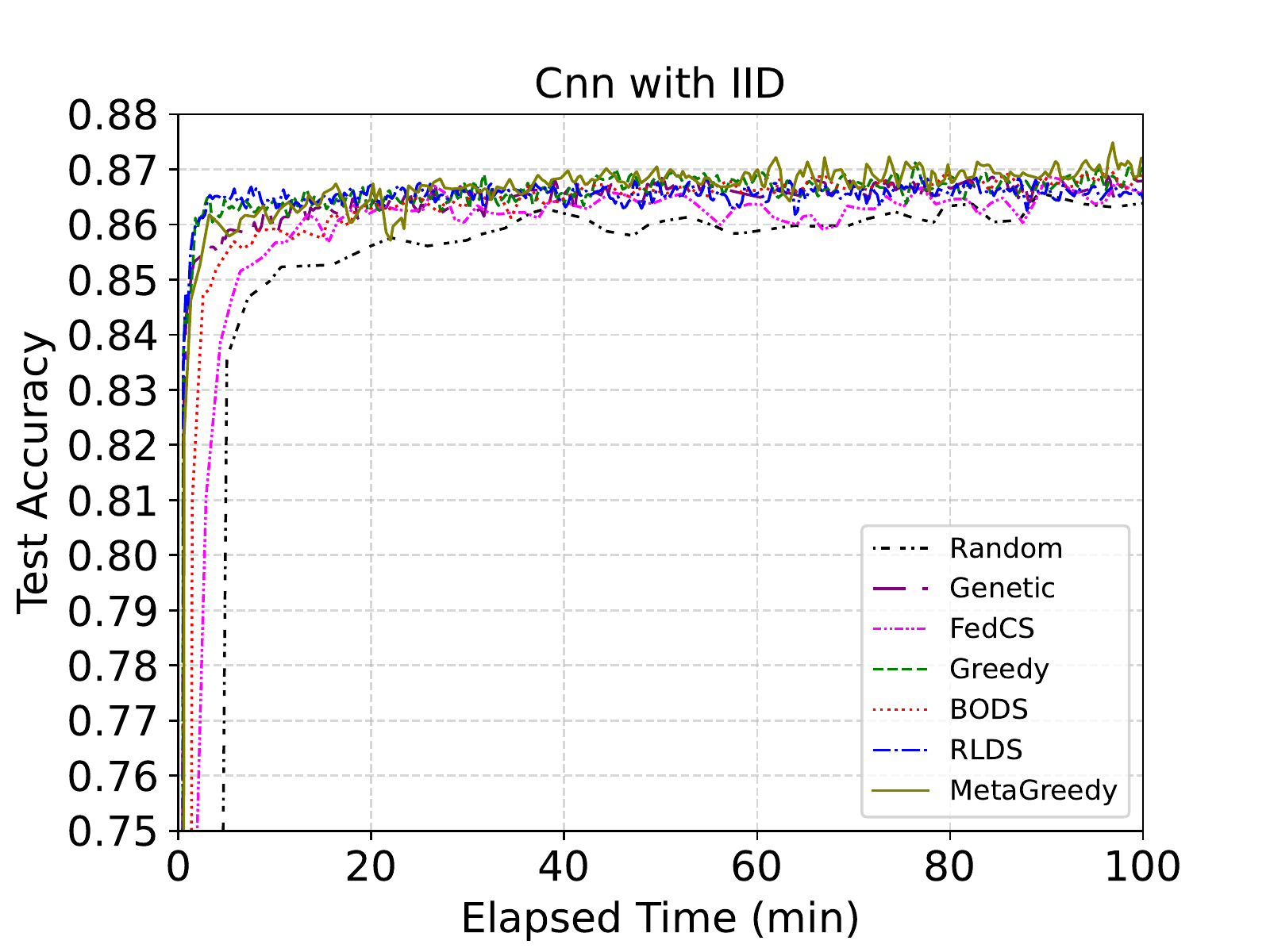}
\vspace{-4mm}
\caption{}
\label{figCniid}
\end{subfigure}
\begin{subfigure}{0.3\linewidth}
\includegraphics[width=\linewidth]{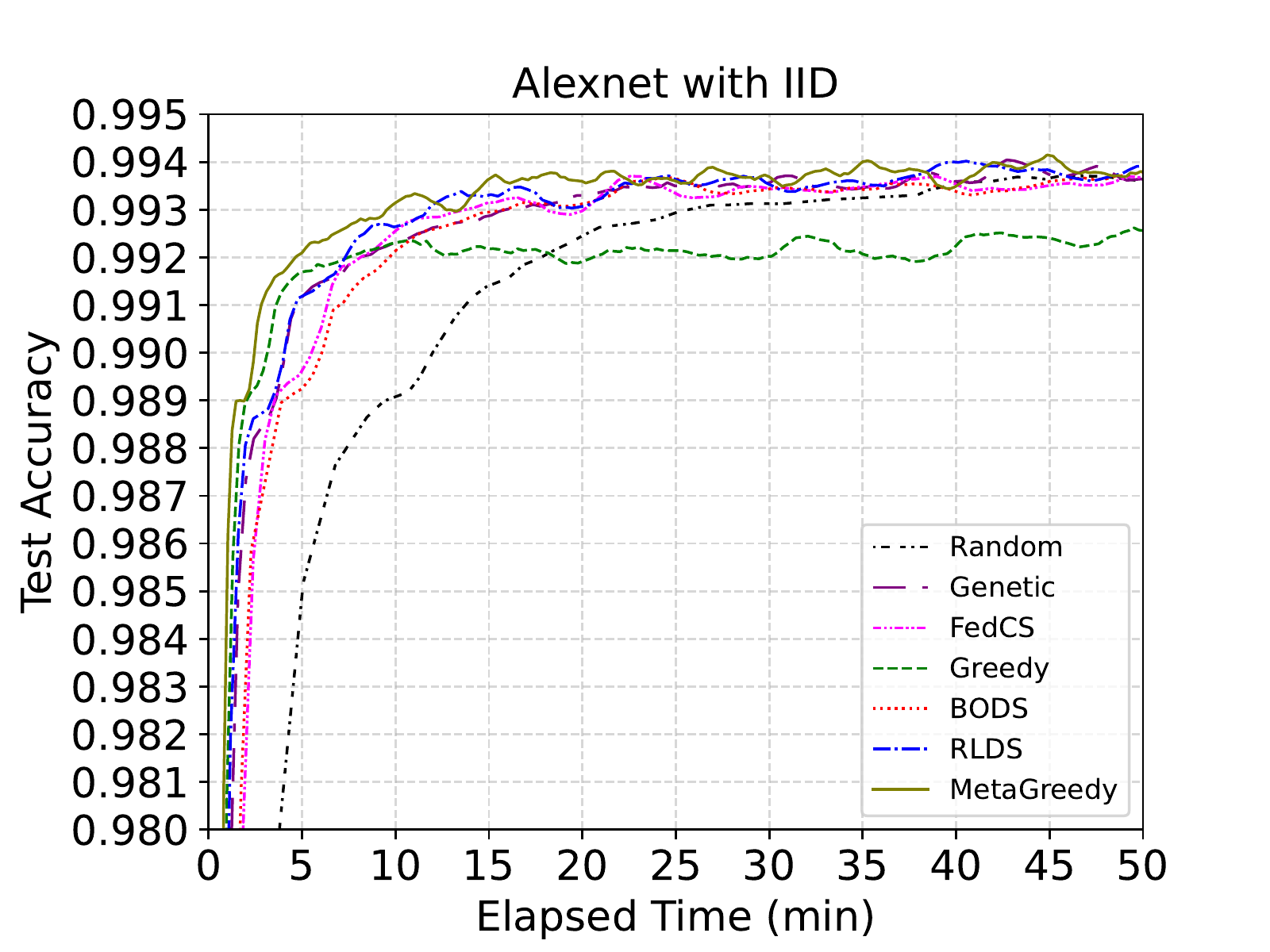}
\vspace{-4mm}
\caption{}
\label{figAliid}
\end{subfigure}
\vspace{-2mm}
\caption{The convergence accuracy of different jobs in Group B changes over time with the IID distribution.}
\vspace{-5mm}
\label{fig:detail-IID-GroupB}
\end{figure*}

\begin{figure*}[htbp]
\centering
\begin{subfigure}{0.3\linewidth}
\includegraphics[width=\linewidth]{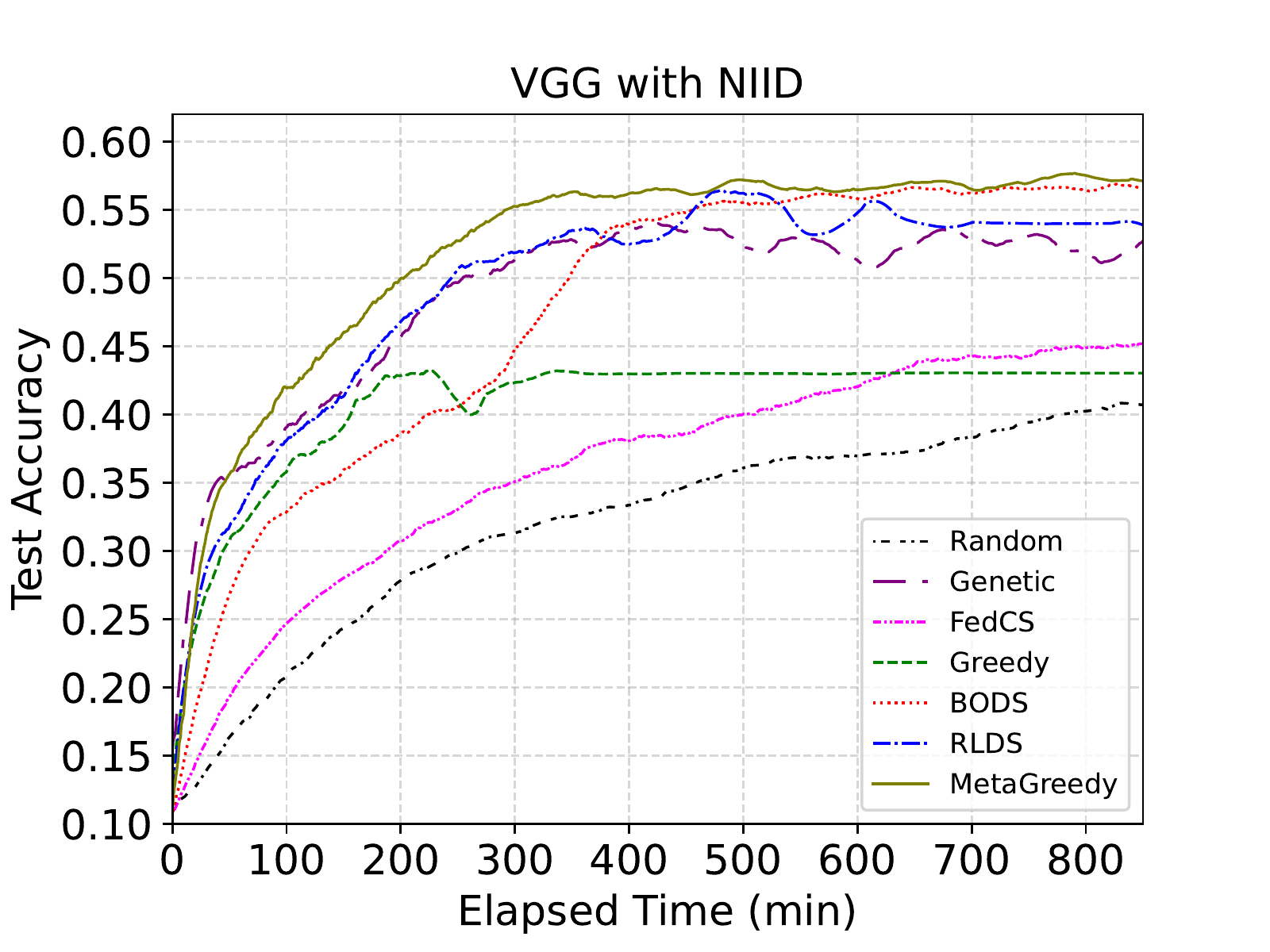}
\caption{}
\label{figVgNiid}
\end{subfigure}
\begin{subfigure}{0.3\linewidth}
\includegraphics[width=\linewidth]{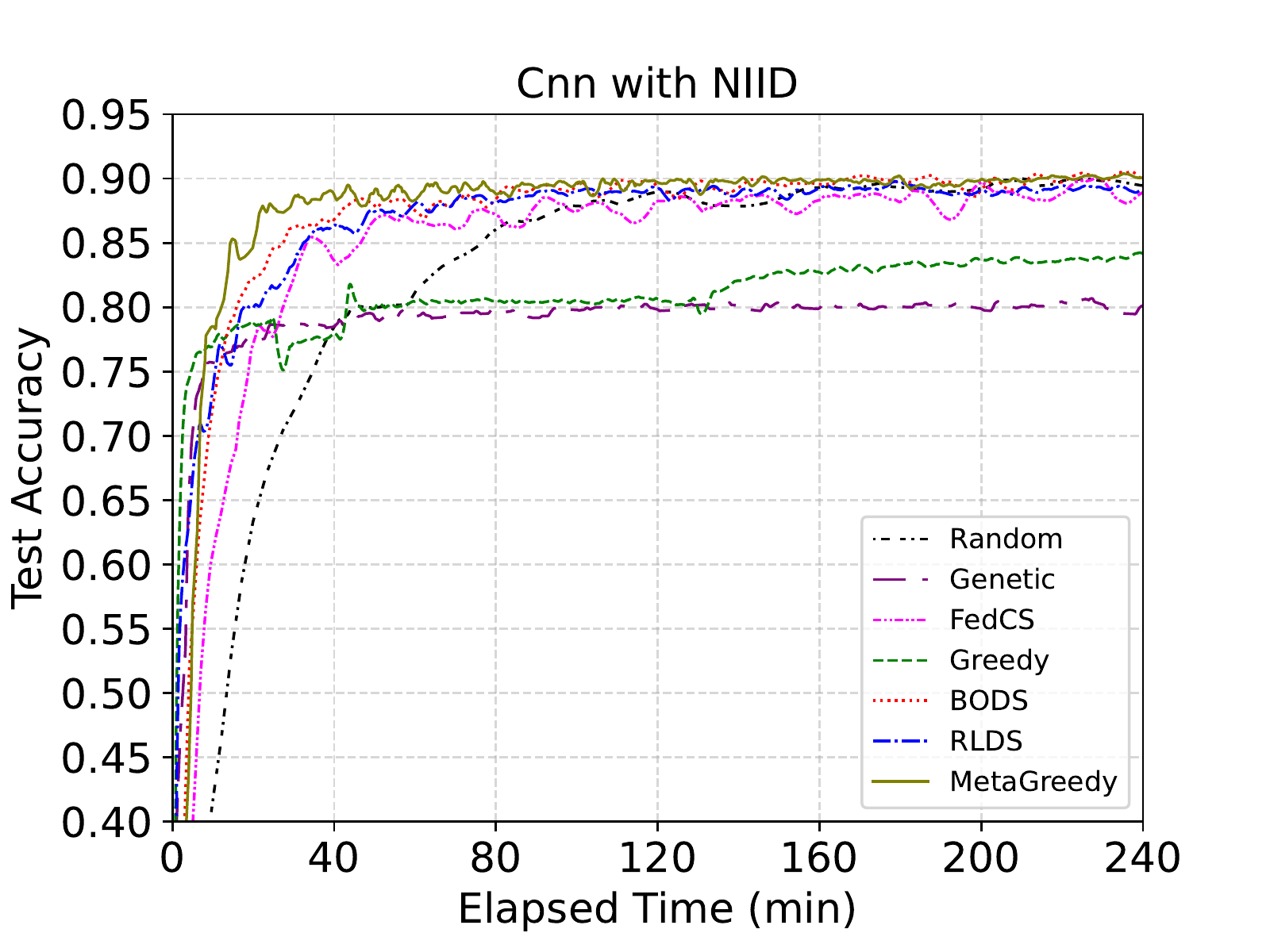}
\caption{}
\label{figCNiid}
\end{subfigure}
\begin{subfigure}{0.3\linewidth}
\includegraphics[width=\linewidth]{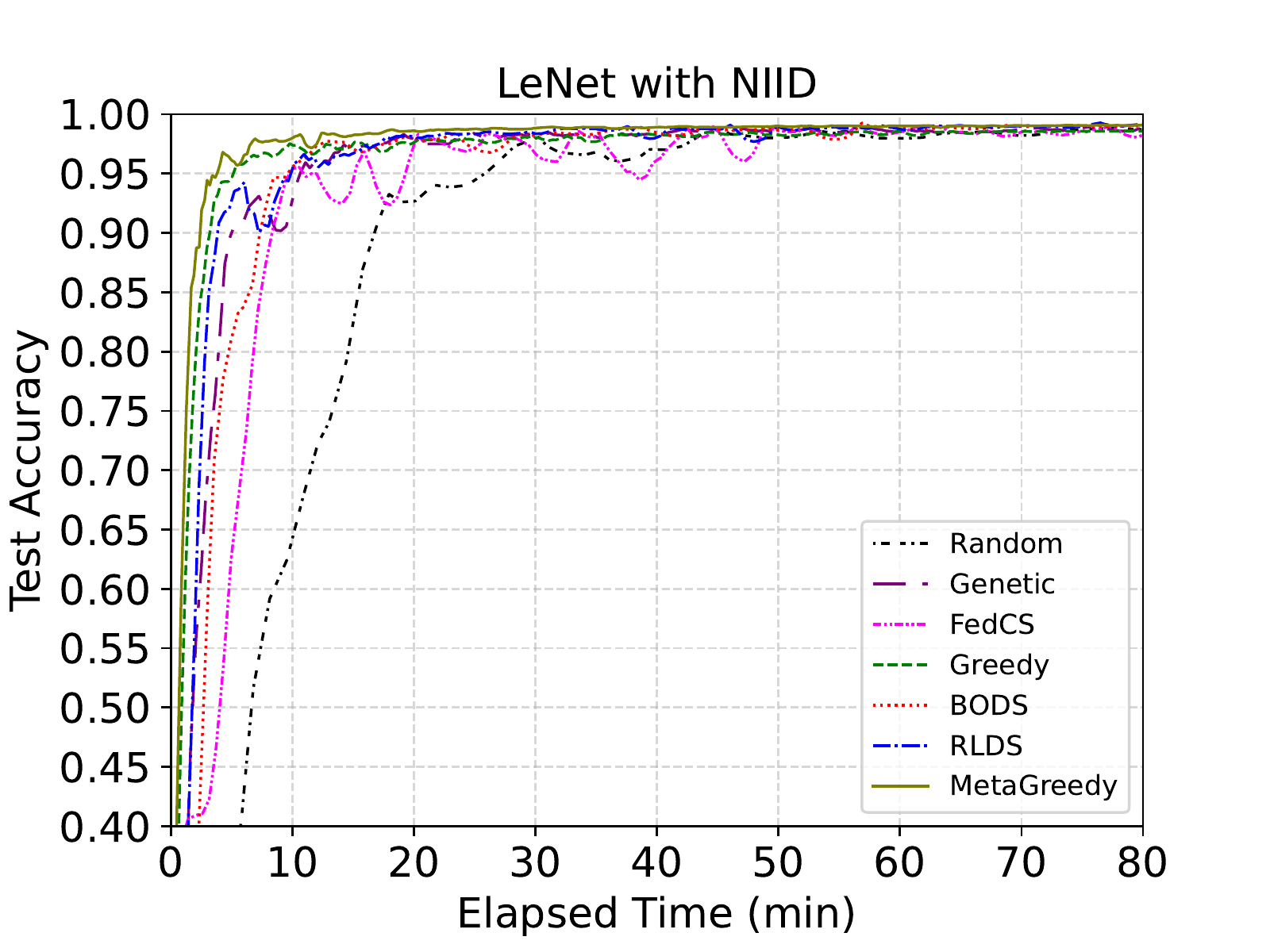}
\caption{}
\label{figLeNiid}
\end{subfigure}
\vspace{-3mm}
\caption{The convergence accuracy of different jobs in Group A changes over time with the non-IID distribution.}
\vspace{-4mm}
\label{fig:groupANonIID}
\end{figure*}

DNN comprises of a flatten layer and 40 hidden layers before an Alpha Dropout layer and a fully-connected layer. The hidden layers consist of 20 units of a fully-connected layer and a batch normalization layer.

For the non-IID setting of each dataset, the training set is classified by category, and the samples of each category are divided into 20 parts. Each device randomly selects two categories and then selects one part from each category to produce its local training set. For the IID setting, each device randomly samples a specified number of images from each training set. 
In addition, we use 12 Tesla V100 GPUs to simulate an FL environment composed of a parameter server and 100 devices. \liuR{We exploit the execution time, including the training time and communication time, measured in real edge devices as the parameters in Formula \ref{eq:distribution} to simulate the real execution.} We use Formula \ref{eq:distribution} to simulate the capabilities of devices in terms of training time with the uniform sampling strategy, while the accuracy is the results from the actual training processes. In the experimentation, we use corresponding target accuracy (for ease of comparison) in the place of target loss value.

\begin{figure*}[htbp]
\centering
\begin{subfigure}{0.3\linewidth}
\includegraphics[width=\linewidth]{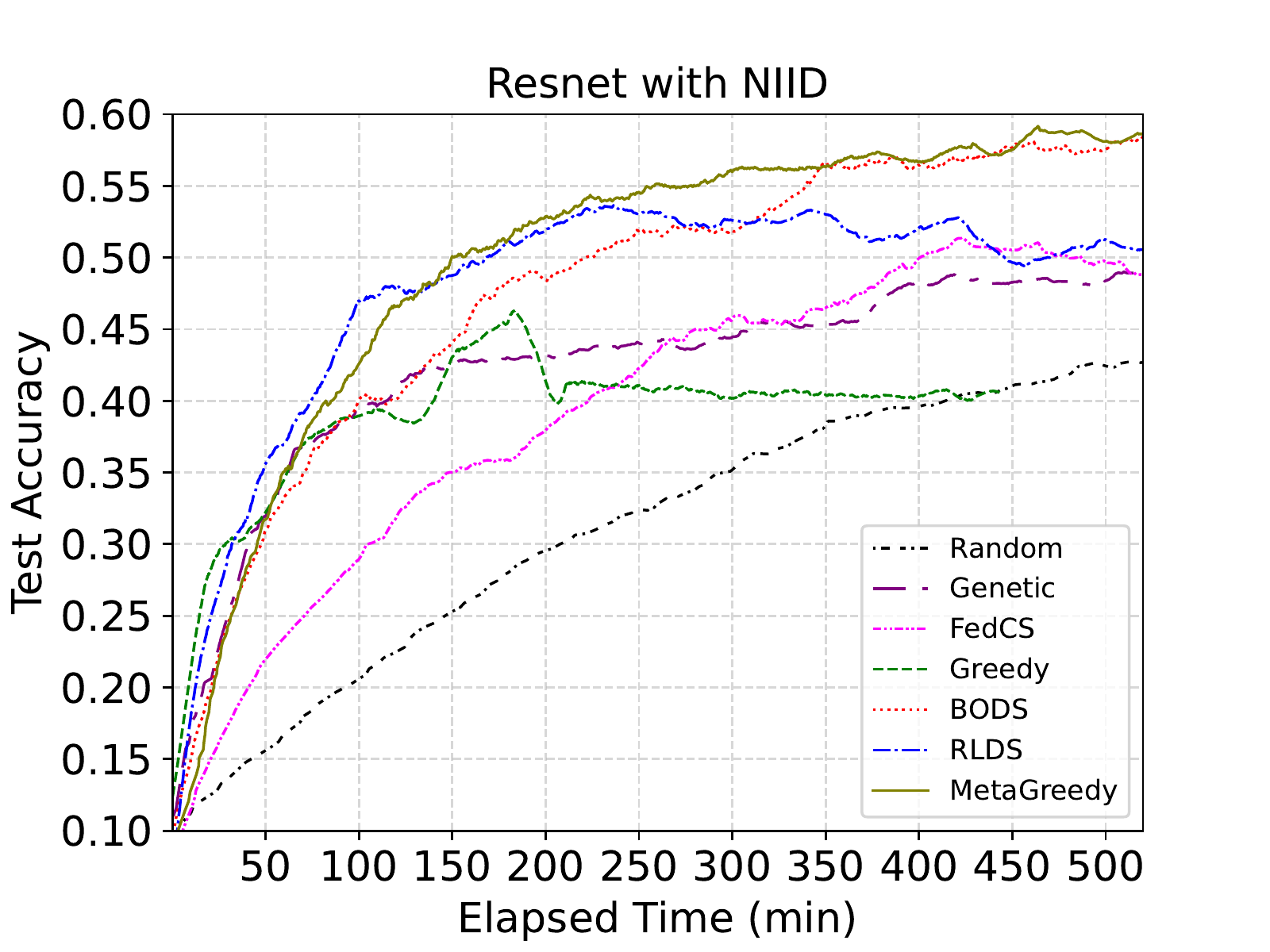}
\vspace{-4mm}
\caption{}
\label{figReNiid}
\end{subfigure}
\begin{subfigure}{0.3\linewidth}
\includegraphics[width=\linewidth]{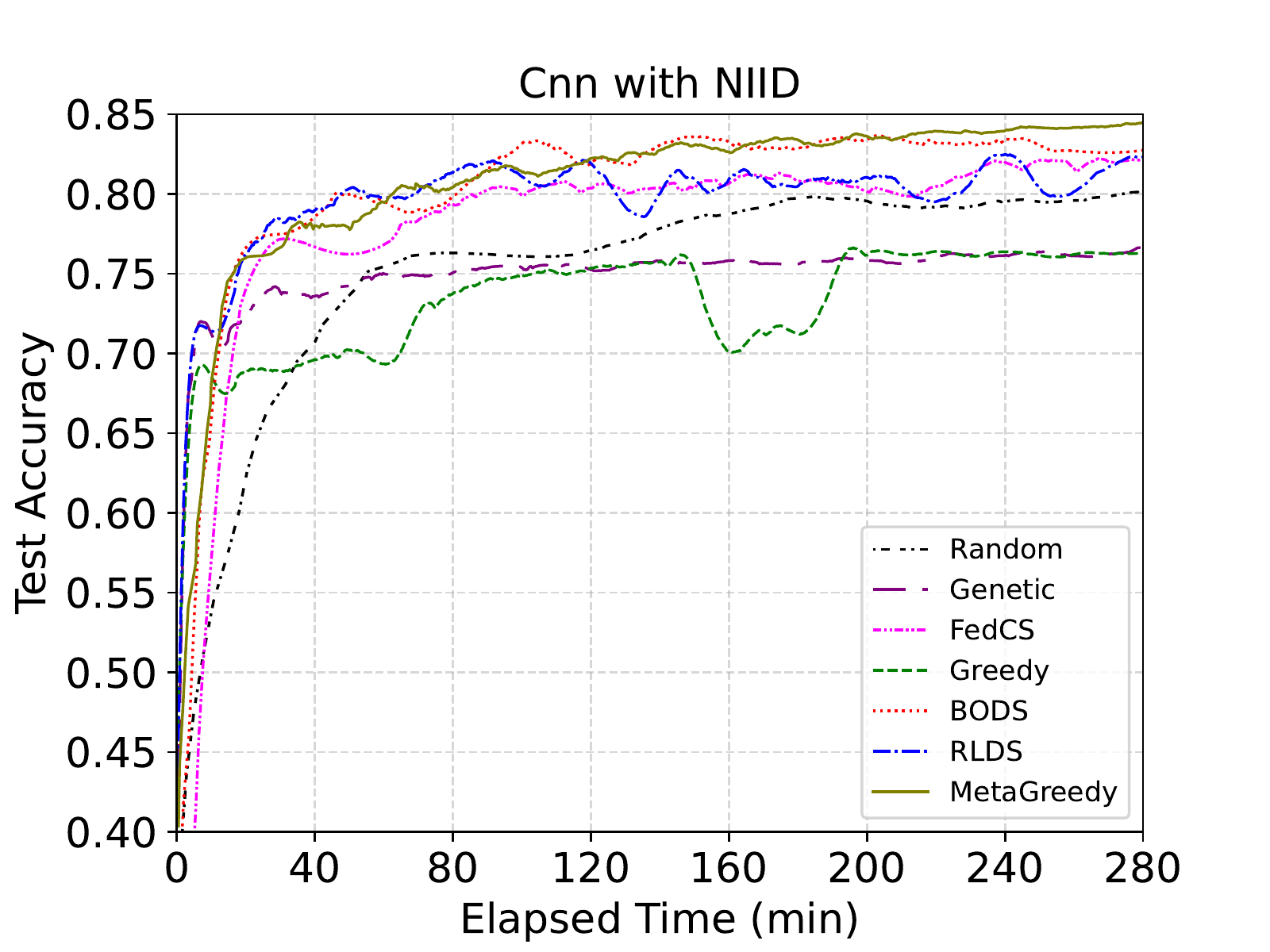}
\vspace{-4mm}
\caption{}
\label{figCnNiid}
\end{subfigure}
\begin{subfigure}{0.3\linewidth}
\includegraphics[width=\linewidth]{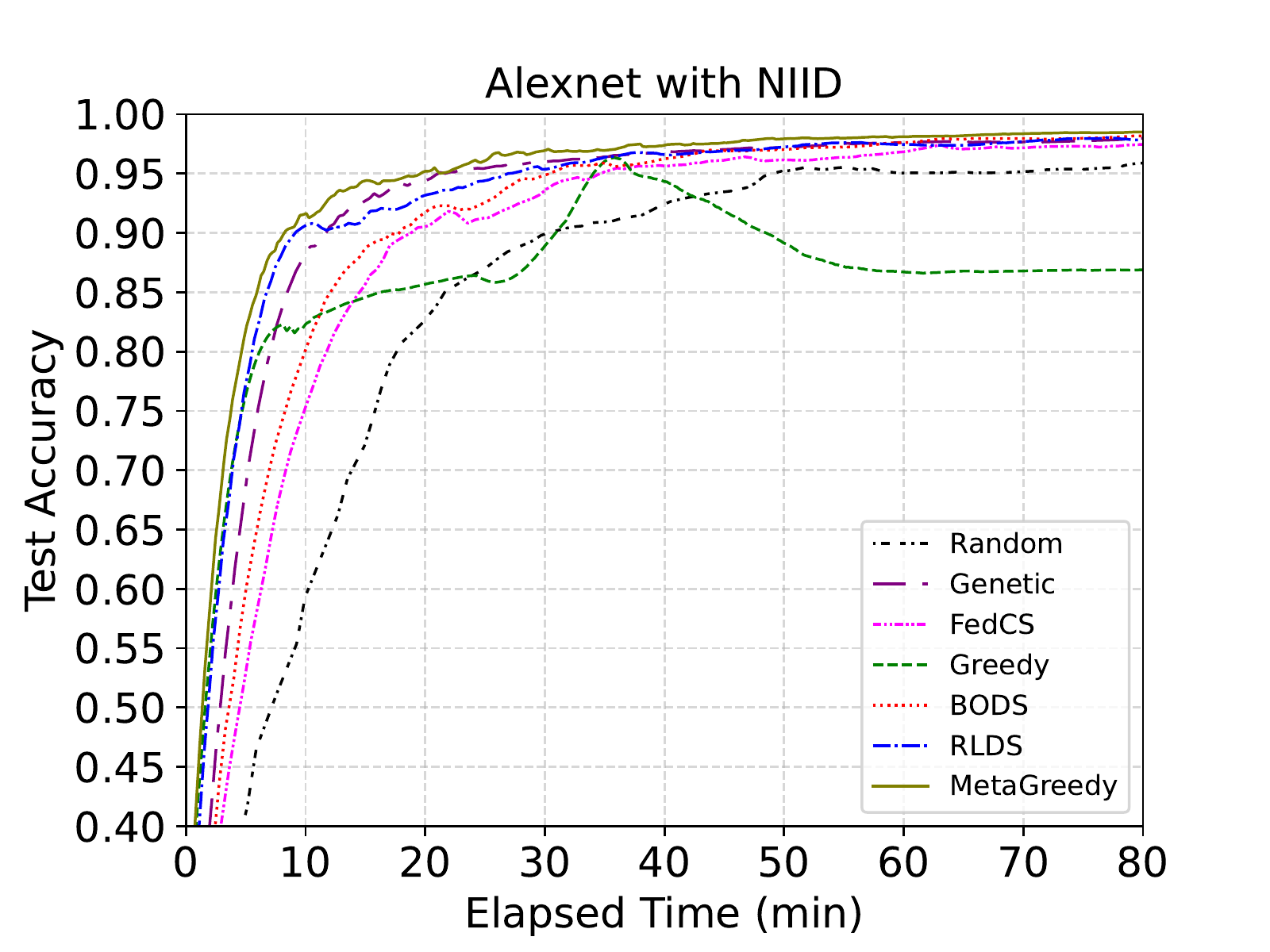}
\vspace{-4mm}
\caption{}
\label{figAlNiid}
\end{subfigure}
\vspace{-2mm}
\caption{The convergence accuracy of different jobs in Group B changes over time with the non-IID distribution.}
\label{fig:groupBNonIID}
\vspace{-5mm}
\end{figure*}

\begin{figure*}[htbp]
\centering
\begin{subfigure}{0.3\linewidth}
\includegraphics[width=\linewidth]{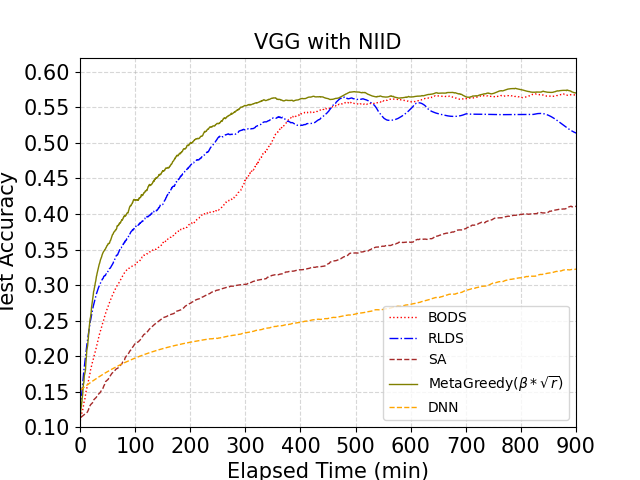}
\vspace{-4mm}
\caption{}
\label{figVgNiidAb-other}
\end{subfigure}
\begin{subfigure}{0.3\linewidth}
\includegraphics[width=\linewidth]{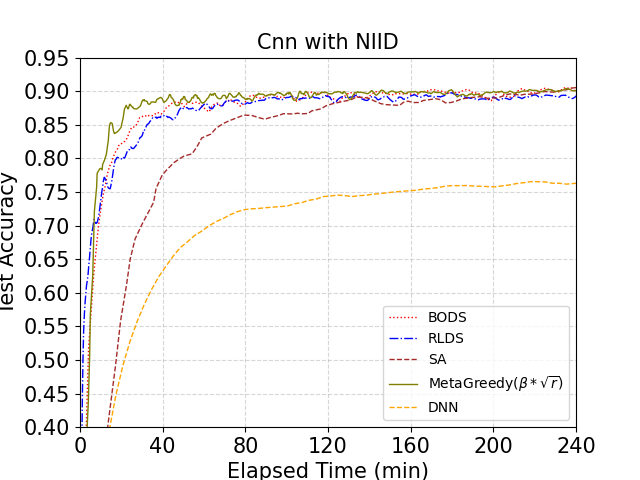}
\vspace{-4mm}
\caption{}
\label{figCnNiidAb-other}
\end{subfigure}
\begin{subfigure}{0.3\linewidth}
\includegraphics[width=\linewidth]{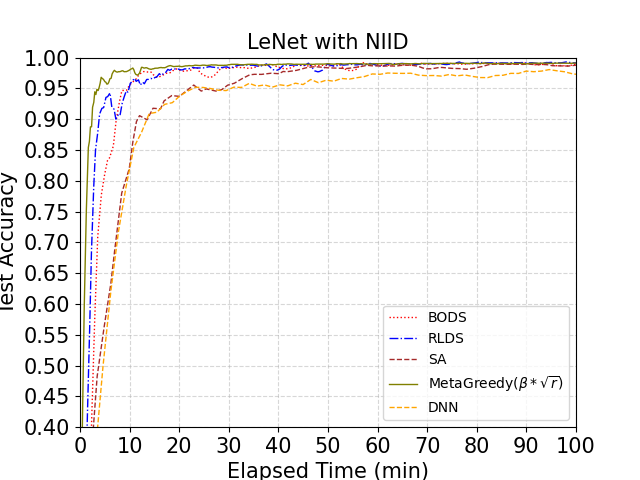}
\vspace{-4mm}
\caption{}
\label{figLeNiidAb-other}
\end{subfigure}
\begin{subfigure}{0.3\linewidth}
\includegraphics[width=\linewidth]{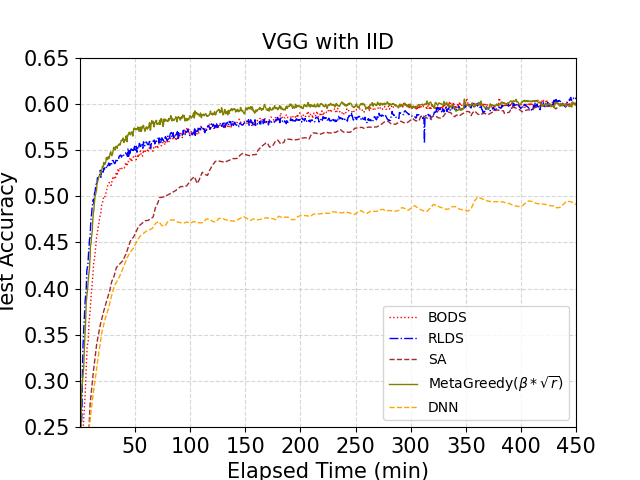}
\vspace{-4mm}
\caption{}
\label{figVgiidAb-other}
\end{subfigure}
\begin{subfigure}{0.3\linewidth}
\includegraphics[width=\linewidth]{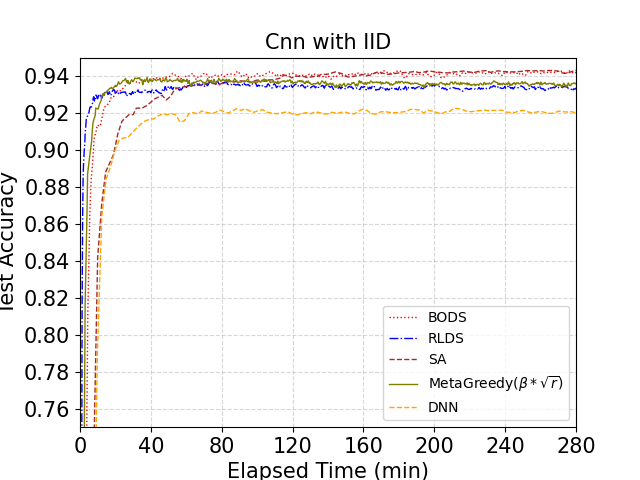}
\vspace{-4mm}
\caption{}
\label{figCniidAb-other}
\end{subfigure}
\begin{subfigure}{0.3\linewidth}
\includegraphics[width=\linewidth]{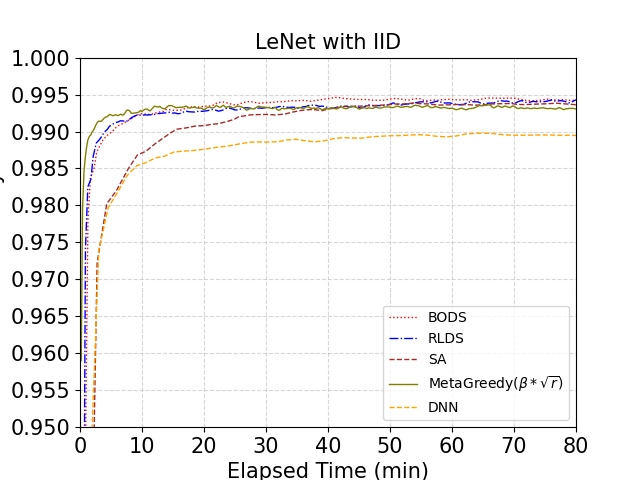}
\vspace{-4mm}
\caption{}
\label{figLeiidAb-other}
\end{subfigure}
\vspace{-2mm}
\caption{The convergence accuracy of different jobs changes over time compared with DNN and SA}
\vspace{-4mm}
\label{fig:DNN and SA}
\end{figure*}

\liu{\subsection{Experimental Results}

We first present the comparison between MJ-FL and Single-Job FL (SJ-FL).
Then, we compare the proposed methods with baselines, e.g., Random \cite{McMahan2017Communication-efficien}, FedCS \cite{Nishio2019Client}, Genetic \cite{barika2019scheduling}, Greedy \cite{shi2020device}, DNN and SA, in both IID and non-IID settings. 
Afterward, we present the ablation experiments to show the impact of execution time and the data fairness in the cost model and the influence of $\Omega(r)$. 

\subsubsection{Comparison with Single-Job FL}}

In order to demonstrate the effectiveness of our proposed framework, i.e., MJ-FL, over the SJ-FL, we execute each group of jobs sequentially with FedAvg, which is denoted the Random method when adapted to multi-job FL. As shown in Tables \ref{tab:GroupA}, \ref{tab:GroupB}, and \ref{tab:Signal}, \liu{MJ-FL outperforms SJ-FL (up to 1.68 times faster) with Random and the same accuracy. RLDS with MJ-FL outperforms Random with SJ-FL up to 15.38 times faster, and the advantage of BODS can be up to 8.83 times faster. Furthermore, Meta-Greedy can achieve much better performance, i.e., 19 times faster. }

\subsubsection{Comparison within MJ-FL}

In this section, we present the experimental results with the IID setup and the non-IID setup.

{\bf{Evaluation with the IID setting:}} \liu{As shown in Figures \ref{fig:detail-on-IID-GroupA} and \ref{fig:detail-IID-GroupB}, the convergence speed of our proposed methods (i.e., BODS, RLDS and Meta-Greedy) is significantly faster than other methods. In addition, the convergence speed of RLDS has significant advantages in terms of both complex and simple jobs compared to BODS, while Meta-Greedy outperforms all others in terms of convergence speed. Besides, Meta-Greedy can achieve higher accuracy within less time compared to other methods for both complex jobs (ResNet in Figure \ref{fig:detail-IID-GroupB}(a)) and simple jobs (AlexNet in Figure \ref{fig:detail-IID-GroupB}(c)). Tables \ref{tab:GroupA} and \ref{tab:GroupB} reveal that IID data correspond to high convergence accuracy, and for VGG in IID setups, the final accuracy of our proposed methods, i.e., BODS (up to 15.52\% compared to Greedy), RLDS (up to 15.90\% compared to Greedy), and Meta-Greedy (up to 15.33\% compared to Greedy), significantly surpasses that of the other methods. In addition, the training time for a target accuracy of our proposed methods is significantly shorter than baseline methods, including the time for an individual job, i.e., the training time of each job (up to 8.19 times faster for BODS, 12.6 times faster for RLDS, and 12.73 times faster for Meta-Greedy), and the time for the whole training process, i.e., the total time calculated according to Formula \ref{eq:problem}, (up to 4.04 times for BODS, 5.81 times for RLDS and 8.16 times for Meta-Greedy). Slightly different from Non-IID setups results, RLDS performs better than BODS for both complex and simple jobs in terms of the convergence speed, while Meta-Greedy still converges the fastest to the target accuracy. }

\begin{figure*}[htbp]
\centering
\begin{subfigure}{0.3\linewidth}
\includegraphics[width=\linewidth]{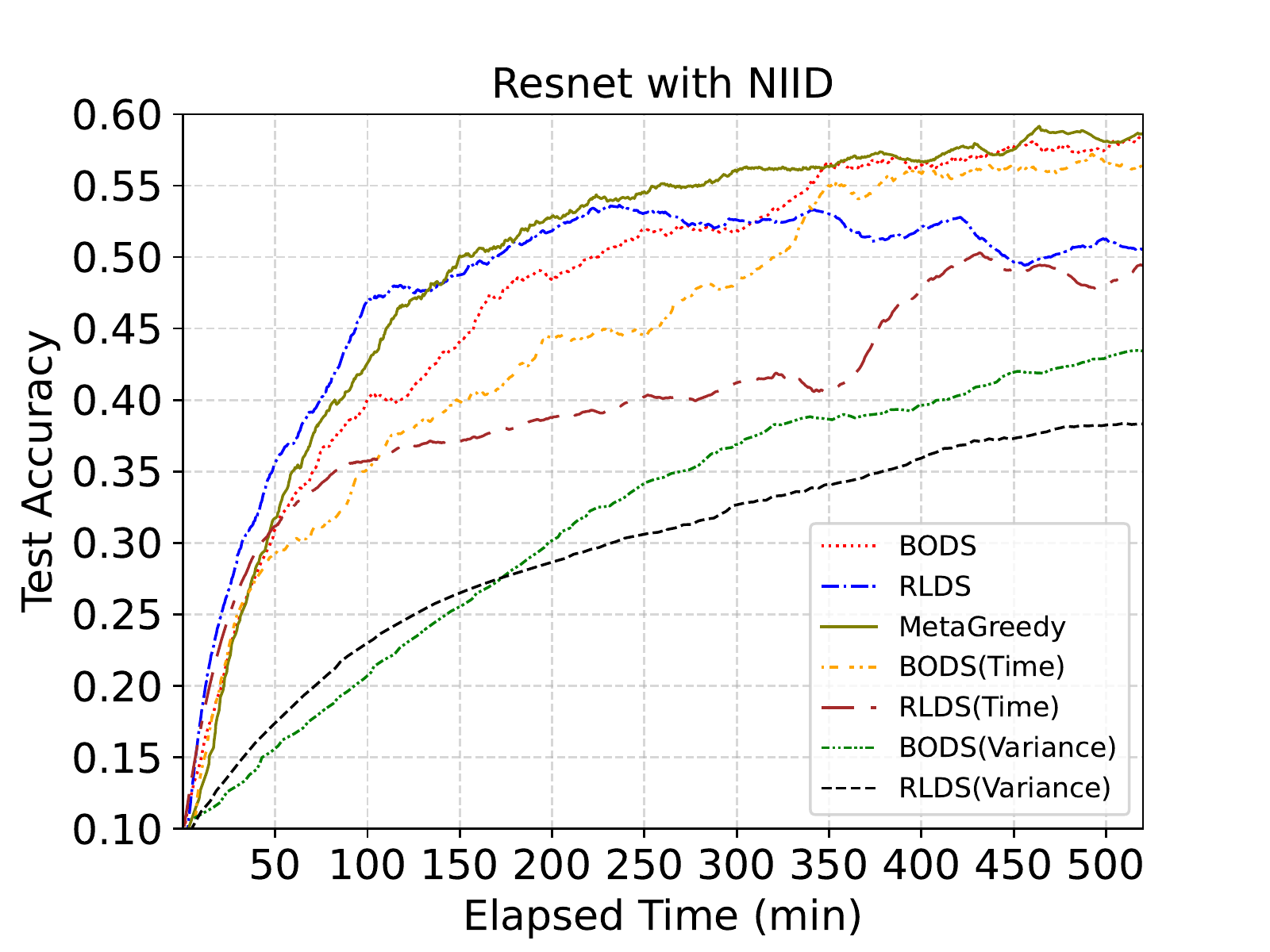}
\vspace{-4mm}
\caption{}
\label{figReNiidAb}
\end{subfigure}
\begin{subfigure}{0.3\linewidth}
\includegraphics[width=\linewidth]{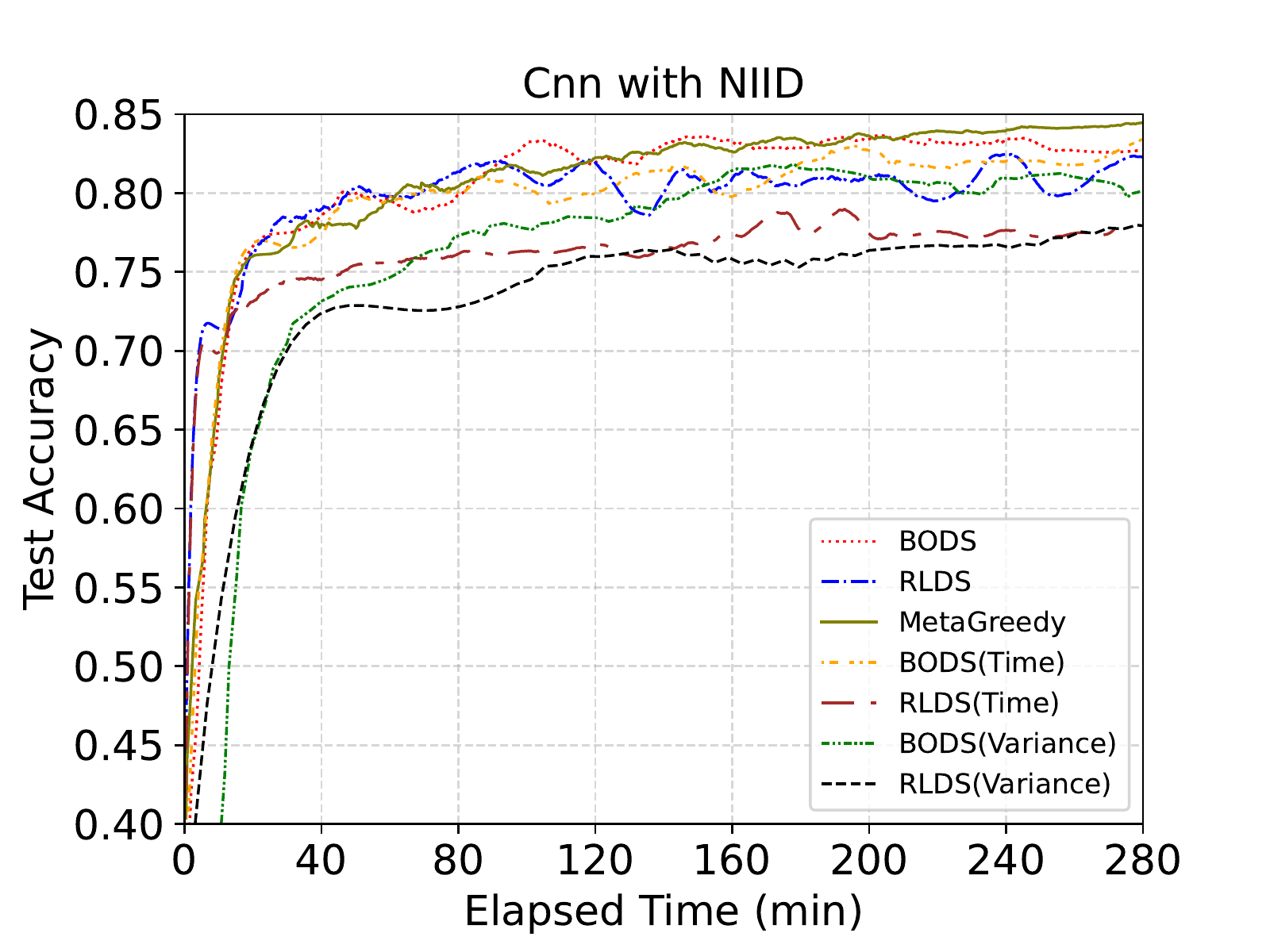}
\vspace{-4mm}
\caption{}
\label{figCBNiidAb}
\end{subfigure}
\begin{subfigure}{0.3\linewidth}
\includegraphics[width=\linewidth]{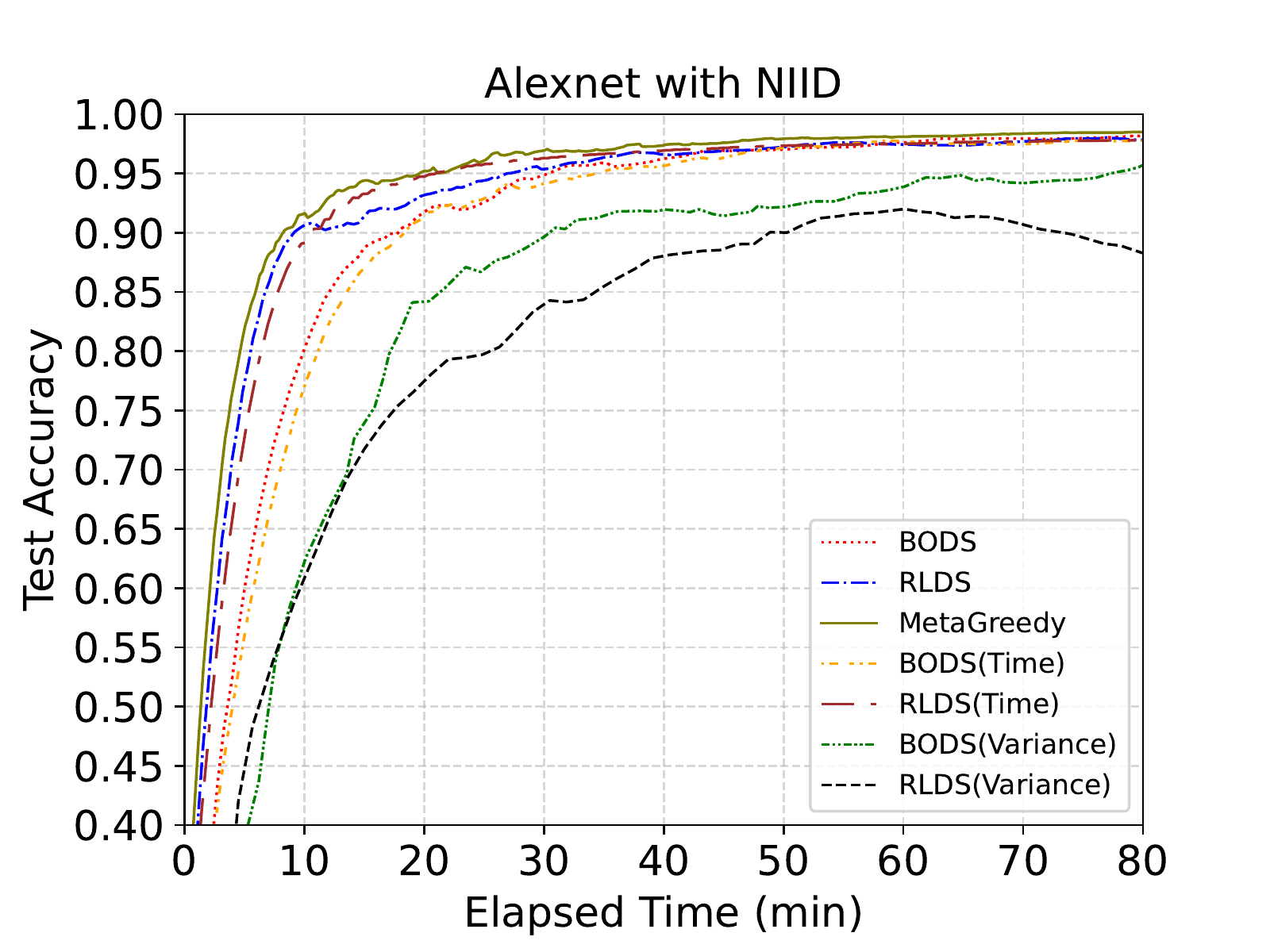}
\vspace{-4mm}
\caption{}
\label{figAlNiidAb}
\end{subfigure}
\begin{subfigure}{0.3\linewidth}
\includegraphics[width=\linewidth]{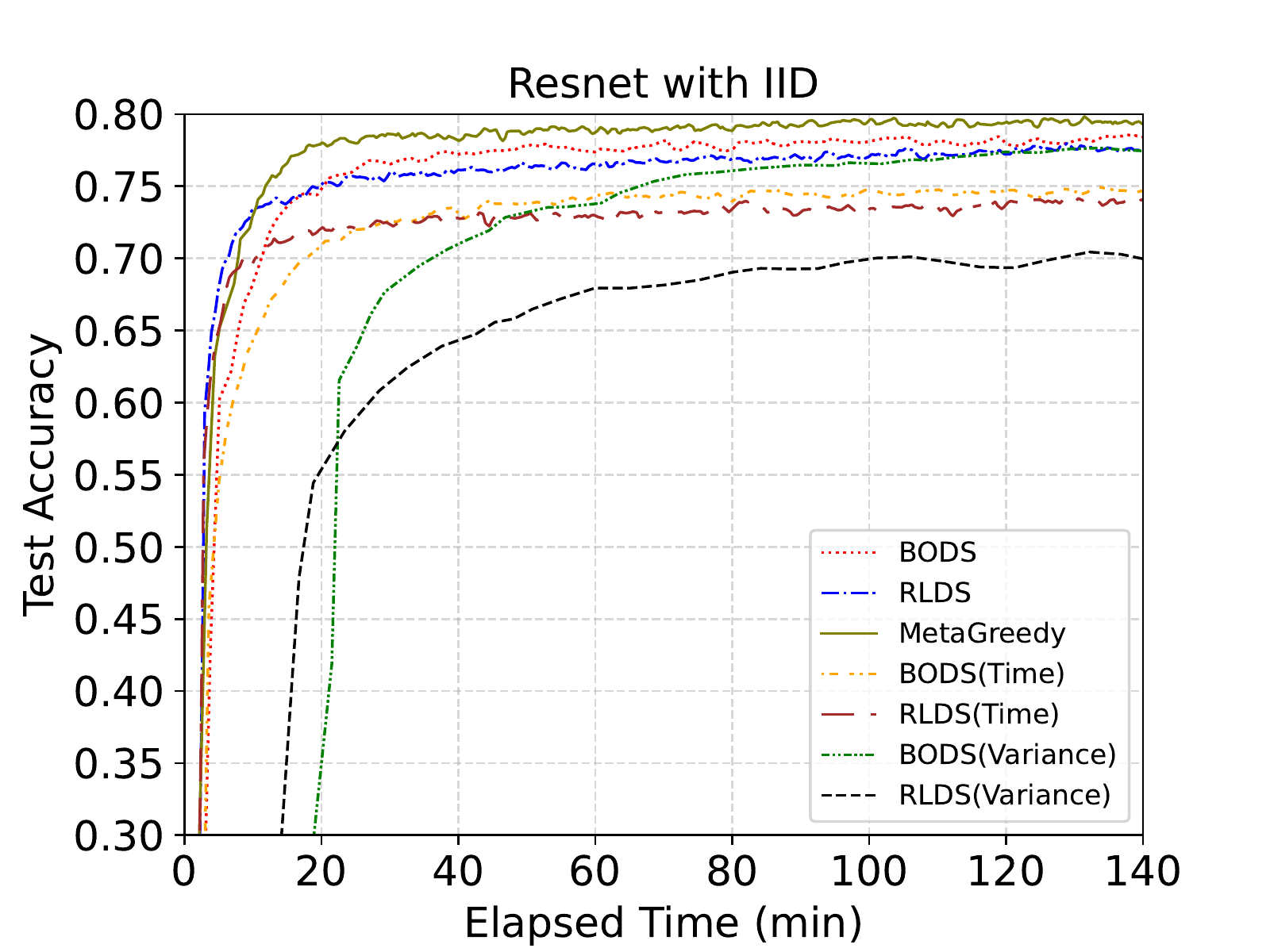}
\vspace{-4mm}
\caption{}
\label{figReiidAb}
\end{subfigure}
\begin{subfigure}{0.3\linewidth}
\includegraphics[width=\linewidth]{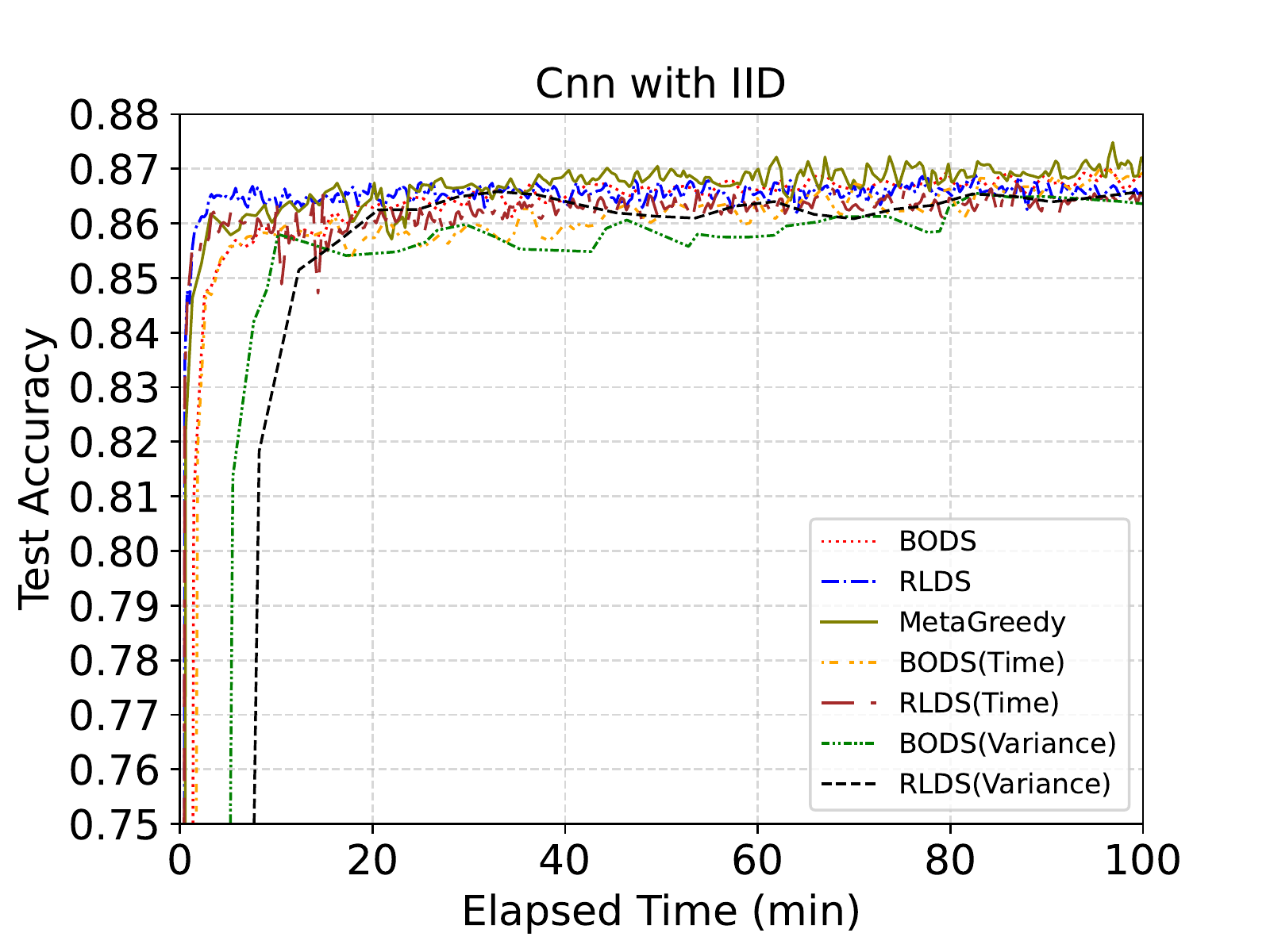}
\vspace{-4mm}
\caption{}
\label{figCBiidAb}
\end{subfigure}
\begin{subfigure}{0.3\linewidth}
\includegraphics[width=\linewidth]{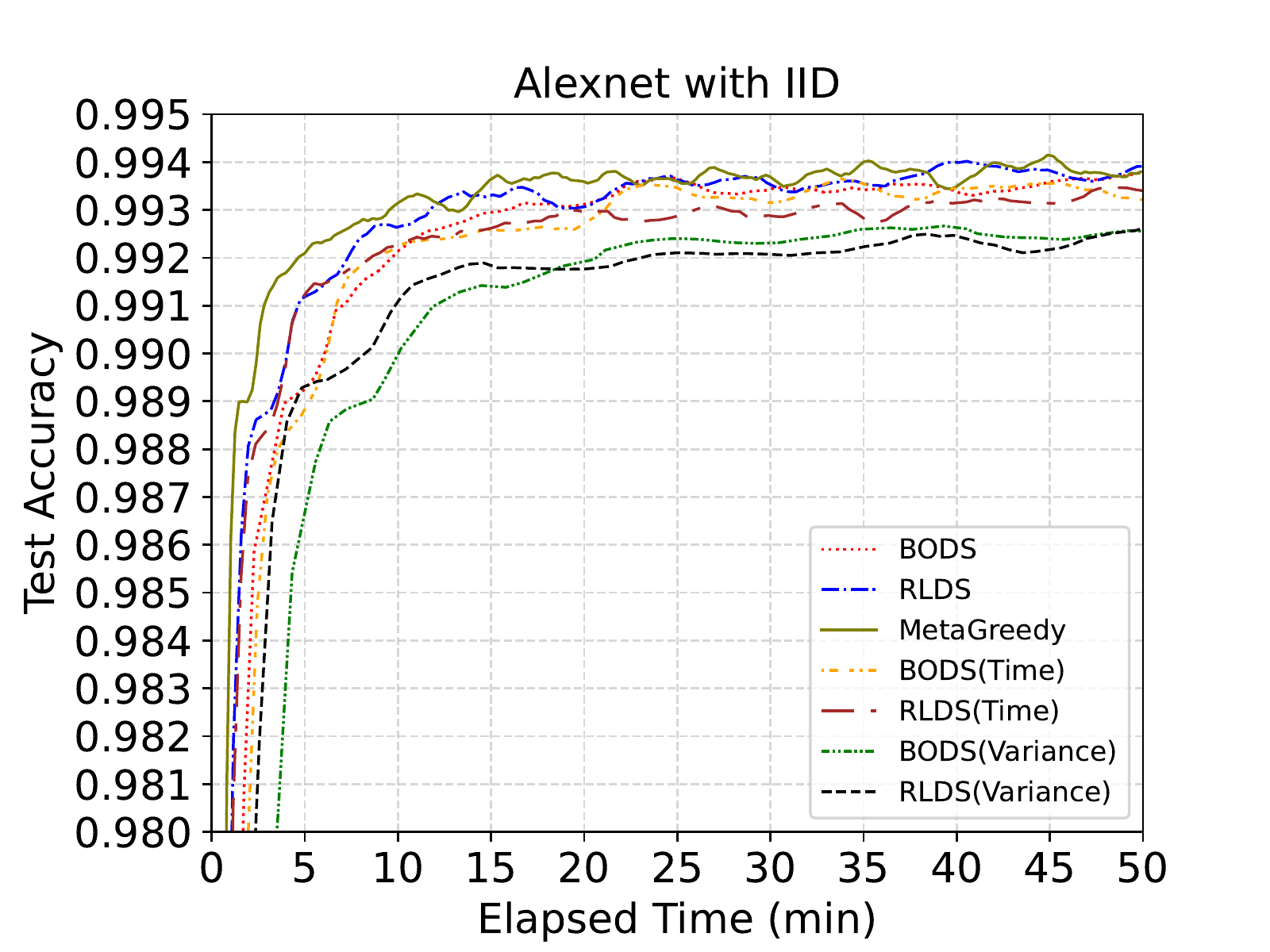}
\vspace{-4mm}
\caption{}
\label{figAliidAb}
\end{subfigure}
\vspace{-2mm}
\caption{The convergence accuracy of different jobs in Group B changes over time with the ablation setting, i.e., ``time'' represents the cost model with execution time and ``variance'' represents the cost model with data fairness.}
\label{fig:groupBAb}
\vspace{-4mm}
\end{figure*}

{\bf{Evaluation with the non-IID setting:}} When the decentralized data is of non-IID, the data fairness defined in Formula \ref{eq:fairness} has a significant impact on the accuracy. As shown in Figures \ref{fig:groupANonIID} and \ref{fig:groupBNonIID}, the convergence speed of our proposed methods, i.e., RLDS, BODS and Meta-Greedy, is significantly faster than other methods. RLDS shows a significant advantage for complex jobs (VGG in Figure \ref{figVgNiid}), while BODS shows advantage for relatively simple jobs in Groups A and B (please see details in Figures \ref{fig:groupANonIID} and \ref{fig:groupBNonIID}). \liuR{This is reasonable as we have much more parameters to adjust for complex jobs with RLDS, e.g., the learning rate, the $\epsilon$ decay rate, the structure of the LSTM etc, which can fit into a complex jobs. However, it may be complicated to fine-tune the parameters for a simple job with RLDS. In contrary, the execution of simple jobs can be directly well addressed by the Bayesian optimization.} \liu{ Meta-Greedy can lead to a good performance for both simple and complex jobs. As shown in Tables \ref{tab:GroupA} and \ref{tab:GroupB}, the final accuracy of RLDS, BODS and Meta-Greedy significantly outperforms other methods (up to 44.6\% for BODS, 39.4\% for RLDS and 46.4\% for Meta-Greedy), as well. Given a target accuracy, our proposed methods can achieve the accuracy within a shorter time, compared with baseline methods, in terms of the time for a single job, i.e., the training time of each job (up to 5.04 times shorter for BODS, 5.11 times shorter for RLDS and 7.53 times shorter for Meta-Greedy), and the time for the whole training process, i.e., the total time calculated based on Formula \ref{eq:problem} (up to 4.15 times for BODS, 4.67 times for RLDS and 7.16 times for Meta-Greedy), for Groups A and B. In addition, among our proposed methods, Meta-Greedy always has the shortest time to achieve the target accuracy. 
We have similar observations with IID, while the advantage of Meta-Greedy is much more significant (up to 19.0 times shorter in terms of the time for a single job) than that of non-IID as shown in Tables \ref{tab:GroupA} and \ref{tab:GroupB}. Besides, we can find that our proposed method (i.e., Meta-Greedy) leads to a better performance in the case of non-IID data due to dynamically enhancing the impact of data fairness to decrease the imbalance of devices.}

{\bf{Comparison in other cases:}} As shown in Figures \ref{fig:detail-IID-GroupB} and \ref{fig:groupBNonIID}, the convergence speed corresponding to RLDS, BODS and Meta-Greedy is much higher than baseline methods with different target accuracy. \liu{The advantage of RLDS and BODS is up to 8.77 times faster for Target 1 (0.845), 13.96 times faster for Target 2 (0.856), and 27.04 times faster for Target 3 (0.865), compared with the baselines. In addition, Meta-Greedy outperforms the other six methods (four baselines and two scheduling methods, i.e., RLDS and BODS) up to 4.42 times faster for Target 1 (0.845), 6.65 times faster for Target 2 (0.856), and 5.61 times faster for Target 3 (0.865). Furthermore,} According to Figure \ref{fig:DNN and SA}, SA \cite{van1987simulated} corresponds to much worse performance (up to 91.4\% slower and 3.5\% lower accuracy) compare with our methods. We carry out experiments to compare out methods with DNNs \cite{zang2019hybrid}, the performance of which is significantly worse (up to 90.5\% slower and 26.3\% lower accuracy) than our methods. In addition, we test other combinations of the two costs, which correspond to worse performance (up to 37.1\% slower and 3.5\% lower accuracy for the sum of squared costs, and 64.4\% slower and 3.3\% lower accuracy for multiplication) compared to the linear one (Formula \ref{eq:totalCost}). 

\begin{figure*}[htbp]
\centering
\begin{subfigure}{0.3\linewidth}
\includegraphics[width=\linewidth]{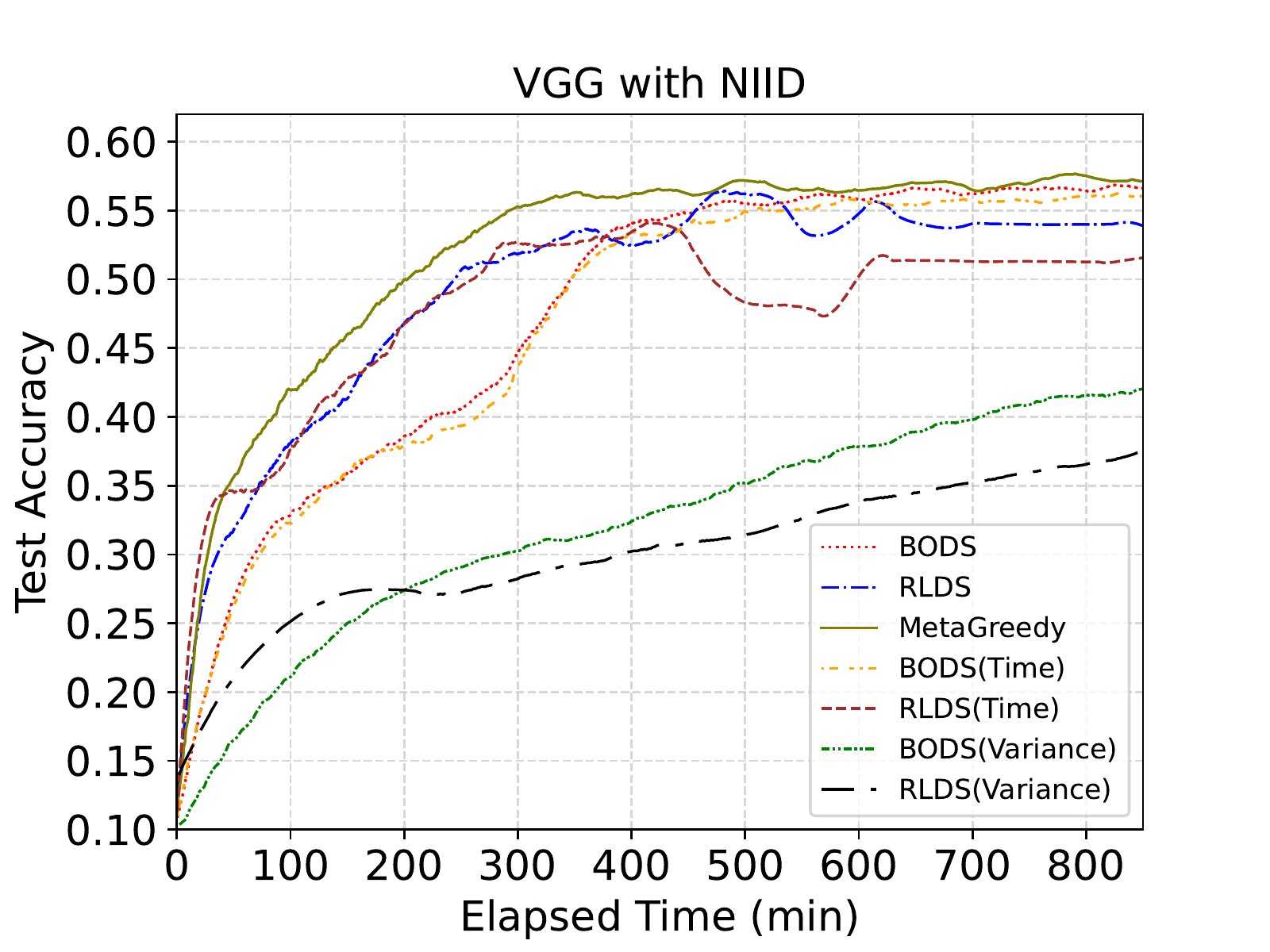}
\vspace{-4mm}
\caption{}
\label{figVgNiidAb}
\end{subfigure}
\begin{subfigure}{0.3\linewidth}
\includegraphics[width=\linewidth]{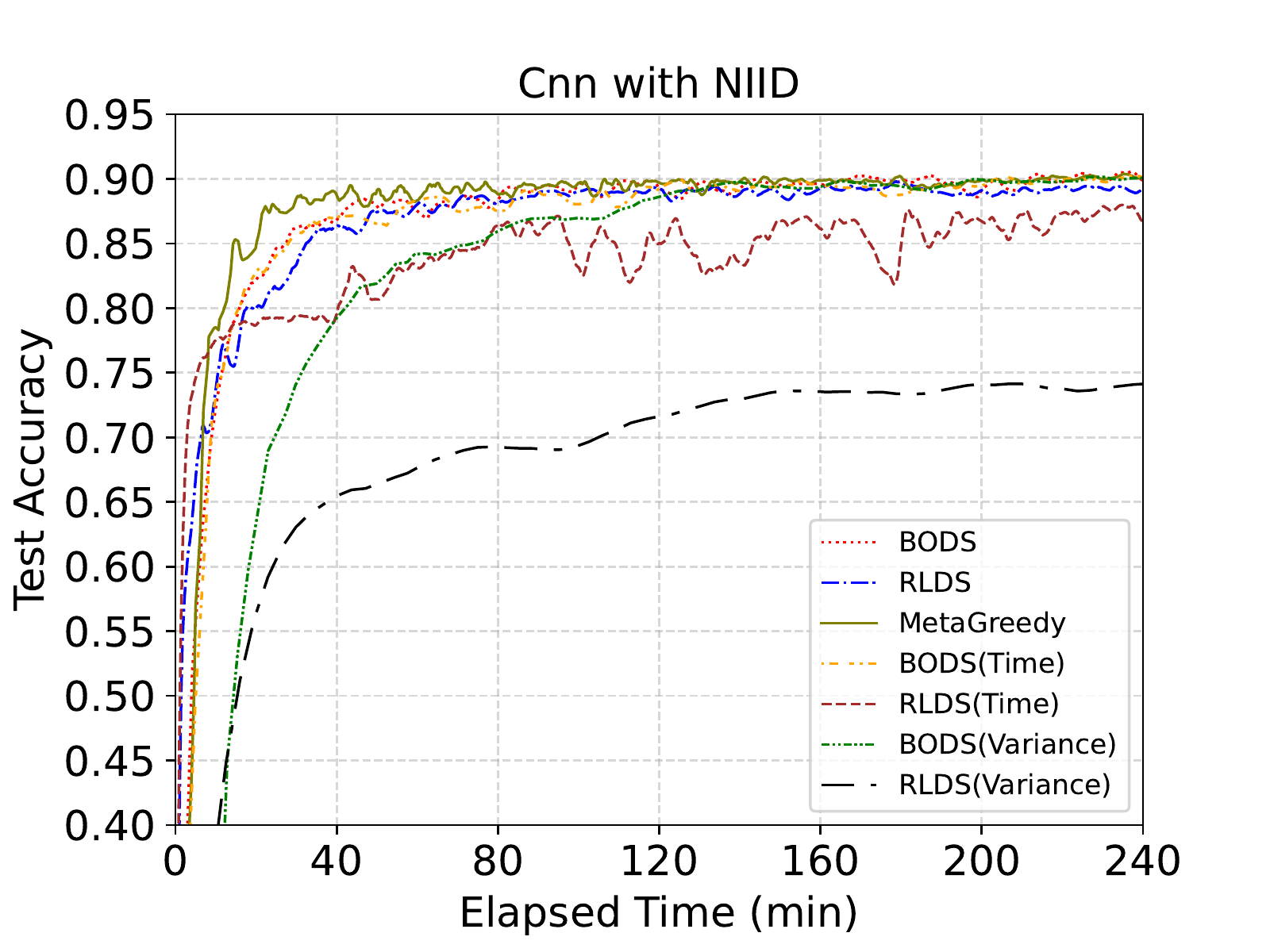}
\vspace{-4mm}
\caption{}
\label{figCnNiidAb}
\end{subfigure}
\begin{subfigure}{0.3\linewidth}
\includegraphics[width=\linewidth]{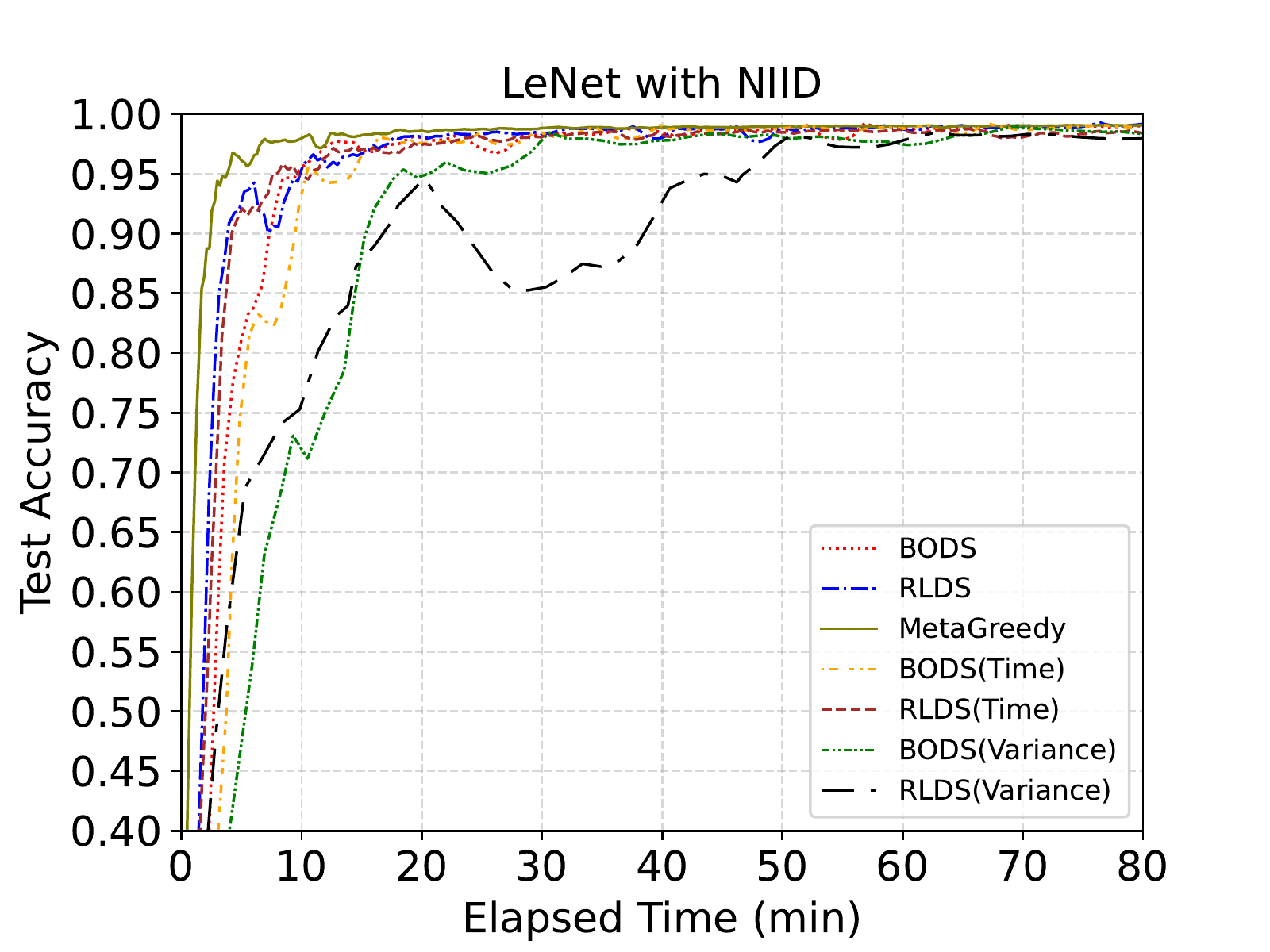}
\vspace{-4mm}
\caption{}
\label{figLeNiidAb}
\end{subfigure}
\begin{subfigure}{0.3\linewidth}
\includegraphics[width=\linewidth]{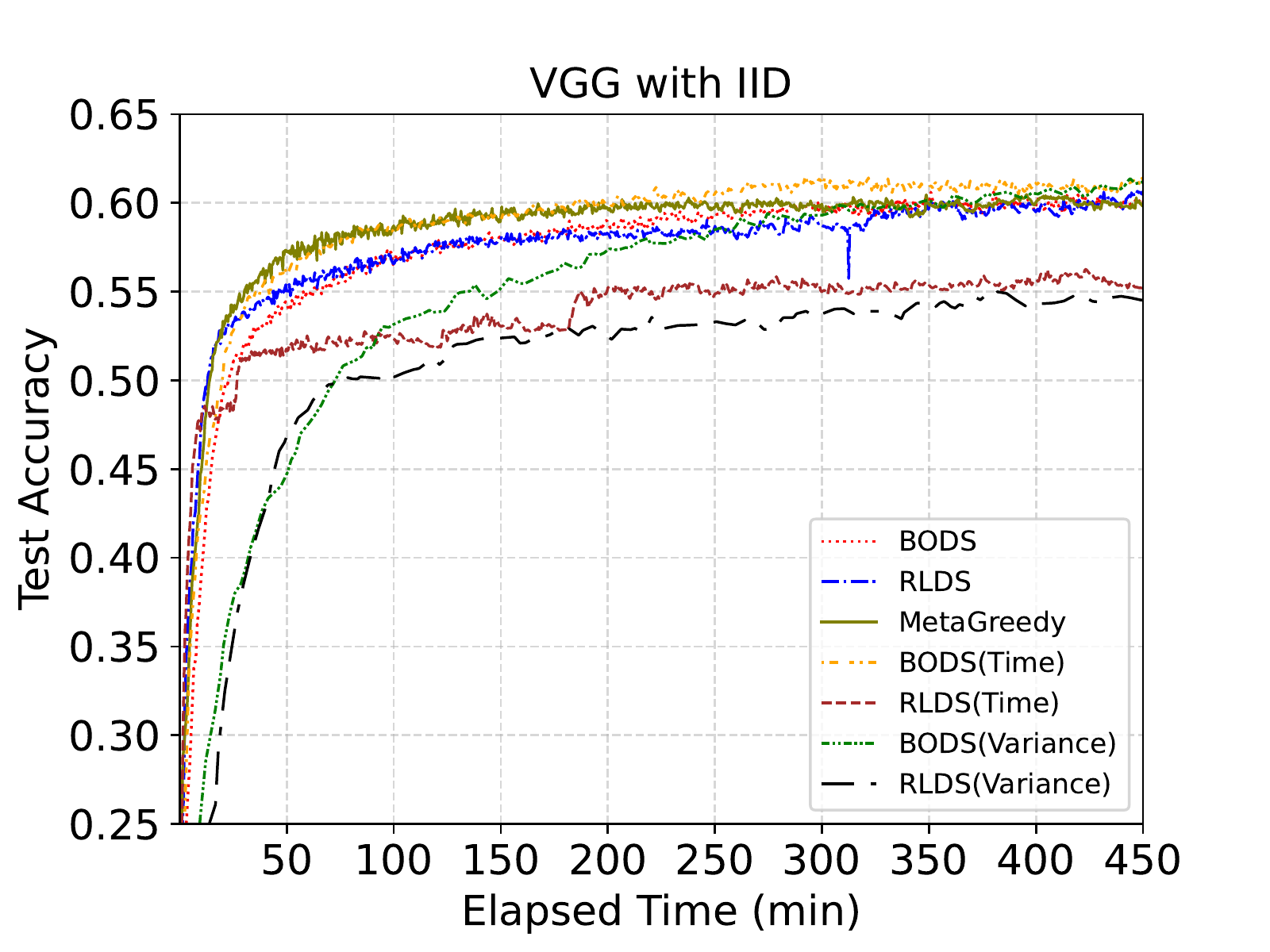}
\vspace{-4mm}
\caption{}
\label{figVgiidAb}
\end{subfigure}
\begin{subfigure}{0.3\linewidth}
\includegraphics[width=\linewidth]{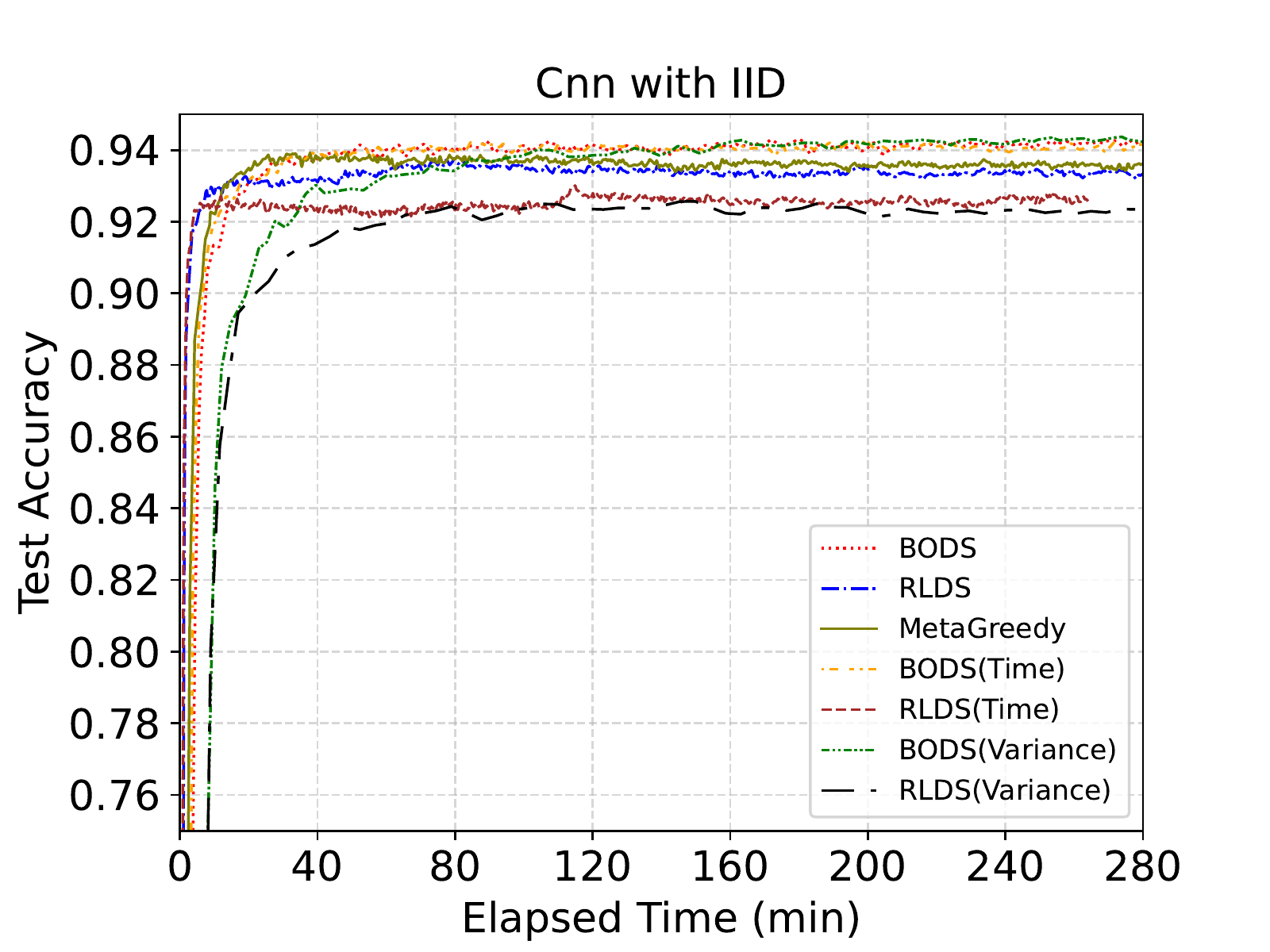}
\vspace{-4mm}
\caption{}
\label{figCniidAb}
\end{subfigure}
\begin{subfigure}{0.3\linewidth}
\includegraphics[width=\linewidth]{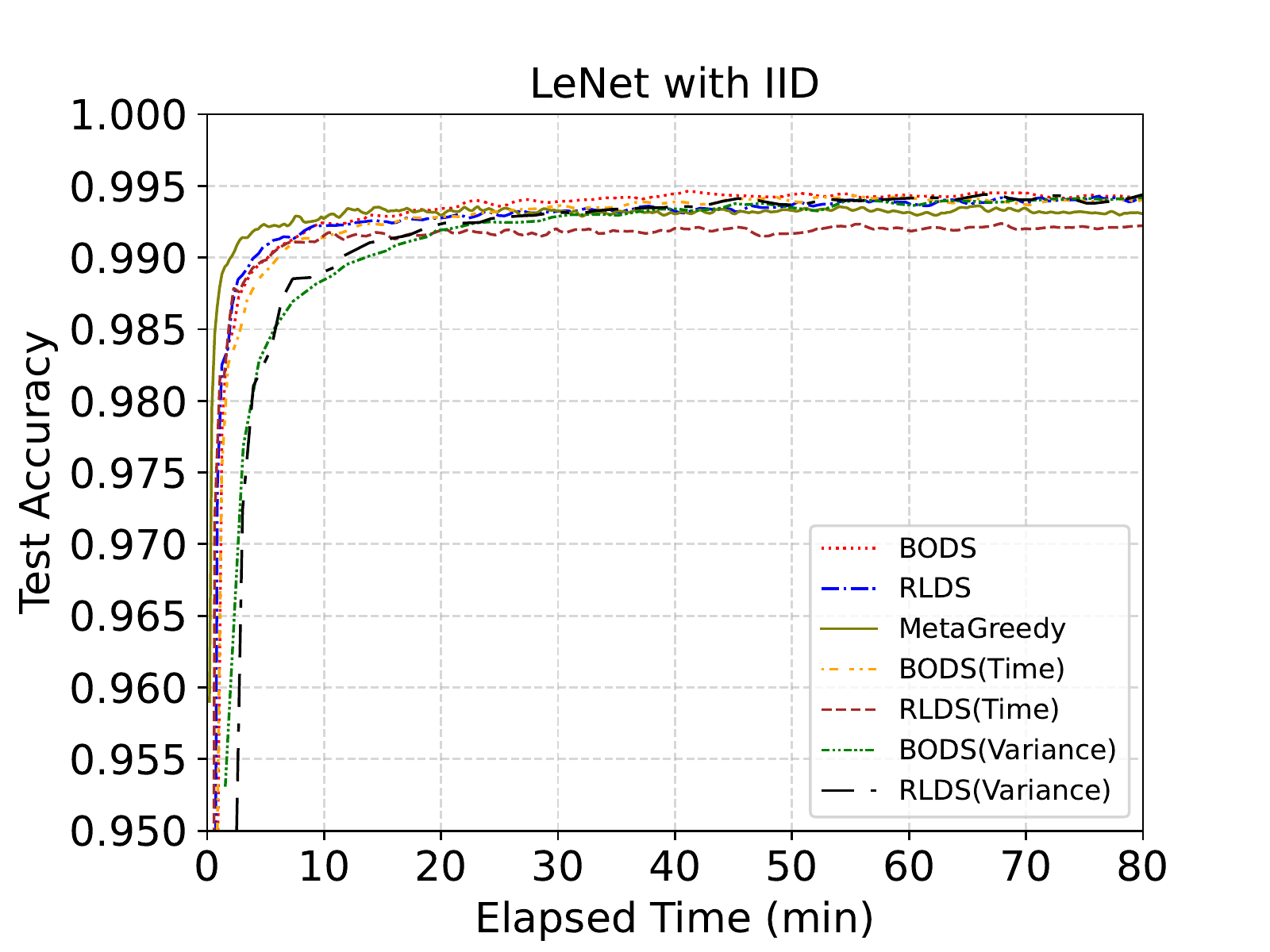}
\vspace{-4mm}
\caption{}
\label{figLeiidAb}
\end{subfigure}
\vspace{4mm}
\caption{The convergence accuracy of different jobs in Group A changes over time with the ablation setting, i.e., ``time'' represents the cost model with execution time and ``variance'' represents the cost model with data fairness.}
\label{fig:groupAAb}
\end{figure*}

\begin{figure*}[htbp]
\centering
\begin{subfigure}{0.3\linewidth}
\includegraphics[width=\linewidth]{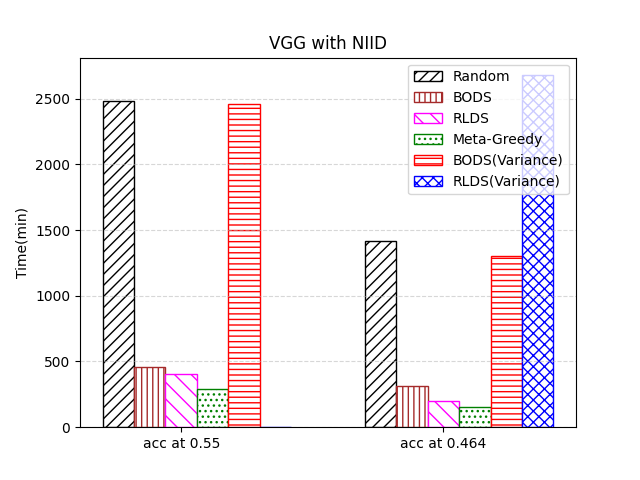}
\vspace{-4mm}
\caption{}
\label{figVgNiidAb-bar}
\end{subfigure}
\begin{subfigure}{0.3\linewidth}
\includegraphics[width=\linewidth]{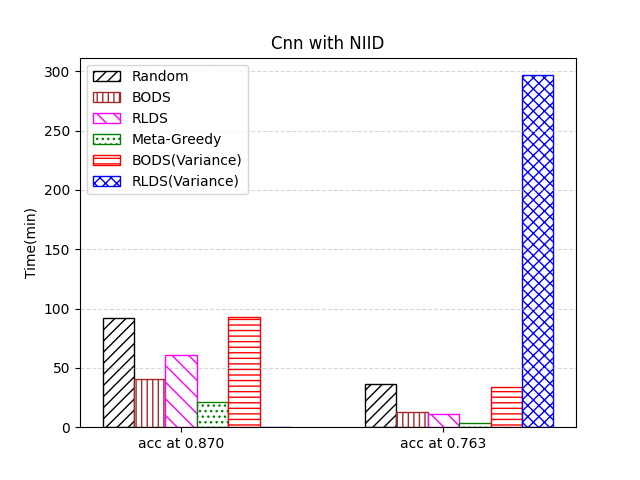}
\vspace{-4mm}
\caption{}
\label{figCnNiidAb-bar}
\end{subfigure}
\begin{subfigure}{0.3\linewidth}
\includegraphics[width=\linewidth]{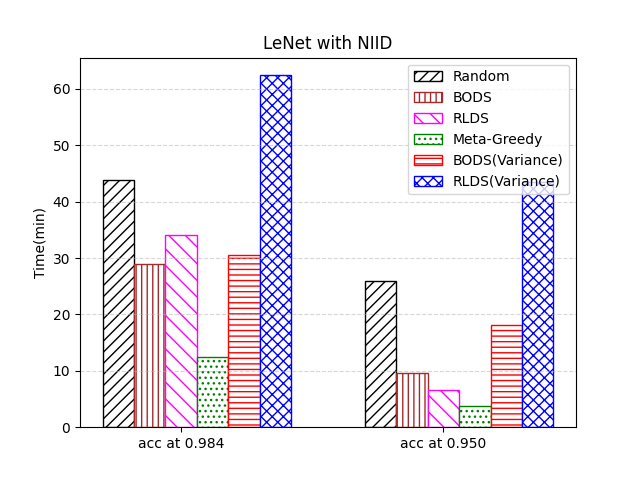}
\vspace{-4mm}
\caption{}
\label{figLeNiidAb-bar}
\end{subfigure}
\begin{subfigure}{0.3\linewidth}
\includegraphics[width=\linewidth]{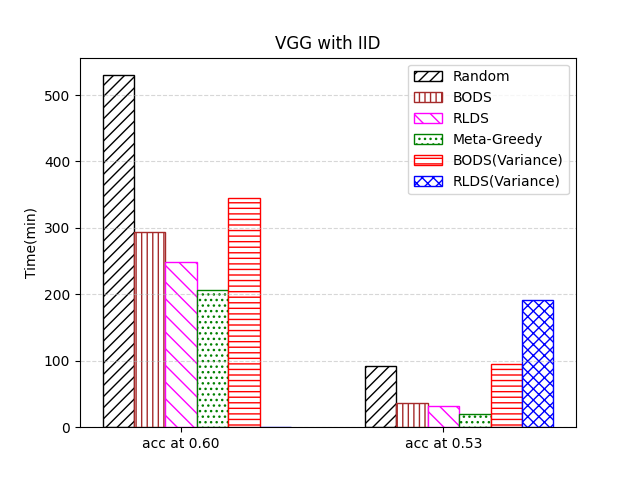}
\vspace{-4mm}
\caption{}
\label{figVgiidAb-bar}
\end{subfigure}
\begin{subfigure}{0.3\linewidth}
\includegraphics[width=\linewidth]{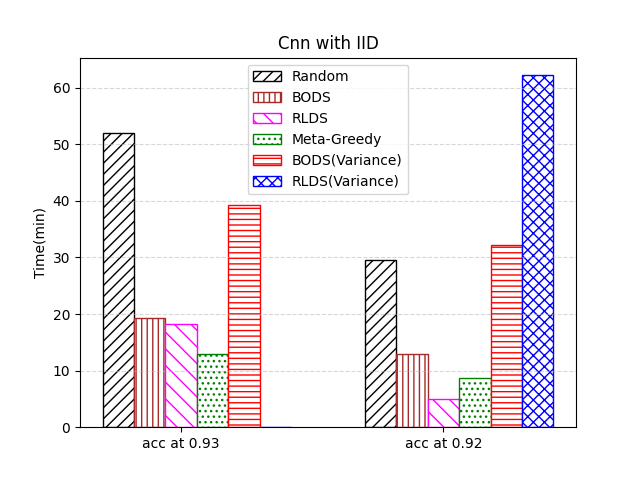}
\vspace{-4mm}
\caption{}
\label{figCniidAb-bar}
\end{subfigure}
\begin{subfigure}{0.3\linewidth}
\includegraphics[width=\linewidth]{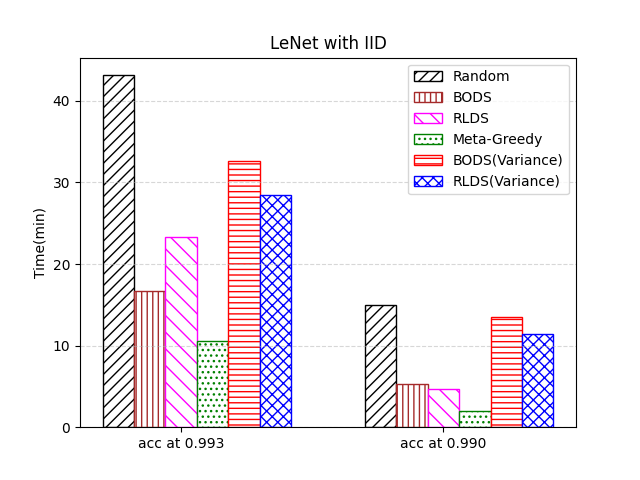}
\vspace{-4mm}
\caption{}
\label{figLeiidAb-bar}
\end{subfigure}
\vspace{-2mm}
\caption{The time required for each job of Group A to achieve the target convergence accuracy on the non-IID and IID distribution with the ablation setting, i.e., ``time'' represents the cost model with execution time and ``variance'' represents the cost model with data fairness.}
\label{fig:groupAAb-bar}
\vspace{-6mm}
\end{figure*}

\begin{figure*}[htbp]
\centering
\begin{subfigure}{0.3\linewidth}
\includegraphics[width=\linewidth]{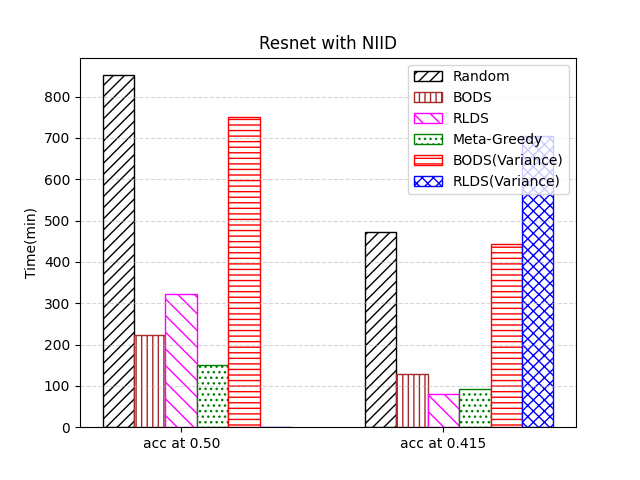}
\vspace{-4mm}
\caption{}
\label{figReNiidAb-bar}
\end{subfigure}
\begin{subfigure}{0.3\linewidth}
\includegraphics[width=\linewidth]{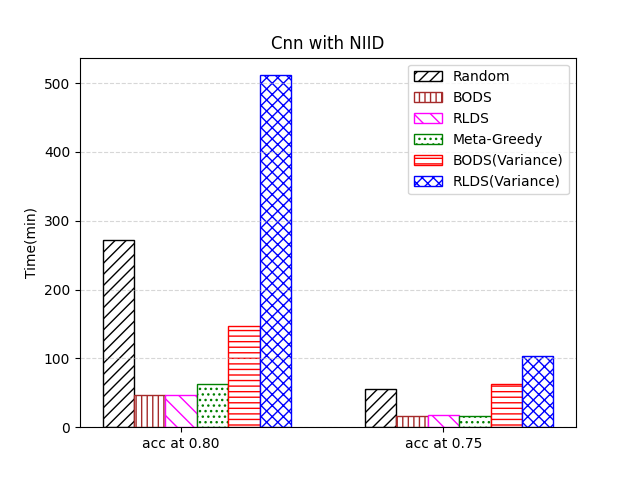}
\vspace{-4mm}
\caption{}
\label{figCBNiidAb-bar}
\end{subfigure}
\begin{subfigure}{0.3\linewidth}
\includegraphics[width=\linewidth]{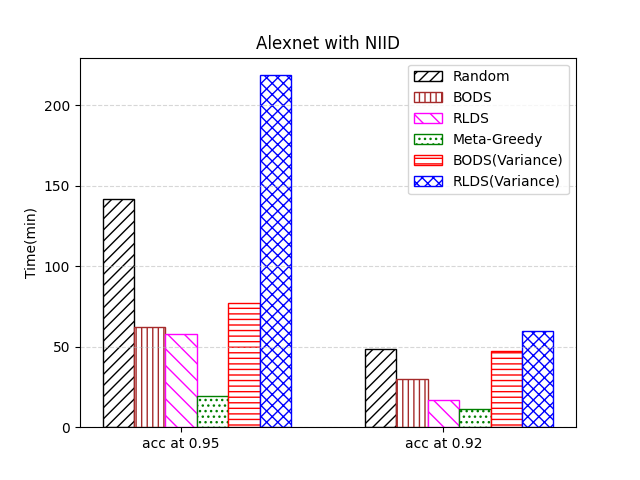}
\vspace{-4mm}
\caption{}
\label{figAlNiidAb-bar}
\end{subfigure}
\begin{subfigure}{0.3\linewidth}
\includegraphics[width=\linewidth]{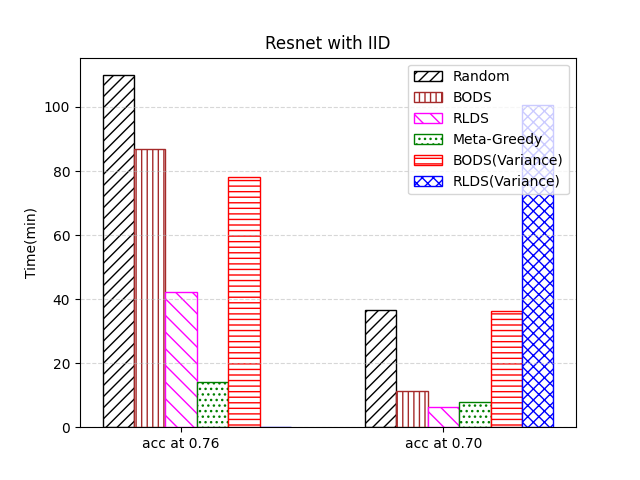}
\vspace{-4mm}
\caption{}
\label{figReiidAb-bar}
\end{subfigure}
\begin{subfigure}{0.3\linewidth}
\includegraphics[width=\linewidth]{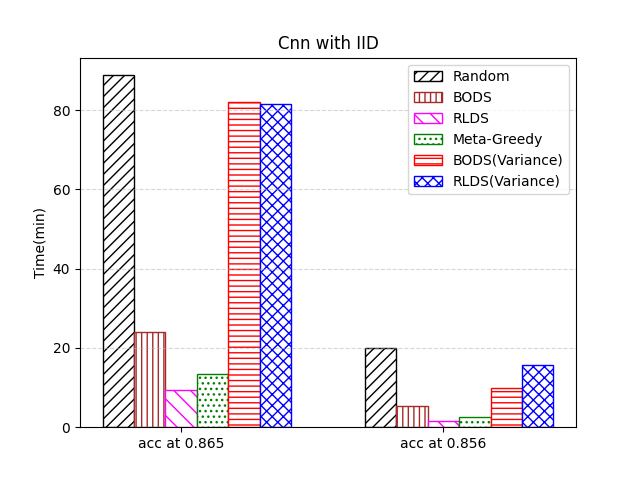}
\vspace{-4mm}
\caption{}
\label{figCBiidAb-bar}
\end{subfigure}
\begin{subfigure}{0.3\linewidth}
\includegraphics[width=\linewidth]{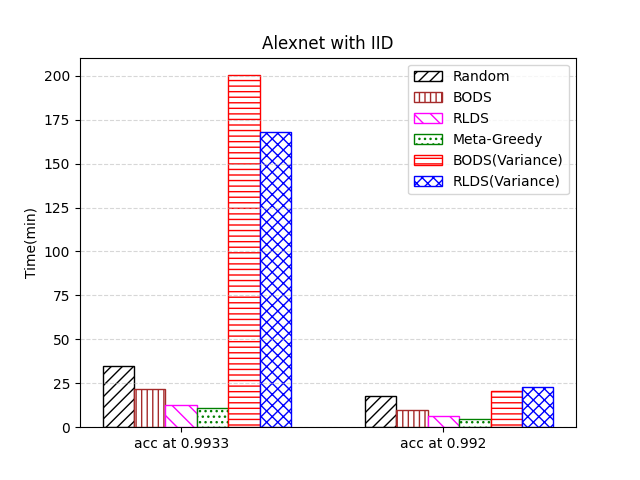}
\vspace{-4mm}
\caption{}
\label{figAliidAb-bar}
\end{subfigure}
\vspace{-2mm}
\caption{The time required for each job of Group B to achieve the target convergence accuracy on the non-IID and IID distribution with the ablation setting, i.e., ``time'' represents the cost model with execution time and ``variance'' represents the cost model with data fairness.}
\vspace{-6mm}
\label{fig:groupBAb-bar}
\end{figure*}

As RLDS can learn more information through a complex neural network, RLDS outperforms BODS for complex jobs (\liuR{0.008 and 0.029 in terms of accuracy with VGG19 and ResNet18, and 46.7\% and 34.8\% faster for the target accuracy of 0.7 with VGG19 and 0.5 with ResNet18; see details in Appendix}). Due to the emphasis on the combination of data fairness and device capabilities, i.e.,computation and communication capabilities, BODS can lead to high convergence accuracy and fast convergence speed for simple jobs (\liuR{0.018 in terms of accuracy and 38\% faster for the target accuracy of 0.97 with CNN; see details in Appendix}). \liu{Meta-Greedy reconstructs the cost model based on six scheduling schemes and dynamically adjusts the parameters so that the impact of data fairness increases with the number of rounds, which leads to high convergence accuracy and fast convergence speed in both complex and simple jobs. BODS, RLDS and Meta-Greedy significantly outperform the baseline methods, while there are also differences among the four methods.} The Greedy method prefers devices with high capacity, which leads to a significant decline in terms of the final convergence accuracy. The Genetic method can exploit randomness to achieve data fairness while generating scheduling plans, and the convergence performance is better than the Greedy method. The FedCS method optimizes the scheduling plan by randomly selecting devices, which improves the fairness of the device to a certain extent, and convergences faster than the Random method. \liuR{We carry out an experiment with Group A with both IID and non-IID distribution, and find that Meta-Greedy with 6 methods significantly outperforms that with 2 methods in terms of accuracy (up to 0.091) and training speed (up to 65\%) (see details in Appendix). In addition, after analyzing the log of the training process (see details in Appendix), we find Greedy and Genetic are extensively exploited, RLDS is selected at the beginning of the training process, BODS is chosen at the end of the training process, FedCS participates with less frequency, and Random is seldomly utilized. As Meta-Greedy can intelligently select a proper scheduling plan based on available methods, it results in a more efficient training process.}

\subsubsection{Ablation Study}

In this section, we first present the ablation study to show the impact of the execution time and the data fairness. Then, we analyze the influence of $\Omega(r)$. 

{\bf{Impact of data fairness:}}
\liu{
As shown in Figures \ref{fig:groupAAb} and \ref{fig:groupBAb}, we conduct ablation experiments and find an evident decline in convergence accuracy when the cost model is only composed of execution time cost. From Figures \ref{fig:groupAAb} and \ref{fig:groupBAb}, we can see that the accuracy of the cost model composed of execution time (i.e., BODS (Time) and RLDS (Time)) is worse than that of both data fairness and execution time (i.e., BODS and RLDS) in most cases (up to 35.83\% lower for RLDS (Time)). The abnormal case, when the convergence accuracy of BODS (Time) is slightly higher (about 1\%) than the convergence accuracy of BODS (please see details in Figure \ref{fig:groupAAb} (d)) should be caused by randomness. Besides, from the Figures \ref{fig:groupAAb} and \ref{fig:groupBAb}, we can find that the impact of execution time on complex jobs (VGG in Figure \ref{fig:groupAAb} and Resnet in Figure \ref{fig:groupBAb}) is significantly greater than that on simple jobs (LeNet in Figure \ref{fig:groupAAb} and AlexNet in Figure \ref{fig:groupBAb}).}

{\bf{Impact of execution time:}} \liu{
As shown in Figures \ref{fig:groupAAb} and \ref{fig:groupBAb}, the data fairness improves both the convergence speed (up to 9.35 times faster) and the accuracy (up to 15.3\%). From Figures \ref{fig:groupAAb-bar} and \ref{fig:groupBAb-bar}, we can find that the convergence speed significantly decreases (up to 25.46 times slower for RLDS (Variance) compared with RLDS) when given a target accuracy with the cost model composed of only data fairness. In most cases, considering data fairness reduces the convergence speed and essentially does not affect the convergence accuracy. There are few cases that the job fails to achieve the convergence accuracy. In addition, from the results shown in Figures \ref{fig:groupAAb} and \ref{fig:groupBAb}, we find that the convergence accuracy in complex jobs (VGG in Figure \ref{fig:groupAAb-bar} and ResNet in Figure \ref{fig:groupBAb-bar}) are more likely to decline when the cost model is only composed of data fairness, compared with that with simple jobs. Besides, the performance corresponding to the cost model with only data fairness is close to that of the 'Random' method in most setups. }

{\bf{Influence of $\Omega(r)$:}}\liu{
As shown in Figures 
of Appendix, Meta-Greedy with $\Omega(r)=\sqrt{r}$ outperforms other methods, i.e., $\Omega(r)=r$ and $\Omega(r)=\log r$, in terms of both convergence accuracy and convergence speed. 
Although the convergence accuracy of Meta-Greedy with $\Omega(r)=r$ is slightly higher than that with $\Omega(r) = \sqrt{r}$ in a few experiments, the convergence accuracy of Meta-Greedy with $\Omega(r) = \sqrt{r}$ in complex jobs is significantly lower than that with $\Omega(r) = r$. In addition, Meta-Greedy with $\Omega(r) = \sqrt{r}$ takes less time to reach the target accuracy in previous rounds compared with Meta-Greedy with $\Omega(r)=r$. This indicates that when round $r$ is linearly related to $\Omega(r)$, the data fairness in later rounds influences the cost model too much and may lead to a lower convergence accuracy. Therefore, we avoid using $\Omega(r) = r$ due to the dramatic changes of the magnitude in the linear relationship. 
Meta-Greedy with $\Omega(r) = \sqrt{r}$ takes less time to reach the target accuracy in previous rounds compared with Meta-Greedy with $\Omega(r)=\log r$. In the meanwhile, the convergence accuracy of Meta-Greedy with $\Omega(r)=\log r$ is lower than that of Meta-Greedy with $\Omega(r) = \sqrt{r}$, which implies that the relatively gentle variation of $\Omega(r)$ with round $r$ leads to lower convergence accuracy and slower convergence speed. By contrast, Meta-Greedy with $\Omega(r) = \sqrt{r}$ performs best among them in terms of both convergence accuracy and convergence speed. Thus, we adopt Meta-Greedy with $\Omega(r) = \sqrt{r}$, which dynamically changes the influence of data fairness to obtain relatively optimal solution in both complex and simple jobs.
}

\section{Conclusion}
\label{sec:conclusion}

In this work, we proposed a new Multi-Job Federated Learning framework, i.e., MJ-FL. The framework is composed of a system model and three device scheduling methods. The system model is composed of a process for the parallel execution of multiple jobs and a cost model based on the capability of devices and data fairness. \liu{We proposed three device scheduling methods, i.e., RLDS for complex jobs and BODS for simple jobs, while Meta-Greedy for both complex and simple jobs, to efficiently schedule proper devices for each job based on the cost model. We carried out extensive experimentation with six real-life models and four datasets with IID and non-IID distribution. The experimental results show that MJ-FL outperforms the single-job FL, and that our proposed scheduling methods, i.e., BODS and RLDS, significantly outperform baseline methods (up to 44.6\% in terms of accuracy, 12.6 times faster for a single job and 5.81 times faster for the total time). In addition, Meta-Greedy, the intelligent scheduling approach based on multiple scheduling methods (including the two proposed methods, i.e., RODS and BODS) significantly outperforms other methods (up to 46.4\% in terms of accuracy, 12.73 times faster for a single job and 8.16 times faster for the total time). 
}




\ifCLASSOPTIONcaptionsoff
  \newpage
\fi



\bibliographystyle{IEEEtran}
\bibliography{references}
%




%

\section*{Appendix}

\subsection*{Proof of Theorem 1}

For simplicity, we take $\nabla f^m_k(w^{m,k}_{r,h})$ instead of $\nabla f^m_k(w^{m,k}_{r,h}; \zeta^{m,r}_{k,h})$ in the proof. 
By the smoothness of $F^m$, we have
\begin{align}
    &\mathbb{E}[F^m(\bar{w}^m_{r,h+1})] \notag \\
    \leq &~\mathbb{E}[F^m(\bar{w}^m_{r,h})] \notag + \underbrace{\mathbb{E}[\langle \nabla F^m(\bar{w}^m_{r,h}), \bar{w}^m_{r,h+1}-\bar{w}^m_{r,h}\rangle]}_{A} \notag \\
    &+ \frac{L}{2} \underbrace{\mathbb{E}[\parallel \bar{w}^m_{r,h+1}-\bar{w}^m_{r,h}\parallel^2]}_{B}
\end{align}
Where for $B$ we have, $\eta^m_{r,h}$ is learning rate.
\begin{align}\label{eq:B}
    &\mathbb{E}[\parallel \bar{w}^m_{r,h+1}-\bar{w}^m_{r,h}\parallel^2] \notag \\
    = &~\mathbb{E}\parallel \eta^m_{r,h} \sum_{k \in V^r_m}p^m_{k,r} g^{m,k}_{r,h} \parallel^2 \notag \\
    = &~{\eta^m_{r,h}}^2 \mathbb{E}\parallel\sum_{k \in V^r_m}p^m_{k,r} g^{m,k}_{r,h}\parallel^2 \notag\\
    = &~{\eta^m_{r,h}}^2 \mathbb{E}\parallel \sum_{k \in V^r_m}p^m_{k,r} (g^{m,k}_{r,h} - \nabla F^m_k(w^{m,k}_{r,h})) \parallel^2 \notag \\ &+  {\eta^m_{r,h}}^2 \mathbb{E}\parallel \sum_{k \in V^r_m}p^m_{k,r} \nabla F^m_k(w^{m,k}_{r,h})) \parallel^2 \notag\\
    \leq &~{\eta^m_{r,h}}^2 \sum_{k \in V^r_m} p^m_{k,r} \mathbb{E} \parallel g^{m,k}_{r,h} - \nabla F^m_k(w^{m,k}_{r,h})\parallel^2 \notag\\ 
    &+ {\eta^m_{r,h}}^2 \mathbb{E}\parallel \sum_{k \in V^r_m}p^m_{k,r} \nabla F^m_k(w^{m,k}_{r,h})) \parallel^2 \notag\\
    \leq &~{\eta^m_{r,h}}^2\sum_{k \in V^r_m}p^m_{k,r} \sigma^2+{\eta^m_{r,h}}^2 \mathbb{E}\parallel \sum_{k \in V^r_m}p^m_{k,r} \nabla F^m_k(w^{m,k}_{r,h})) \parallel^2
\end{align}
where the first inequality results from Jensen's inequality. The last inequality results from Assumption 3.
Next, for $A$ we have
\begin{equation}
\begin{split}
    &\mathbb{E}[\langle \nabla F^m(\bar{w}^m_{r,h}), \bar{w}^m_{r,h+1}-\bar{w}^m_{r,h}\rangle] \notag \\
    = &~\mathbb{E}[\langle \nabla F^m(\bar{w}^m_{r,h}), -\eta^m_{r,h}\sum_{k \in V^r_m}p^m_{k,r} g^{m,k}_{r,h} \rangle] \notag \\
    = &~-\eta^m_{r,h}\mathbb{E}[\langle \nabla F^m(\bar{w}^m_{r,h}), \sum_{k \in V^r_m}p^m_{k,r} \nabla F^m_k(w^{m,k}_{r,h}) \rangle] \notag \\
    = &~-\frac{\eta^m_{r,h}}{2} \mathbb{E}\bigl[\parallel \nabla F^m(\bar{w}^m_{r,h}) \parallel^2 + \parallel \sum_{k \in V^r_m}p^m_{k,r} \nabla F^m_k(w^{m,k}_{r,h}) \parallel^2 \notag \\
    &- \parallel \nabla F^m(\bar{w}^m_{r,h}) -  \sum_{k \in V^r_m}p^m_{k,r} \nabla F^m_k(w^{m,k}_{r,h})\parallel^2\bigr] \notag \\
    = &~-\frac{\eta^m_{r,h}}{2} \mathbb{E}\parallel \nabla F^m(\bar{w}^m_{r,h}) \parallel^2 \notag \\
    & -\frac{\eta^m_{r,h}}{2}\mathbb{E}\parallel \sum_{k \in V^r_m} p^m_{k,r}\nabla F^m_k(w^{m,k}_{r,h}) \parallel^2 \notag \\
    &+ \frac{\eta^m_{r,h}}{2}\parallel \nabla F^m(\bar{w}^m_{r,h}) -  \sum_{k \in V^r_m}p^m_{k,r} \nabla F^m_k(w^{m,k}_{r,h})\parallel^2 
\end{split}
\end{equation} 
where the fourth equality results from the basic identity $\langle a, b \rangle = \frac{1}{2}\bigl( \parallel a \parallel^2 + \parallel b \parallel^2 - \parallel a-b\parallel^2 \bigr)$.\\

Then, combine $A$ and $B$ together 
\begin{align}\label{eq:28}
    &\mathbb{E}[F^m(\bar{w}^m_{r,h+1})] \notag \\
    \leq &~\mathbb{E}[F^m(\bar{w}^m_{r,h})] -\frac{\eta^m_{r,h}}{2} \mathbb{E}\parallel \nabla F^m(\bar{w}^m_{r,h}) \parallel^2 \notag \\
    &-\frac{\eta^m_{r,h}}{2}\mathbb{E}\parallel \sum_{k \in V^r_m} p^m_{k,r}\nabla F^m_k(w^{m,k}_{r,h}) \parallel^2 + \frac{L}{2}{\eta^m}^2 \sum_{k \in V^r_m}p^k_m \sigma^2 \notag \\
    &+ \frac{\eta^m_{r,h}}{2}\parallel \nabla F^m(\bar{w}^m_{r,h}) -  \sum_{k \in V^r_m}p^m_{k,r} \nabla F^m_k(w^{m,k}_{r,h})\parallel^2 \notag \\
    & +\frac{{\eta^m}^2 L}{2} \mathbb{E}\parallel \sum_{k \in V^r_m}p^m_{k,r} \nabla F^m_k(w^{m,k}_{r,h})) \parallel^2  \notag \\
    = &~\mathbb{E} [F^m(\bar{w}^m_{r,h})] -\frac{\eta^m_{r,h}}{2} \mathbb{E}\parallel \nabla F^m(\bar{w}^m_{r,h}) \parallel^2 + \frac{L}{2}{\eta^m}^2 \sum_{k \in V^r_m}p^k_m \sigma^2 \notag \\ &-\frac{\eta^m_{r,h}-{\eta^m_{r,h}}^2L}{2}\underbrace{\mathbb{E}\parallel\sum_{k \in V^r_m}p^m_{k,r}  \nabla F^m_k(w^{m,k}_{r,h})\parallel^2}_{C} \notag \\ &+\frac{1}{2}\eta^m_{r,h}\mathbb{E}\underbrace{\parallel \nabla F^m(\bar{w}^m_{r,h}) -  \sum_{k \in V^r_m}p^m_{k,r} \nabla F^m_k(w^{m,k}_{r,h})\parallel^2}_{D}
\end{align}

\begin{figure*}[htbp]
\centering
\begin{subfigure}{0.3\linewidth}
\includegraphics[width=\linewidth]{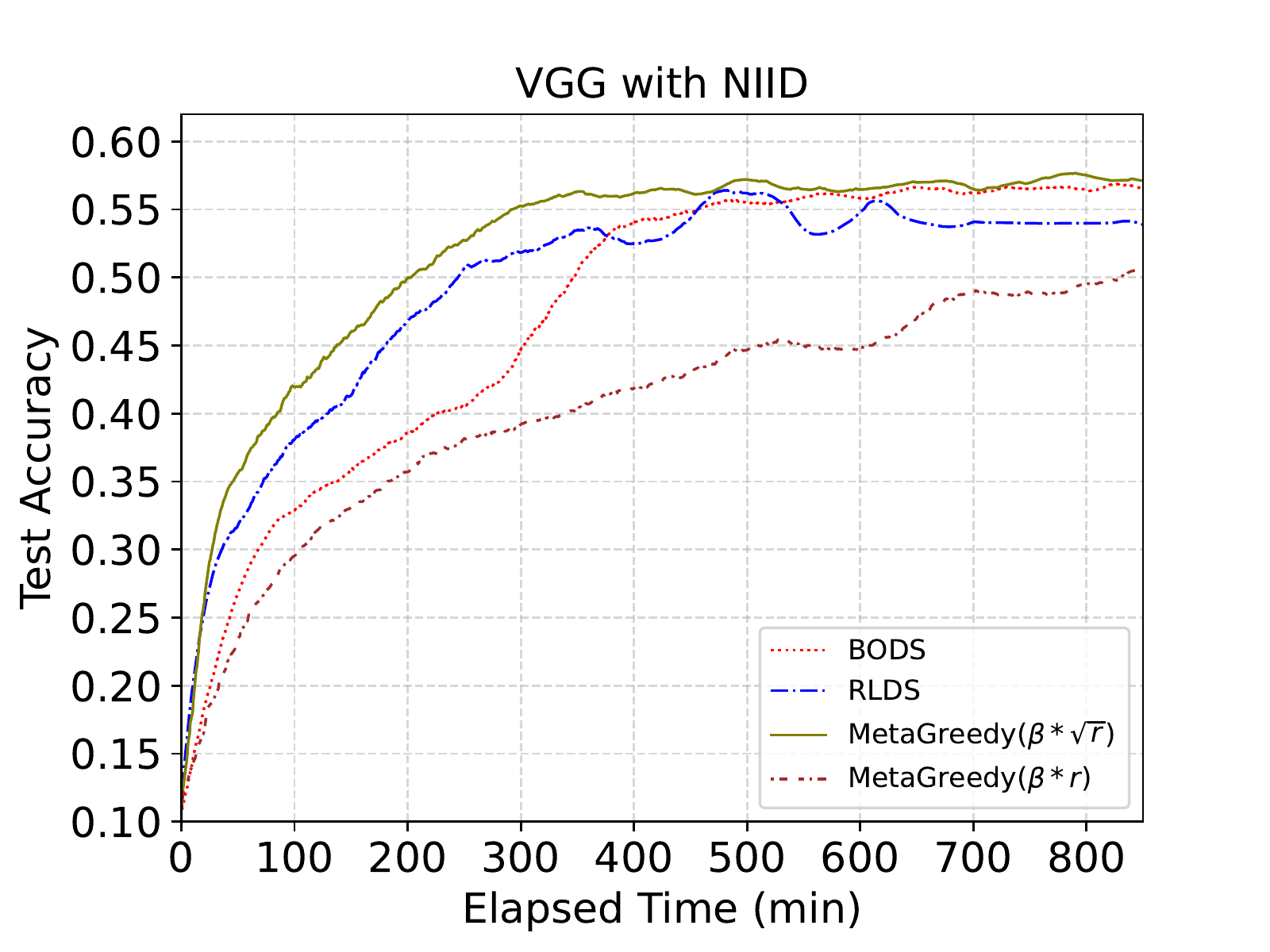}
\vspace{-4mm}
\caption{}
\label{figVgNiidBeta}
\end{subfigure}
\begin{subfigure}{0.3\linewidth}
\includegraphics[width=\linewidth]{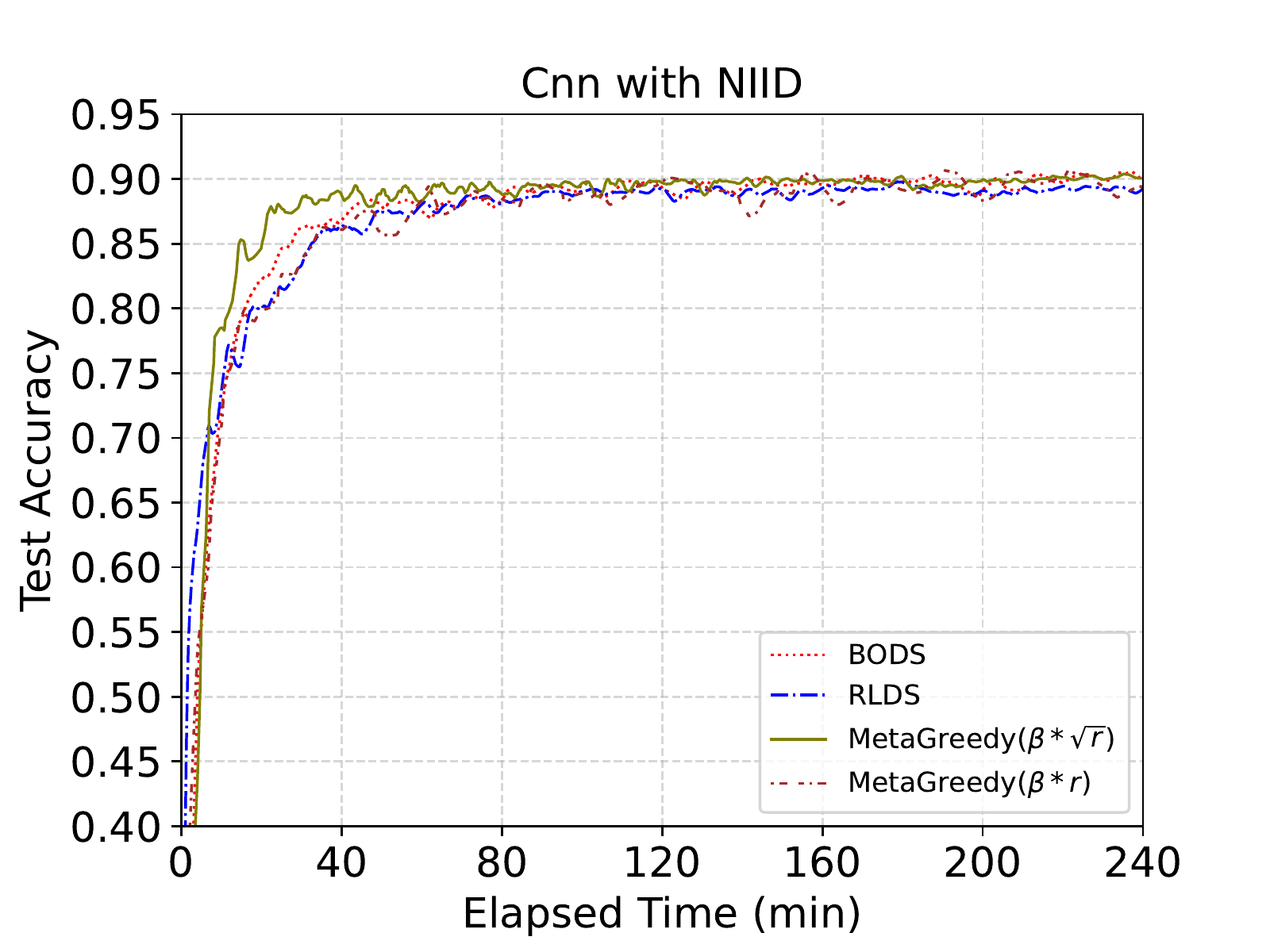}
\vspace{-4mm}
\caption{}
\label{figCnNiidBeta}
\end{subfigure}
\begin{subfigure}{0.3\linewidth}
\includegraphics[width=\linewidth]{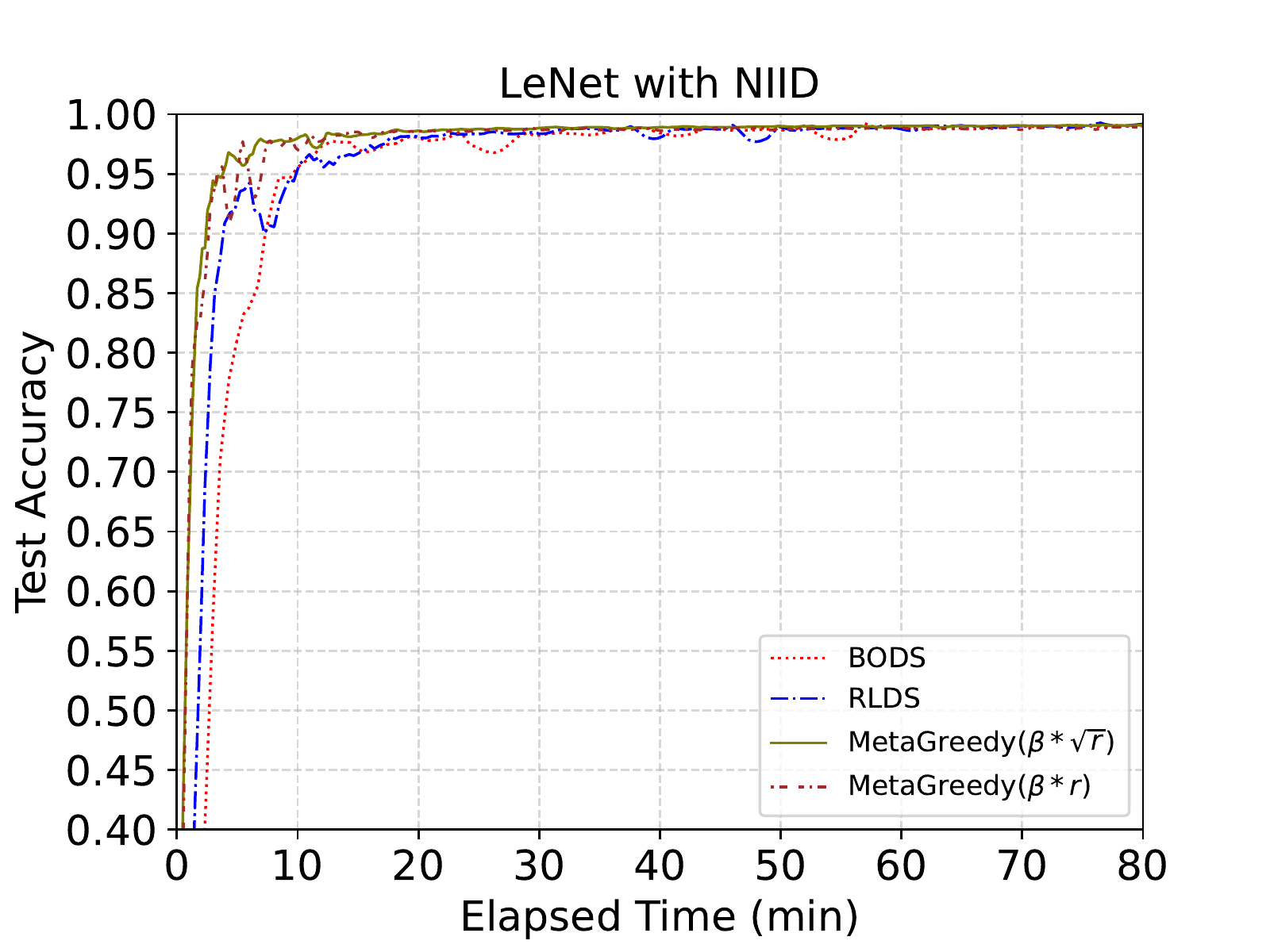}
\vspace{-4mm}
\caption{}
\label{figLeNiidBeta}
\end{subfigure}
\begin{subfigure}{0.3\linewidth}
\includegraphics[width=\linewidth]{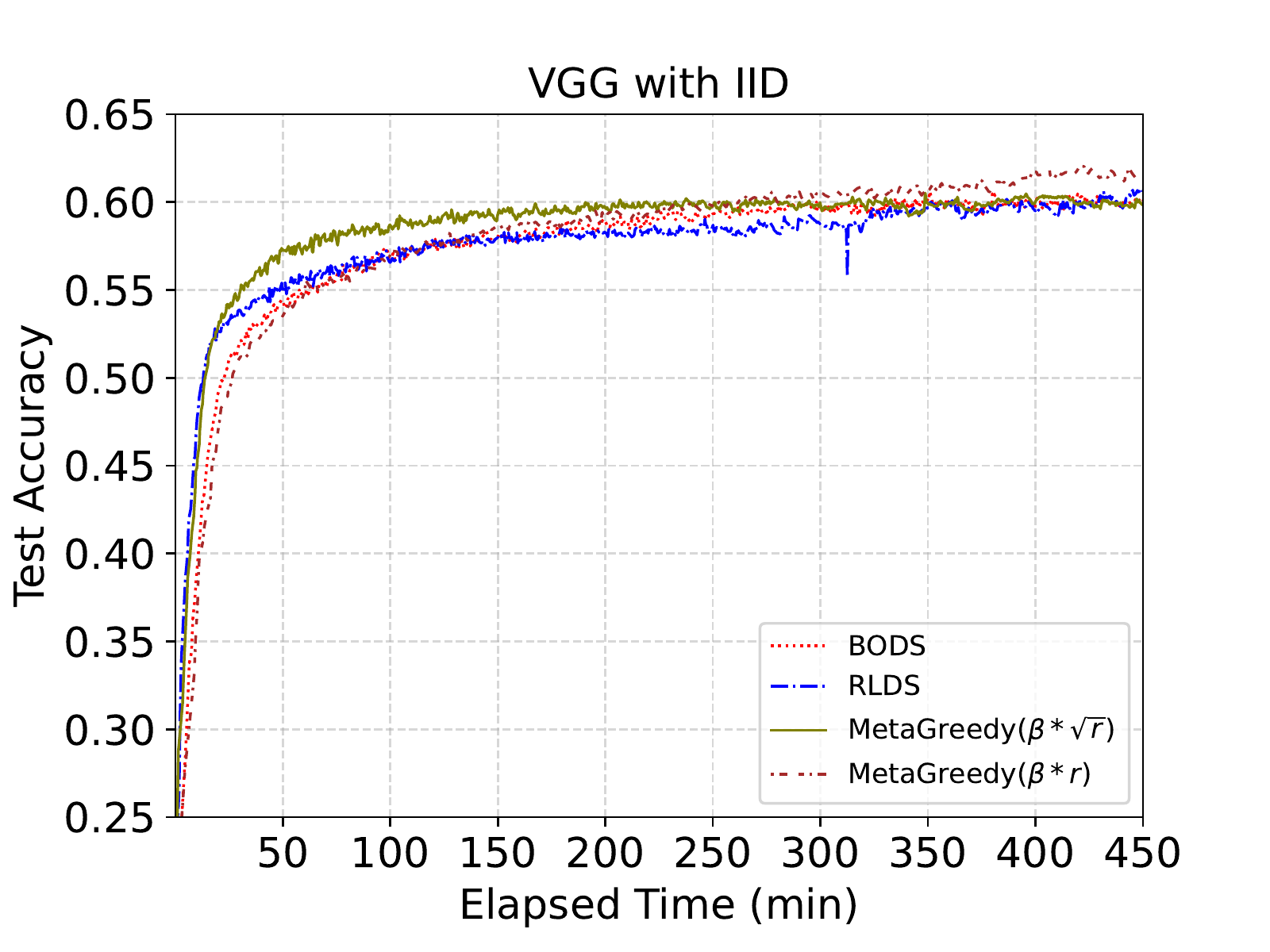}
\vspace{-4mm}
\caption{}
\label{figVgiidBeta}
\end{subfigure}
\begin{subfigure}{0.3\linewidth}
\includegraphics[width=\linewidth]{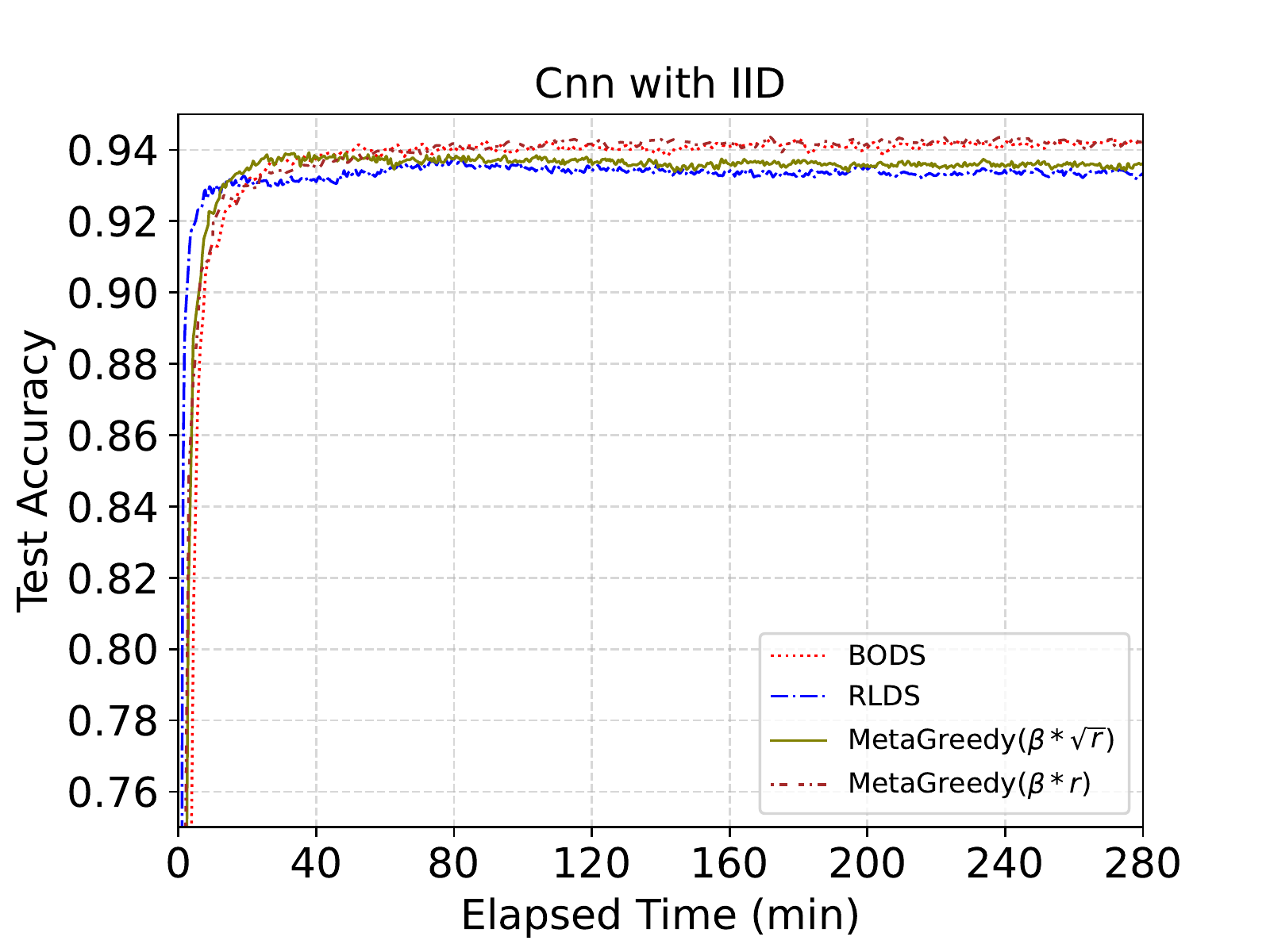}
\vspace{-4mm}
\caption{}
\label{figCniidBeta}
\end{subfigure}
\begin{subfigure}{0.3\linewidth}
\includegraphics[width=\linewidth]{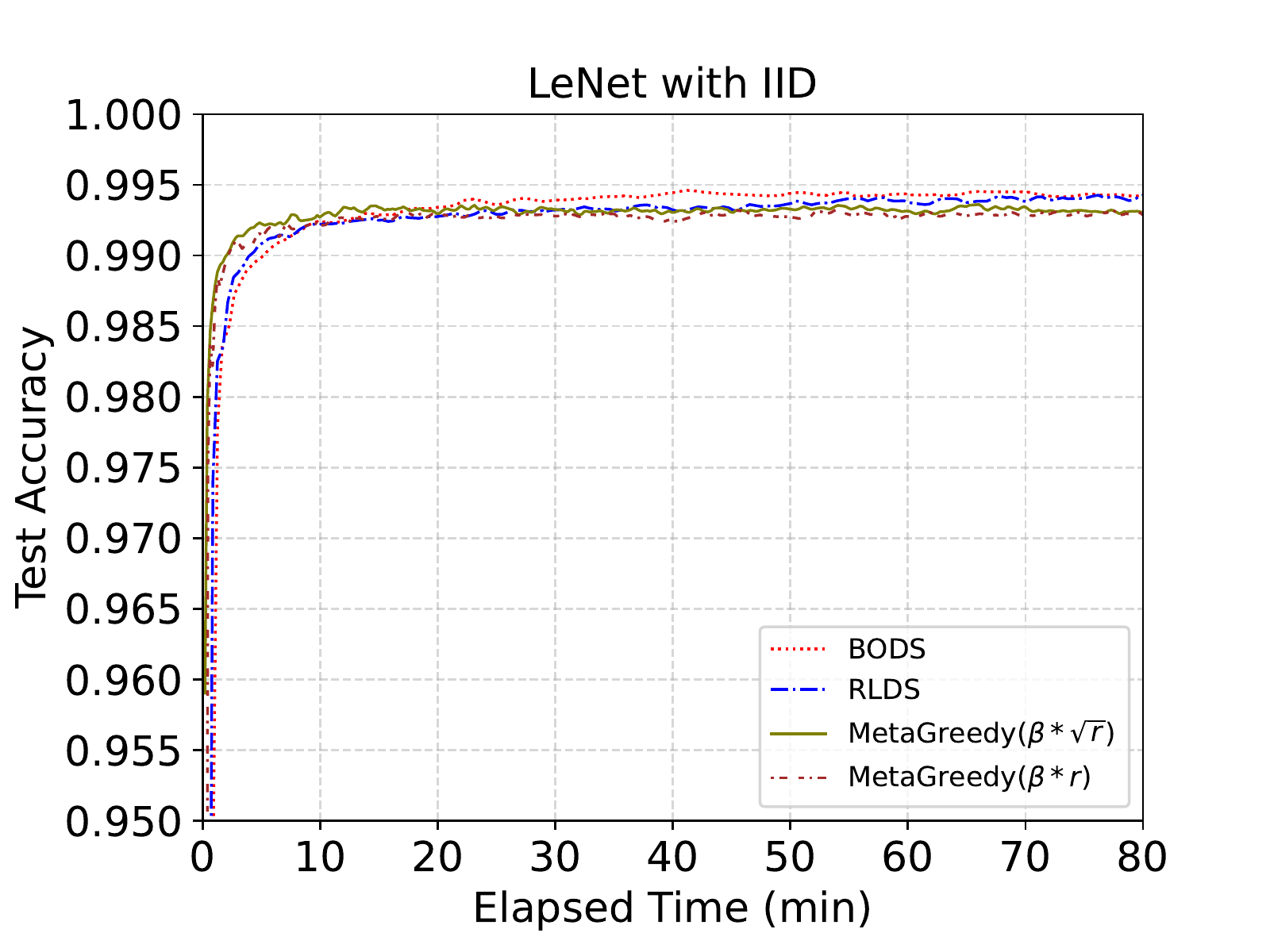}
\vspace{-4mm}
\caption{}
\label{figLeiidBeta}
\end{subfigure}
\vspace{-2mm}
\caption{The convergence accuracy of different jobs in Group A changes over time with diverse settings of $\beta$.}
\label{fig:groupABeta}
\end{figure*}
\begin{figure*}[htbp]
\centering
\begin{subfigure}{0.3\linewidth}
\includegraphics[width=\linewidth]{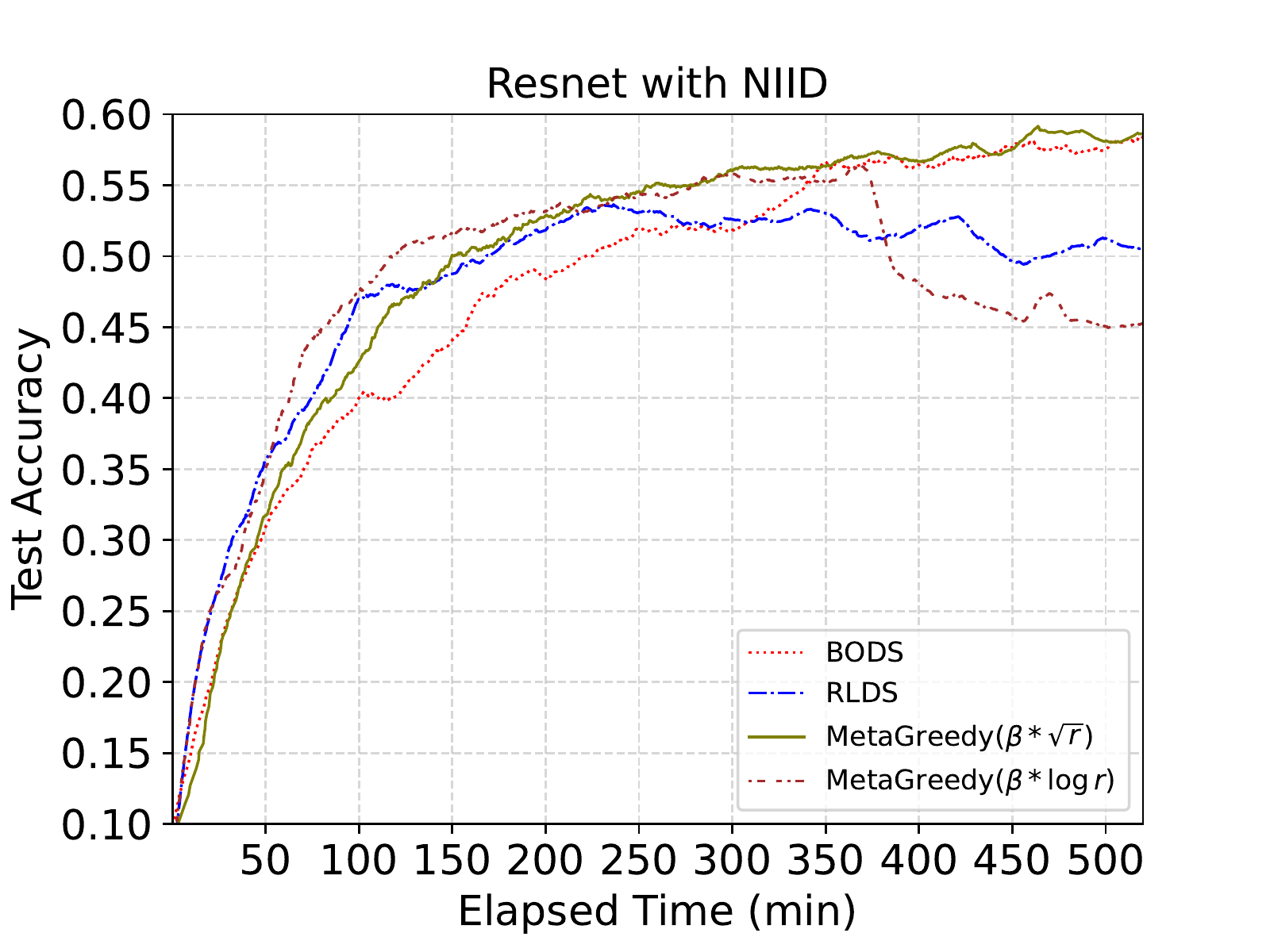}
\vspace{-4mm}
\caption{}
\label{figReNiidBeta}
\end{subfigure}
\begin{subfigure}{0.3\linewidth}
\includegraphics[width=\linewidth]{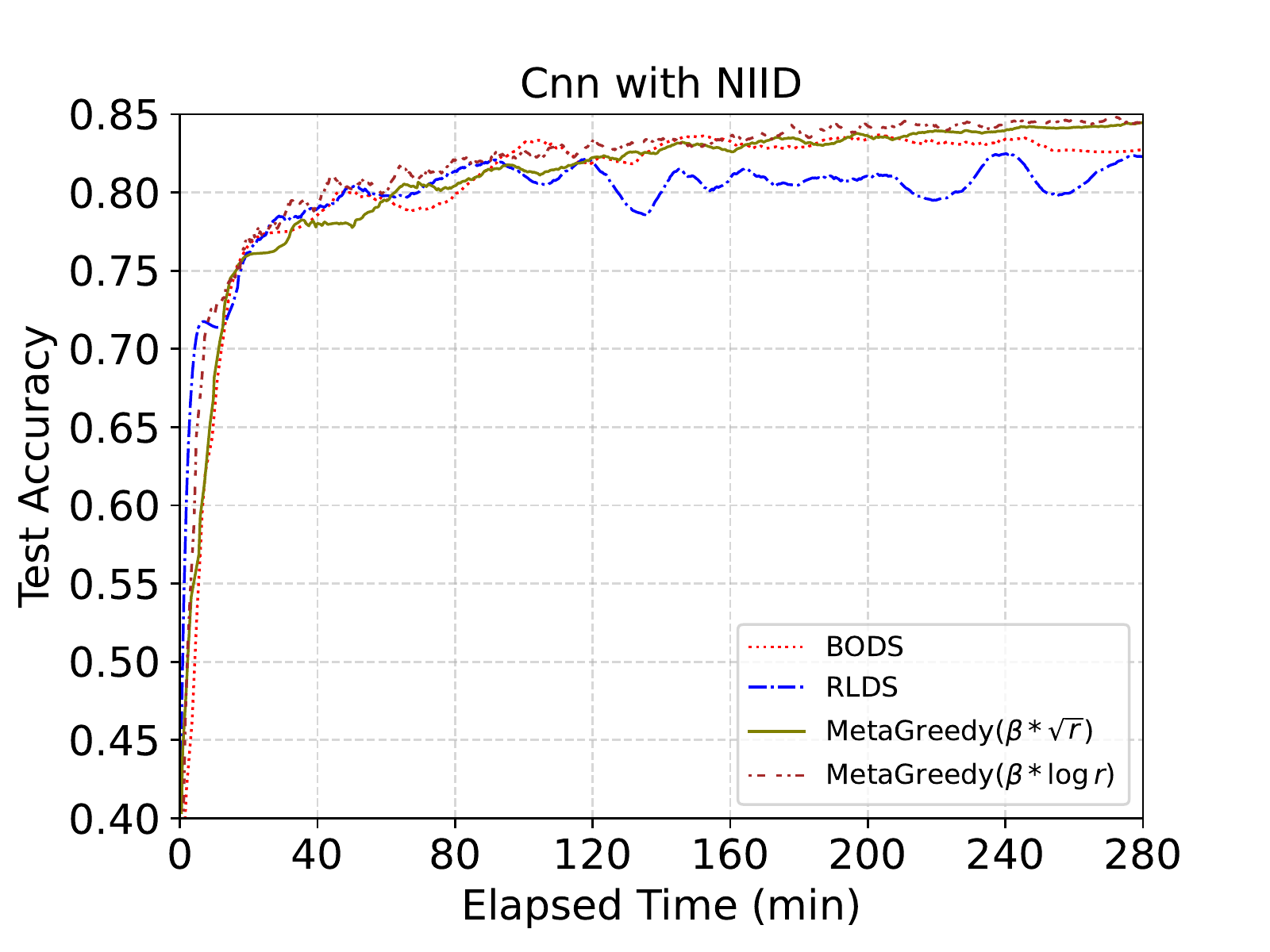}
\vspace{-4mm}
\caption{}
\label{figCBNiidBeta}
\end{subfigure}
\begin{subfigure}{0.3\linewidth}
\includegraphics[width=\linewidth]{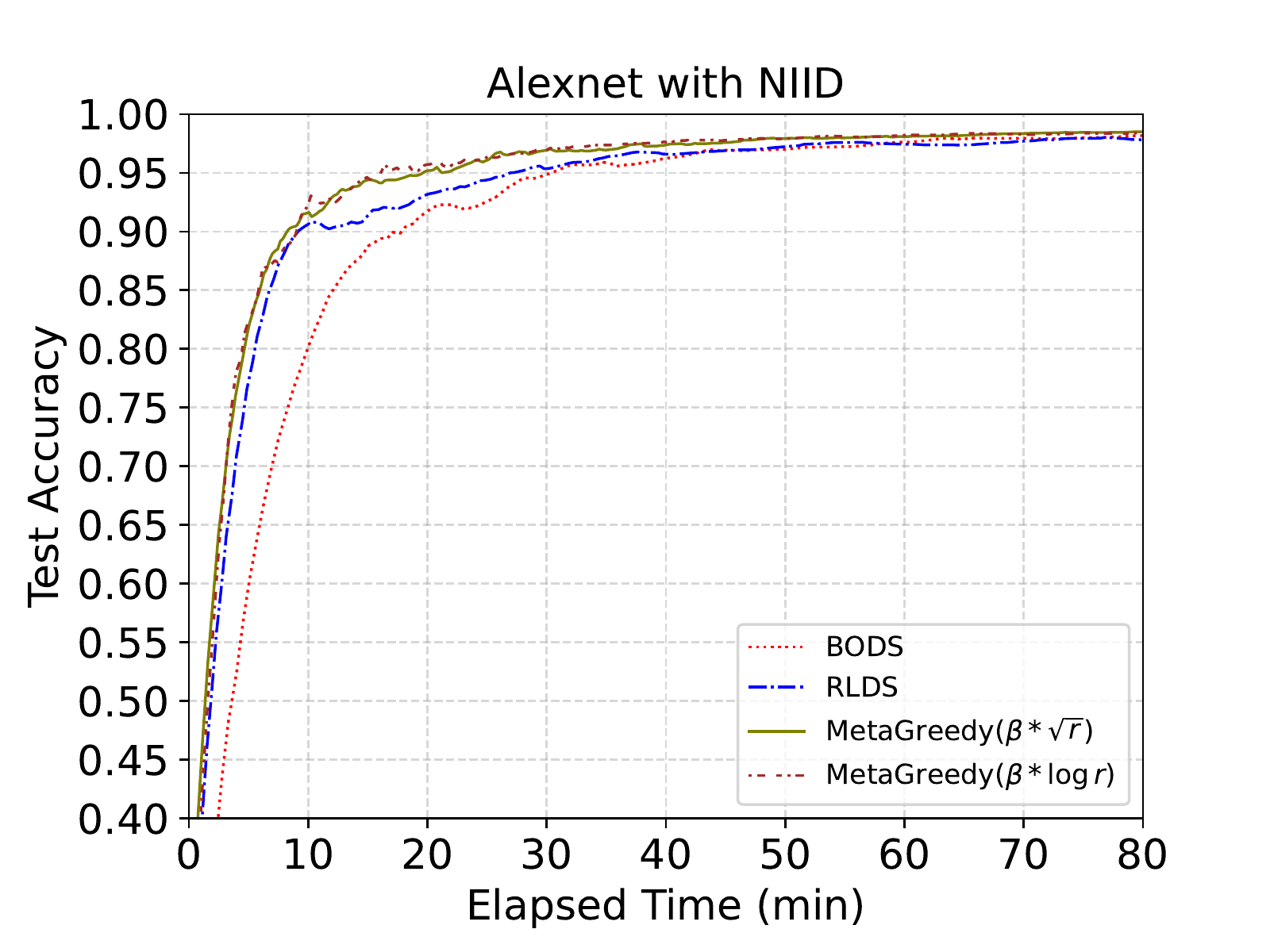}
\vspace{-4mm}
\caption{}
\label{figAlNiidBeta}
\end{subfigure}
\begin{subfigure}{0.3\linewidth}
\includegraphics[width=\linewidth]{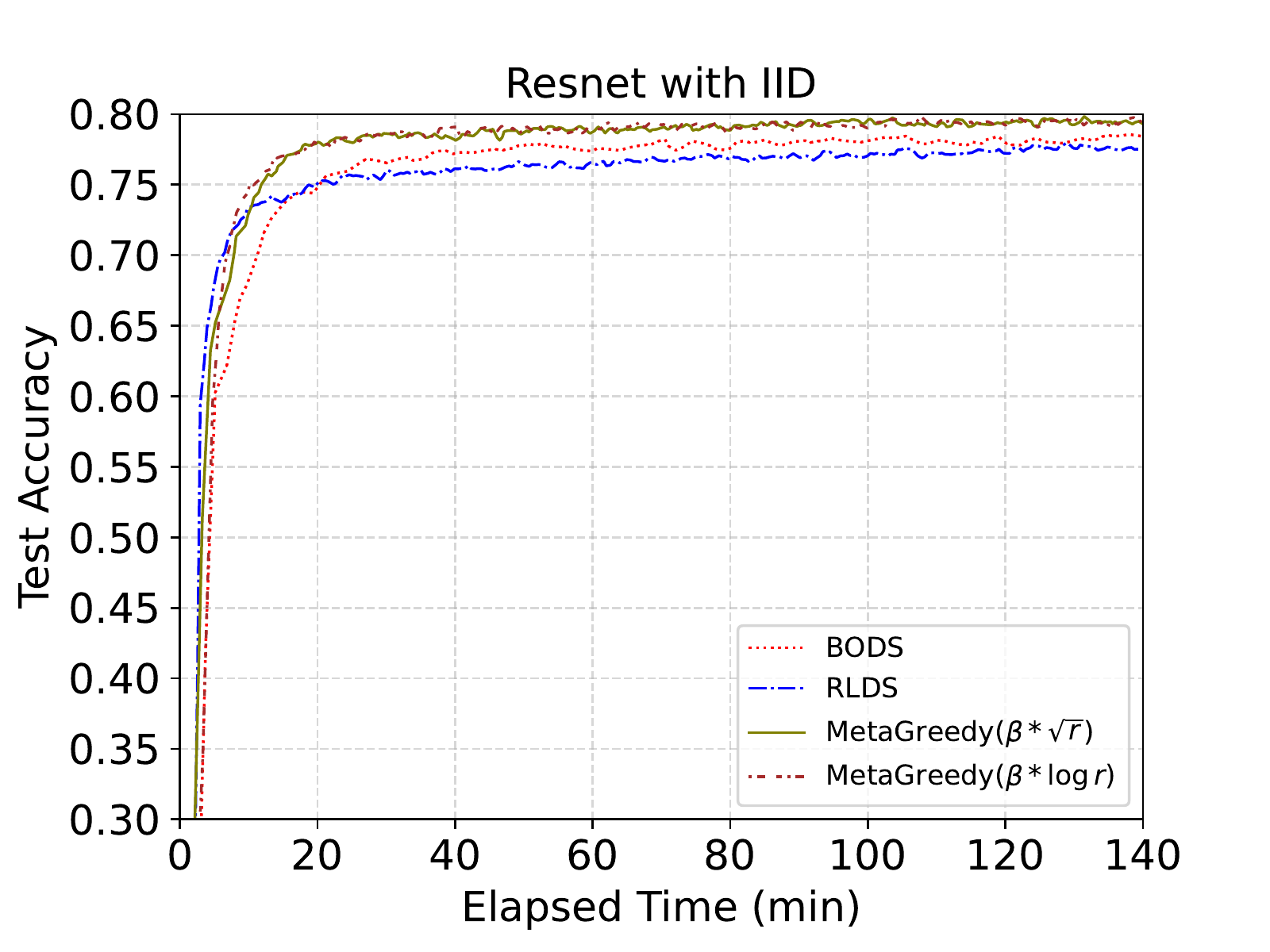}
\vspace{-4mm}
\caption{}
\label{figReiidBeta}
\end{subfigure}
\begin{subfigure}{0.3\linewidth}
\includegraphics[width=\linewidth]{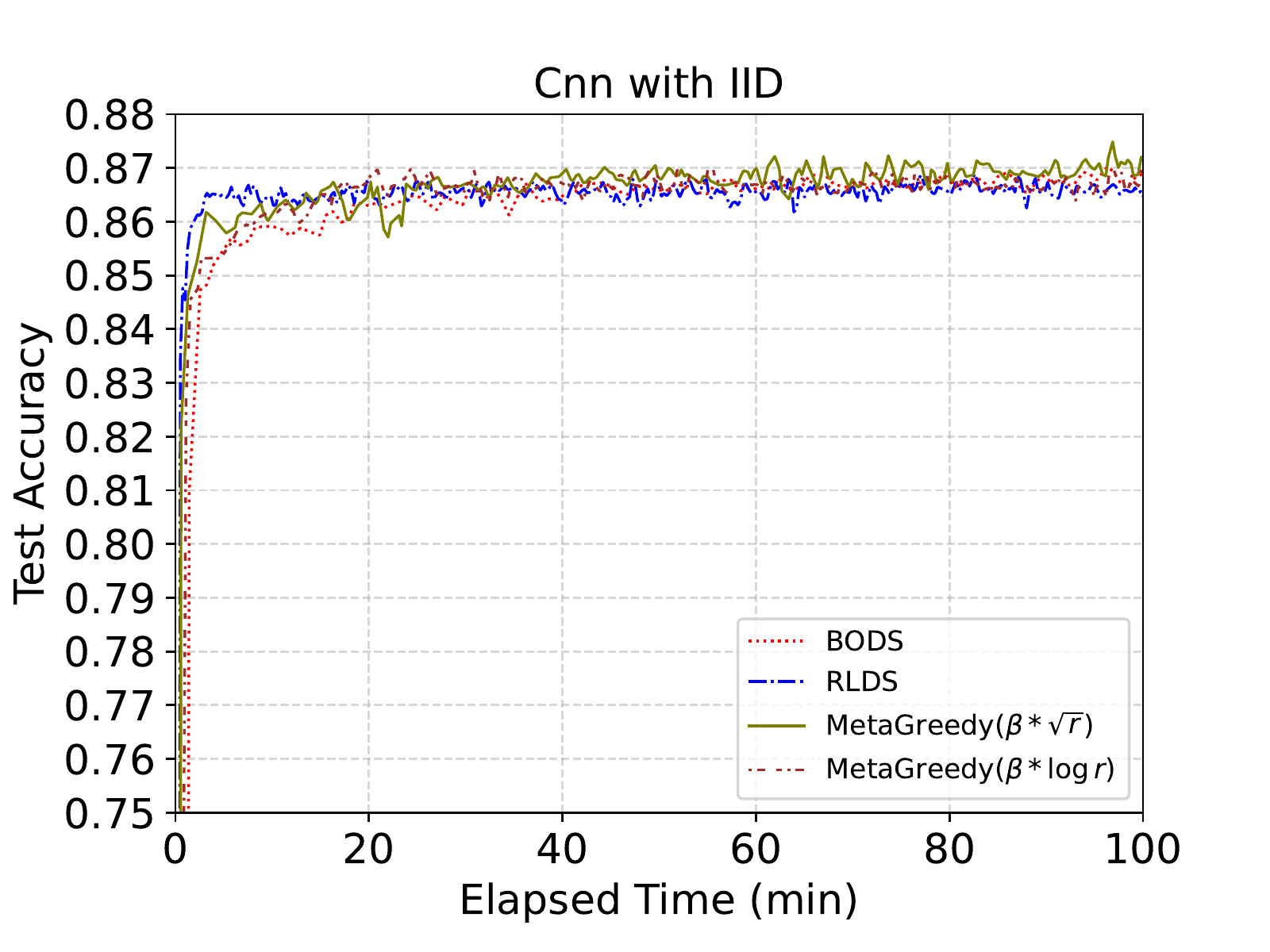}
\vspace{-4mm}
\caption{}
\label{figCBiidBeta}
\end{subfigure}
\begin{subfigure}{0.3\linewidth}
\includegraphics[width=\linewidth]{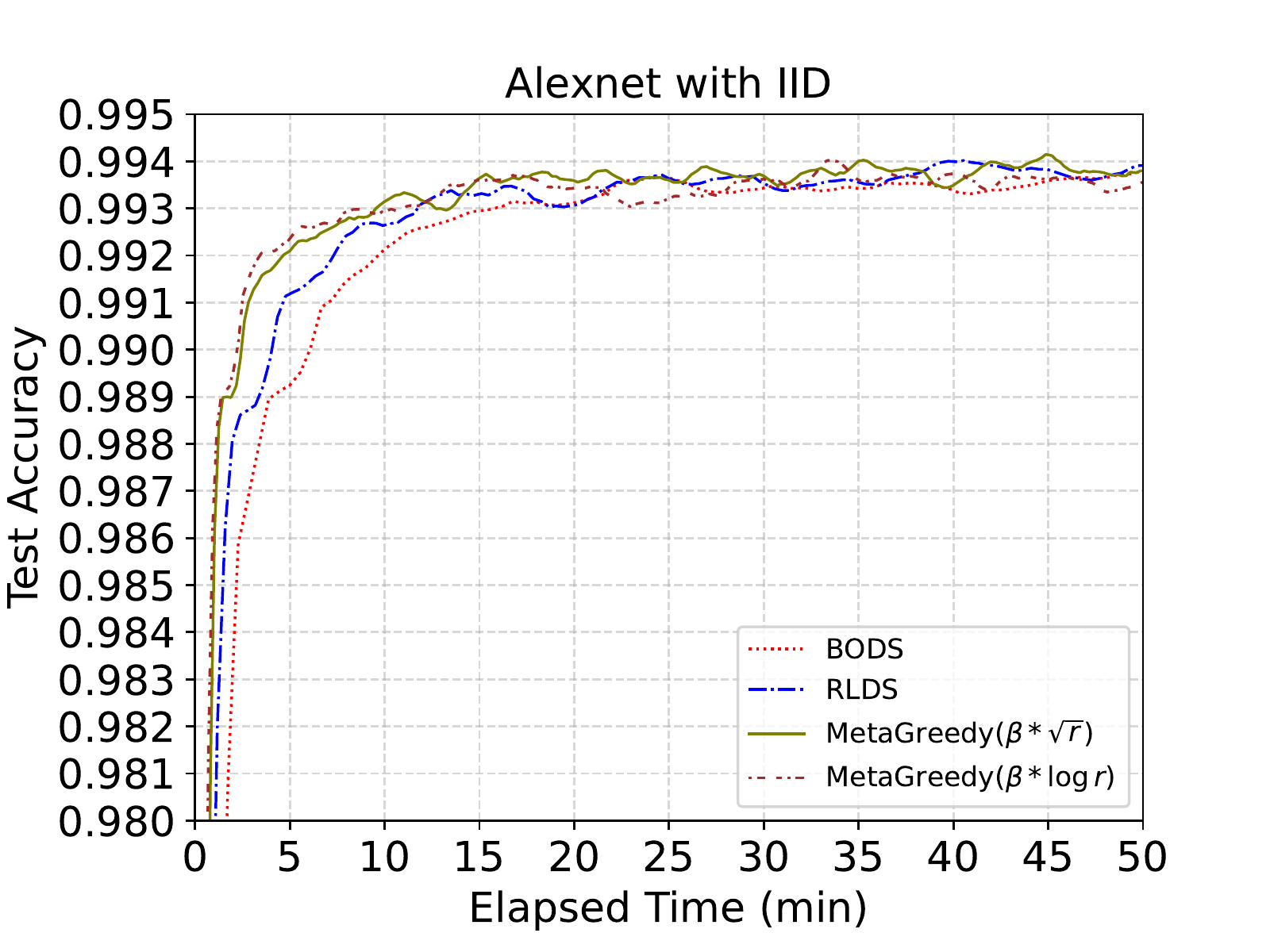}
\vspace{-4mm}
\caption{}
\label{figAliidBeta}
\end{subfigure}
\vspace{-2mm}
\caption{The convergence accuracy of different jobs in Group B changes over time with diverse settings of $\beta$.}
\label{fig:groupBBeta}
\end{figure*}

We take $0 < \eta^m_{r,h} \leq \frac{1}{L}$. As 
$C$ in (\ref{eq:28}) is positive, we have
\begin{align}
     &\mathbb{E}[F^m(\bar{w}^m_{r,h+1})] \notag \\
     \leq &~\mathbb{E}[F^m(\bar{w}^m_{r,h})] -\frac{\eta^m_{r,h}
     }{2}  \mathbb{E}\parallel \nabla F^m(\bar{w}^m_{r,h}) \parallel^2 + \frac{L}{2}{\eta^m}^2 \sum_{k \in V^r_m}p^k_m \sigma^2
    \notag \\
    &+\frac{\eta^m_{r,h}}{2}\mathbb{E}\underbrace{\parallel \nabla F^m(\bar{w}^m_{r,h}) -  \sum_{k \in V^r_m}p^m_{k,r} \nabla F^m_k(w^{m,k}_{r,h})\parallel^2}_{D}
\end{align}
For D we have,
\begin{align}
    &\mathbb{E}\parallel \nabla F^m(\bar{w}^m_{r,h}) - \sum_{k \in V^r_m}p^m_{k,r} \nabla F^m_k(w^{m,k}_{r,h})\parallel^2 \notag \\
    = &~\mathbb{E}\parallel \sum_{k \in V^r_m} p^m_{k,r} \nabla F^m_k(\bar{w}^m_{r,h})
    - \sum_{k \in V^r_m}p^m_{k,r} \nabla F^m_k(w^{m,k}_{r,h})\parallel^2 \notag\\
    = &~\mathbb{E}\parallel \sum_{k \in V^r_m} p^m_{k,r} \bigl(\nabla F^m_k(\bar{w}^m_{r,h})
    -  \nabla F^m_k(w^{m,k}_{r,h})\bigr)\parallel^2 \notag \\
    \leq &~\mathbb{E} [\sum_{k \in V^r_m} p^m_{k,r}  \parallel \nabla F^m_k(\bar{w}^m_{r,h})
    -  \nabla F^m_k(w^{m,k}_{r,h}) \parallel^2] \notag \\
    \leq &~L^2 \mathbb{E}\sum_{k \in V^r_m} p^m_{k,r} 
     \parallel \bar{w}^m_{r,h} - w^{m,k}_{r,h} \parallel^2
\end{align}
where the first inequality results from Jensen's inequality. The second inequality results from Assumption 1. Then we have,
\begin{align}
    &\mathbb{E}\sum_{k \in V^r_m} p^m_{k,r} 
    \parallel w^{m,k}_{r,h} - \bar{w}^m_{r,h} \parallel^2  \notag \\
    = &~\mathbb{E}\sum_{k \in V^r_m} p^m_{k,r} 
    \parallel (w^{m,k}_{r,h} - w^{m,k}_{r,0}) - (\bar{w}^m_{r,h} - w^{m,k}_{r,0}) \parallel^2 \notag \\
    \leq &~\mathbb{E}\sum_{k \in V^r_m} p^m_{k,r}
    \parallel w^{m,k}_{r,h} - w^{m,k}_{r,0} \parallel^2 \notag \\
    = &~\mathbb{E}\sum_{k \in V^r_m} p^m_{k,r}
    \parallel \sum^{h-1}_{h'=0} {\eta^m_{r,h'}} g^{m,k}_{r,h'} \parallel^2 \notag \\
    = &~\mathbb{E}\sum_{k \in V^r_m} p^m_{k,r}
    \parallel \sum^{h-1}_{h'=0} {\eta^m_{r,h'}} \nabla f^m_k(w^{m,k}_{r,h'}) \parallel^2 \notag \\
    \leq &~\sum_{k \in V^r_m} p^m_{k,r} (H-1) \sum^{h-1}_{h'=0} {\eta^m_{r,h'}}^2 \mathbb{E} \parallel \nabla f^m_k(w^{m,k}_{r,h}) \parallel^2 \notag \\
    \leq &~\sum_{k \in V^r_m} p^m_{k,r} (H-1)^2 {\eta^m_{r,0}}^2 G^2 \notag \\
    \leq &~Q^2 (H-1)^2 {\eta^m_{r,h}}^2\sum_{k \in V^r_m} p^m_{k,r} G^2
\end{align}
In the first inequality, we use $\mathbb{E}||X - \mathbb{E}X||^2 \le{\mathbb{E}|| X ||^2}$ where $X = w^{m,k}_{r,h}- w^{m,k}_{r,0}$. In the second inequality from, We use the following steps:
\begin{align}
    Var(x_k) = \mathbb{E}({x_k}^2) - (\mathbb{E}(x_k))^2 &\ge 0 \notag \\
    \frac{1}{h} \sum^{h-1}_{h'=0} \parallel x_k\parallel^2 - \parallel \frac{1}{h} \sum^{h-1}_{h'=0} x_k\parallel^2 &\ge 0 \notag \\
    \frac{1}{h^2} \parallel \sum^{h-1}_{h'=0} x_k \parallel^2 \le \frac{1}{h} \sum^{h-1}_{h'=0} \parallel x_k \parallel^2 \notag \\
    \parallel \sum^{h-1}_{h'=0} x_k \parallel^2 \le h\sum^{h-1}_{h'=0} \parallel x_k \parallel^2 \notag
\end{align}
we take $\eta^m_{r, h'} \leq \eta^m_{r, 0} = \eta^m_{r-1, H}$ with $0 \leq h' \leq H-1$ and $0 \leq h \leq H-1$. The third inequality results from Assumption 4. The last inequality, we assume that $\eta^m_{r, 0} \leq Q \eta^m_{r, h'} $. Therefore, we have
\begin{align}
    &\mathbb{E}\parallel \nabla F^m(\bar{w}^m_{r,h}) - \sum_{k \in V^r_m}p^m_{k,r} \nabla F^m_k(w^{m,k}_{r,h})\parallel^2 \notag \\
    \leq &~L^2 Q^2 (H-1)^2 {\eta^m_{r,h}}^2\sum_{k \in V^r_m} p^m_{k,r} G^2
\end{align}
Then, we have
\begin{align} \label{eq: ineq}
     \mathbb{E}[F^m(\bar{w}^m_{r,h+1})] &\leq \mathbb{E}[F^m(\bar{w}^m_{r,h})] -\frac{\eta^m_{r,h}
     }{2}  \mathbb{E}\parallel \nabla F^m(\bar{w}^m_{r,h}) \parallel^2 \notag \\
    & + \frac{L}{2}{\eta^m_{r,h}}^2 \sum_{k \in V^r_m}p^m_{k,r} \sigma^2
    \notag \\
    &+\frac{L^2}{2} Q^2 (H-1)^2 {\eta^m_{r,h}}^3\sum_{k \in V^r_m} p^m_{k,r} G^2
\end{align}
Divide (\ref{eq: ineq}) both sides by $\frac{{\eta^m_{r,h}}}{2}$ and rearrange it yields,
\begin{align}
    \mathbb{E} \parallel \nabla F^m(\bar{w}^m_{r,h}) \parallel^2 &\leq
    \frac{2}{{\eta^m_{r,h}}} (\mathbb{E}[F^m(\bar{w}^m_{r,h})] - \mathbb{E}[F^m(\bar{w}^m_{r,h+1})]) \notag \\
    &+ L {\eta^m_{r,h}} \sum_{k \in V^r_m}p^m_{k,r} \sigma^2 \notag \\
    &+ L^2 Q^2 (H-1)^2 {\eta^m_{r,h}}^2 \sum_{k \in V^r_m} p^m_{k,r} G^2
\end{align}
Summing from $r = 1, h = 1$ to $r = R, h = H$ and dividing both sides by $RH$ yields 
\begin{align}
    &\frac{1}{RH} \sum^{R}_{r=1} \sum^H_{h=1} \mathbb{E} \parallel \nabla F^m(\bar{w}^m_{r,h}) \parallel^2 \notag \\
    \leq &~\frac{2}{{\eta^m_{r,h}}RH}(F^m(\bar{w}^m_{1,1}) - \mathbb{E}[F^m(\bar{w}^m_{R,H})]) \notag \\
    &+  L {\eta^m_{r,h}} \sum_{k \in V^r_m}p^m_{k,r} \sigma^2 + L^2 Q^2 (H-1)^2 {\eta^m_{r,h}}^2 \sum_{k \in V^r_m} p^m_{k,r} G^2 \notag \\
    \leq &~\frac{2}{{\eta^m_{r,h}}RH}(F^m(\bar{w}^m_{1,1}) - \mathbb{E}[F^m(\bar{w}^m_{R,H})]) + L {\eta^m_{r,h}} \sigma^2\notag \\
    &+ L^2 Q^2 (H-1)^2 {\eta^m_{r,h}}^2 G^2 \notag \\
    \leq &~\frac{2}{{\eta^m_{r,h}}RH}(F^m(\bar{w}^m_{1,1}) - {F^m}^*) + L^2 Q^2 (H-1)^2 {\eta^m_{r,h}}^2 G^2 \notag \\
    &+ L {\eta^m_{r,h}} \sigma^2
\end{align}
We choose $\eta^m_{r} = \frac{1}{L\sqrt{RH}}$. Then we have
\begin{align}
    &\frac{1}{RH} \sum^{R}_{r=1} \sum^H_{h=1} \mathbb{E} \parallel \nabla F^m(\bar{w}^m_{r,h}) \parallel^2 \notag \\
    \leq &~\frac{2L}{\sqrt{RH}}(F^m(\bar{w}^m_{1,1}) - {F^m}^*) + \frac{1}{\sqrt{RH}} \sigma^2 + \frac{Q^2}{RH} (H-1)^2 G^2 
\end{align}
If we further choose $Q \leq (RH)^{\frac{1}{4}}$, we have
\begin{align}
    &\frac{1}{RH} \sum^{R}_{r=1} \sum^H_{h=1} \mathbb{E} \parallel \nabla F^m(\bar{w}^m_{r,h}) \parallel^2 \notag \\
    \leq &~\frac{2L}{\sqrt{RH}}(F^m(\bar{w}^m_{1,1}) - {F^m}^*) + \frac{1}{\sqrt{RH}} \sigma^2\notag \\
    &+ \frac{1}{\sqrt{RH}} (H-1)^2 G^2 \notag \\
    = &~\mathcal{O}(\frac{1}{\sqrt{RH}})
\end{align}

\begin{table*}[t!]
\centering
\caption{The time to achieve target accuracy for divers models and methods. The ``()'' after the model represents the target accuracy. ``/'' represents that the training cannot achieve the target accuracy with the corresponding scheduling method while the ``()'' represents the highest accuracy during the training process.}
\label{tab:realTime}
\begin{tabular}{cccccccc}
\hline
\multicolumn{8}{c}{Time (s)}                                                                                  \\ \hline
              & Random    & Genetic & FedCS     & Greedy    & BODS & RLDS   & Meta-Greedy      \\
CNN (0.928)    & / (0.908) & 1997.71 & 1909.72   & / (0.900) & 1369.10  & / (0.915)        & \textbf{1351.37} \\
VGG (0.870)    & 2414.70   & 1714.55 & / (0.848) & / (0.848) & 1553.10  & 2591.16          & \textbf{1493.06} \\
ResNet (0.680) & 2533.00   & 2839.45 & 2474.19   & 2876.11   & 2209.64  & 2446.88          & \textbf{2198.79} \\
ResNet (0.808) & 5552.51   & 5117.59 & / (0.776) & / (0.806) & 4553.23  & \textbf{3539.88} & 4983.88          \\ \hline
\end{tabular}
\end{table*}

\subsection*{Experimental Results}

\subsubsection*{Experimentation with Real Mobile Devices}

We carried out an experimentation with 20 real devices (mobile devices) and a parameter server on the Baidu AI Cloud \cite{BaiduCloud}. The devices are summarized in Table \ref{tab:devices}. We carry out the experimentation with a synthetic CNN model of 4 layers, a VGG model of 6 layers, and a ResNet model of 13 layers. 

The time to achieve target accuracy is shown in Table \ref{tab:realTime}. From the table, we can find that Meta-Greedy corresponds to the shortest time (up to 42.4\% shorter than others) to achieve the target accuracy of simple models, i.e. CNN and VGG, while BODS outperforms baseline methods (up to 35.7\%). With a complex model, i.e., ResNet, Meta-Greedy corresponds to excellent efficient training (up to 23.5\% compared with others) for a low target accuracy (0.680) while RLDS significantly outperforms baseline methods (up to 36.2\%) for a high target accuracy (0.808). This result implies that RLDS favors complex models while BODS favors simple models. 

\begin{table}[t!]
\centering
\caption{Summary of devices. }
\label{tab:devices}
\begin{tabular}{@{}ll@{}}
\toprule
Device type              & RAM size  \\ \midrule
HUAWEI Mate20    & 6G  \\
OPPO A72         & 8G  \\
Galaxy M11       & 8G  \\
Redmi Note9 Pro  & 8G  \\
HUAWEI P40 Pro   & 8G  \\
Realme GT2       & 8G  \\
Smartisan R2     & 8G  \\
HUAWEI nova2     & 4G  \\
Redmi K20        & 6G  \\
HUAWEI MatePad       & 8G  \\
HONOR 60          & 8G  \\
HUAWEI M6        & 4G  \\
Galaxy 20U       & 8G  \\
HONOR V10        & 4G  \\
Redmi Note11     & 8G  \\
HUAWEI nova5i    & 6G  \\
Redmi K50 Pro     & 12G \\
Galaxy S21       & 8G  \\
HONOR Play4      & 8G  \\
HUAWEI MatPad       & 6G  \\ \bottomrule
\end{tabular}
\end{table}

\subsubsection*{Impact of $\Omega(r)$}

As shown in Figures \ref{fig:groupABeta} and \ref{fig:groupBBeta}, Meta-Greedy with $\Omega(r)=\sqrt{r}$ outperforms other methods, i.e., $\Omega(r)=r$ and $\Omega(r)=\log r$, in terms of both convergence accuracy and convergence speed. 

\subsubsection*{RLDS \& BODS with Simple and Complex Jobs}

\begin{figure*}[htbp]
\centering
\begin{subfigure}{0.3\linewidth}
\includegraphics[width=\linewidth]{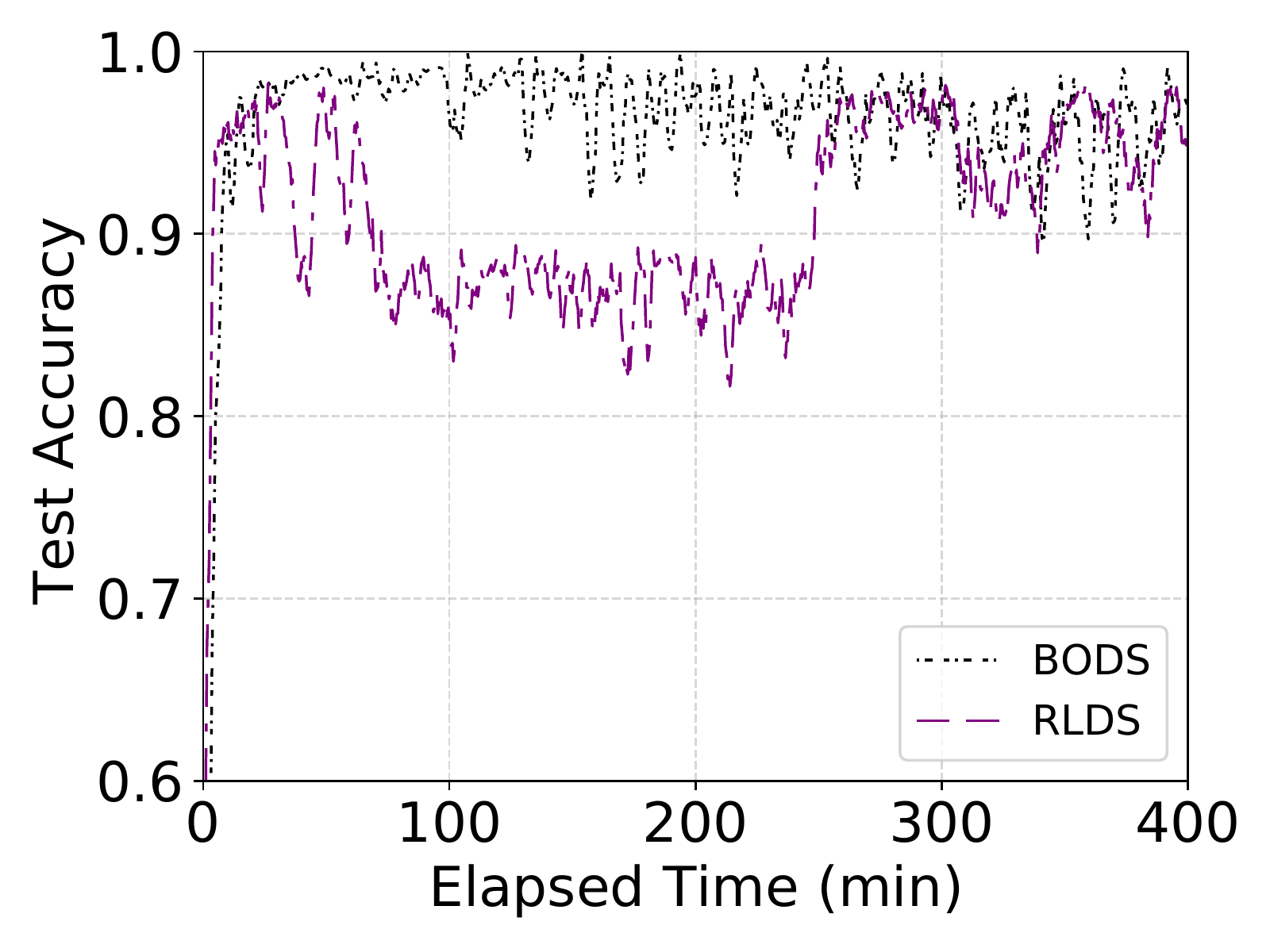}
\vspace{-4mm}
\caption{CNN}
\label{fig:groupR1Exp1CNNEMnistDigitalNIID}
\end{subfigure}\\
\begin{subfigure}{0.3\linewidth}
\includegraphics[width=\linewidth]{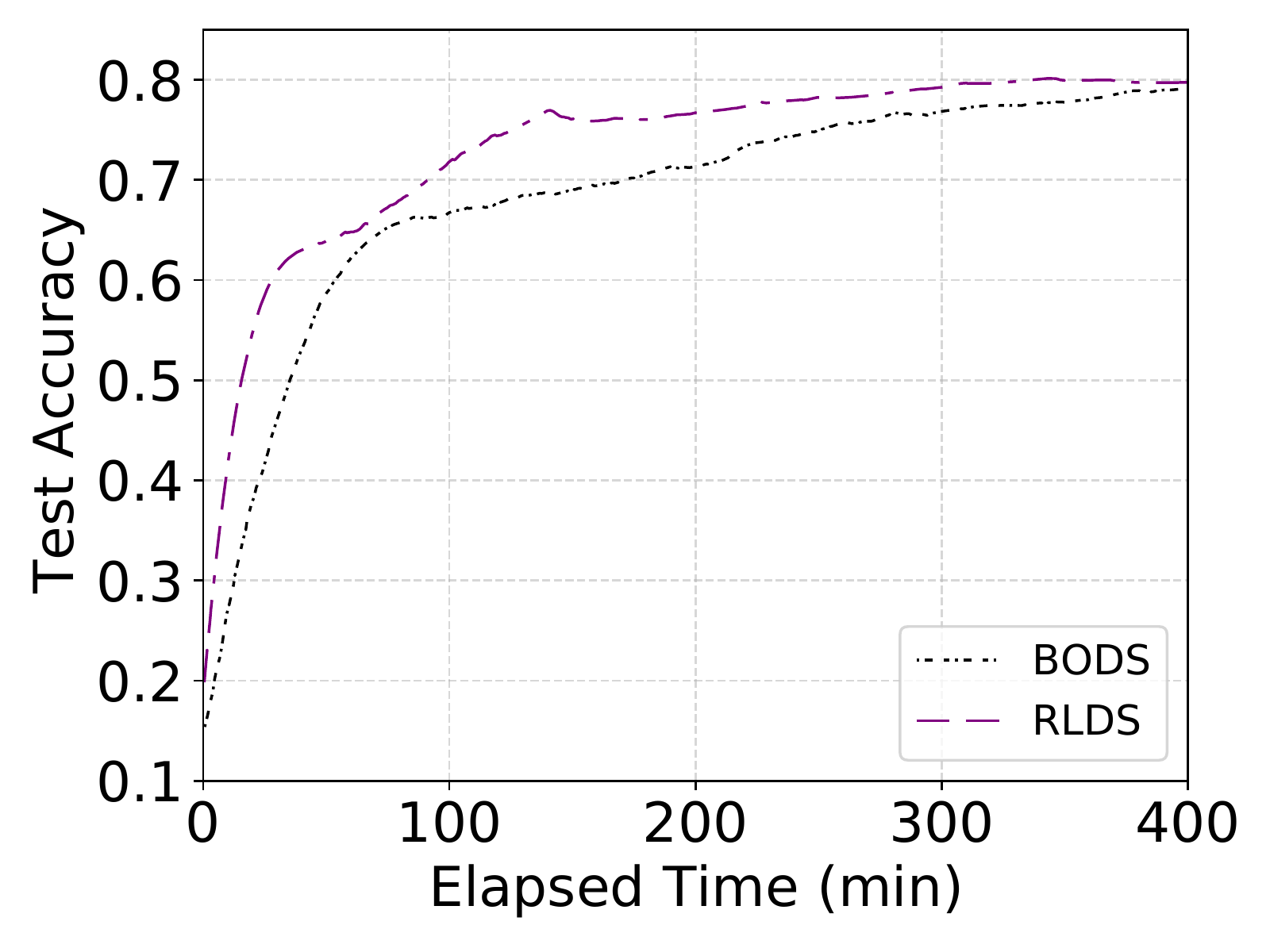}
\vspace{-4mm}
\caption{VGG}
\label{fig:groupR1Exp1VGGCifar10NIID}
\end{subfigure}
\begin{subfigure}{0.3\linewidth}
\includegraphics[width=\linewidth]{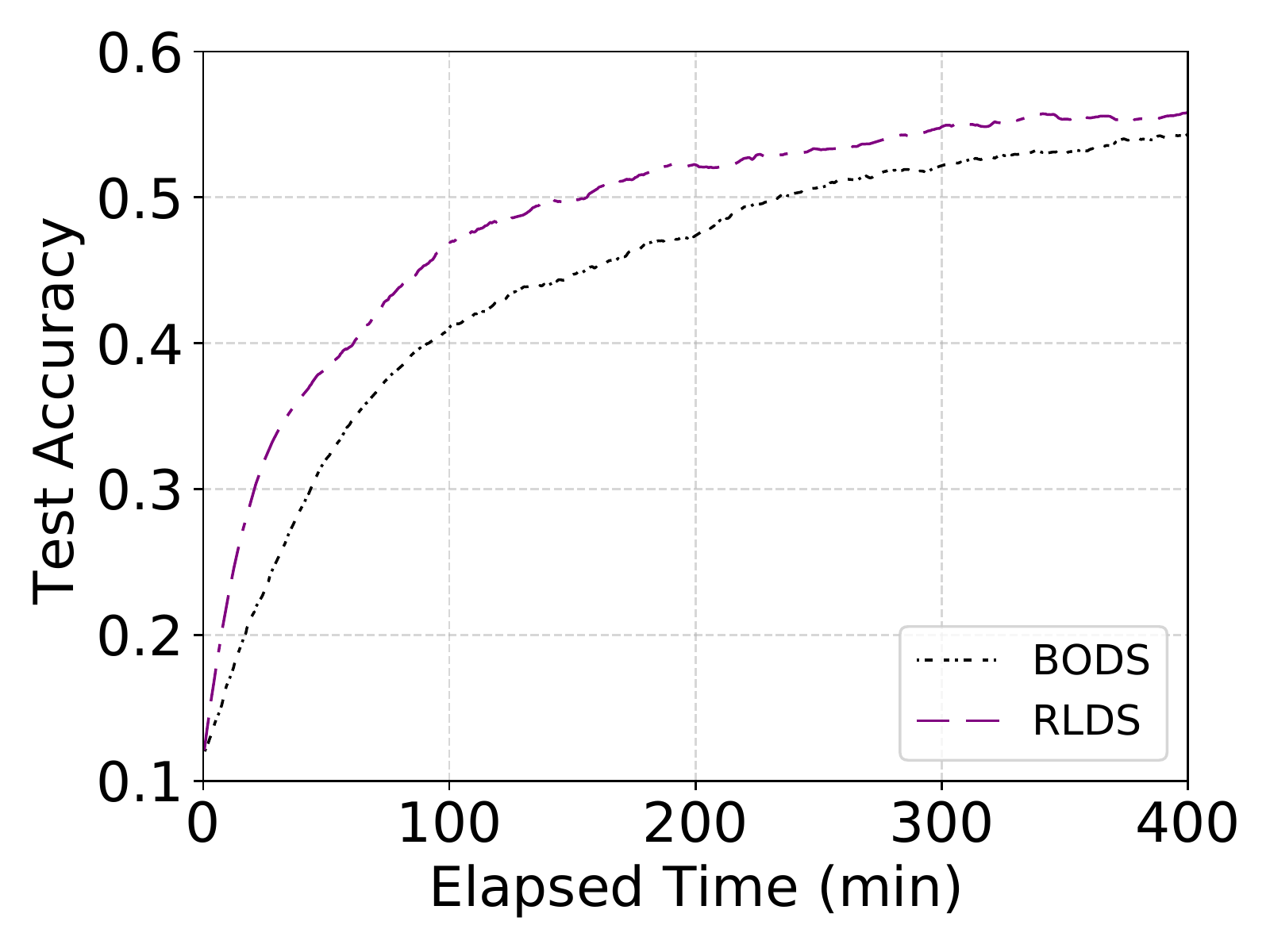}
\vspace{-4mm}
\caption{ResNet}
\label{fig:groupR1Exp1ResNetCifar10NIID}
\end{subfigure}
\vspace{-2mm}
\caption{The accuracy of different jobs (CNN, VGG, and ResNet) with non-IID data.}
\label{fig:groupR1Exp1}
\end{figure*}

\begin{figure*}[htbp]
\centering
\begin{subfigure}{0.3\linewidth}
\includegraphics[width=\linewidth]{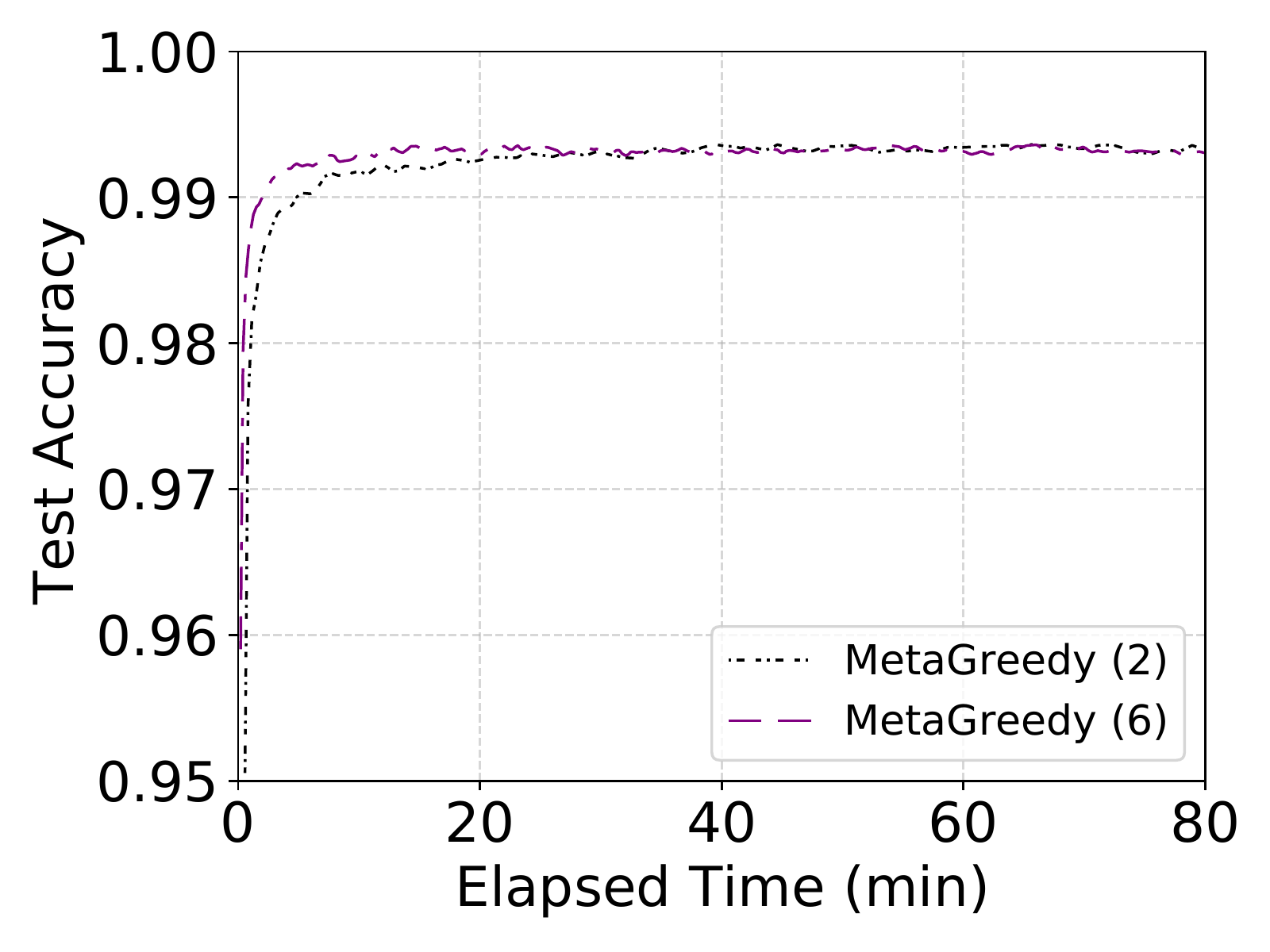}
\vspace{-4mm}
\caption{LeNet with IID data}
\label{fig:groupR1Exp2LeNetIID}
\end{subfigure}
\begin{subfigure}{0.3\linewidth}
\includegraphics[width=\linewidth]{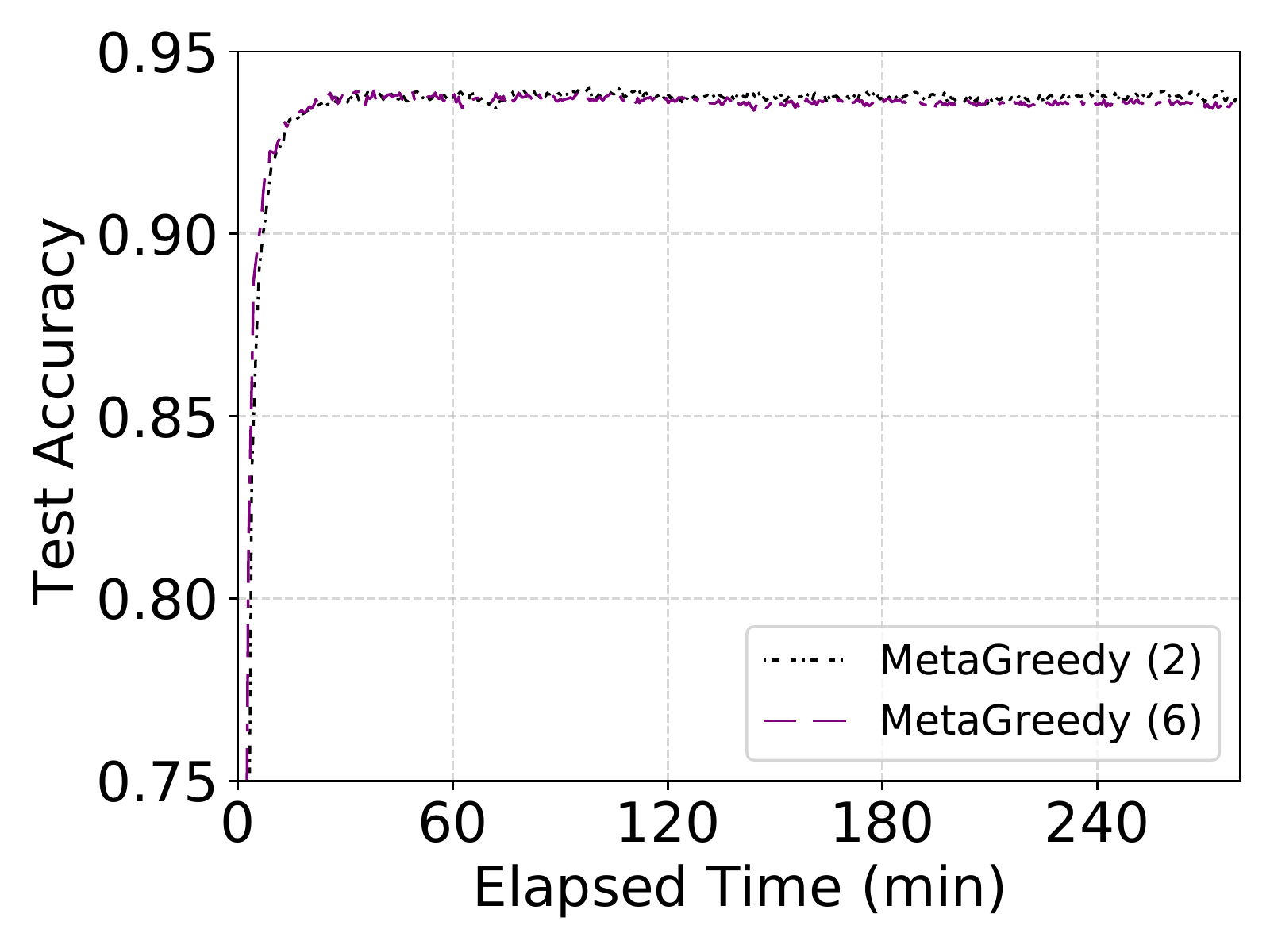}
\vspace{-4mm}
\caption{CNN with IID data}
\label{fig:groupR1Exp2CNNIID}
\end{subfigure}
\begin{subfigure}{0.3\linewidth}
\includegraphics[width=\linewidth]{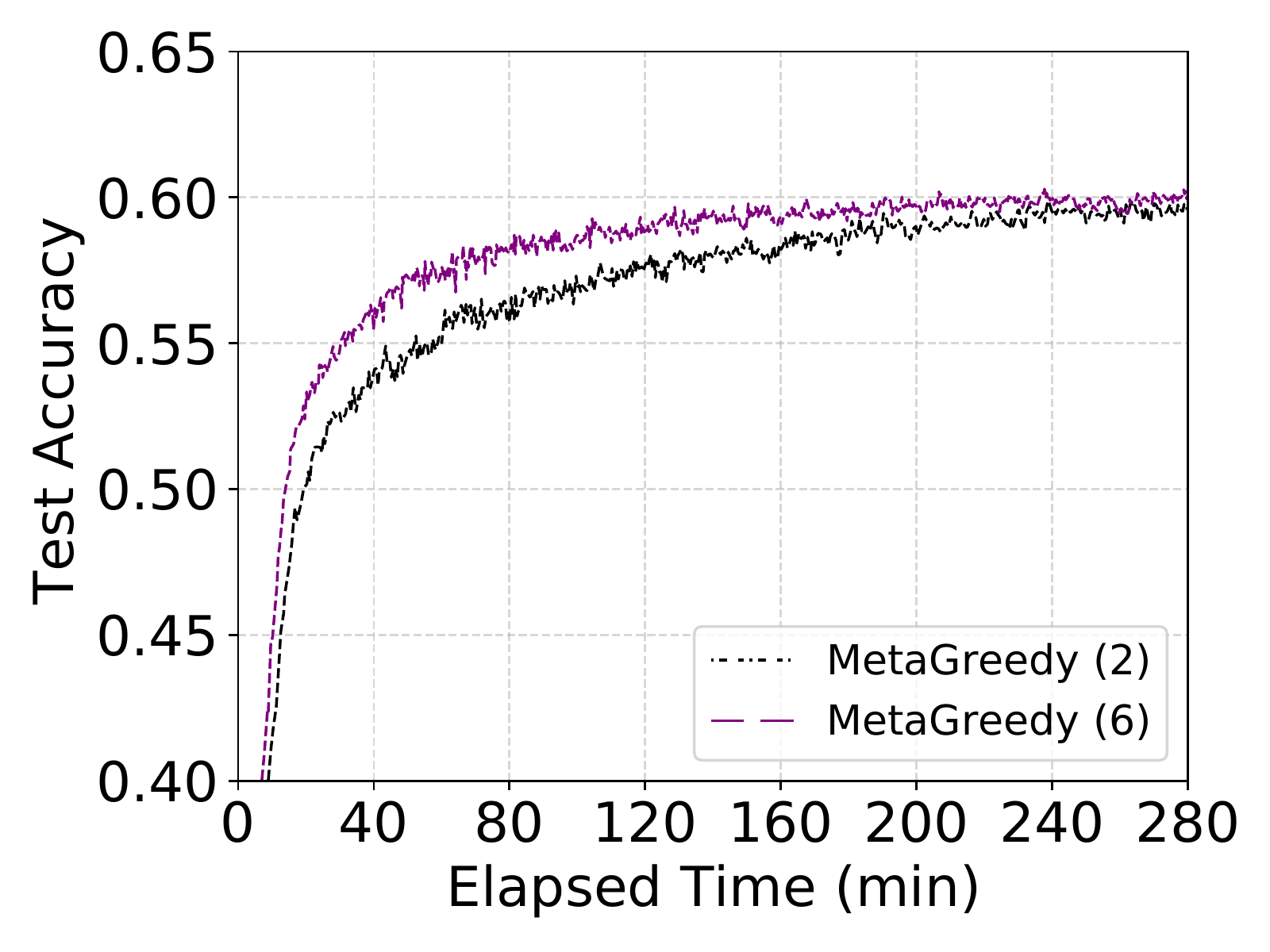}
\vspace{-4mm}
\caption{VGG with IID data}
\label{fig:groupR1Exp2VGGIID}
\end{subfigure}
\begin{subfigure}{0.3\linewidth}
\includegraphics[width=\linewidth]{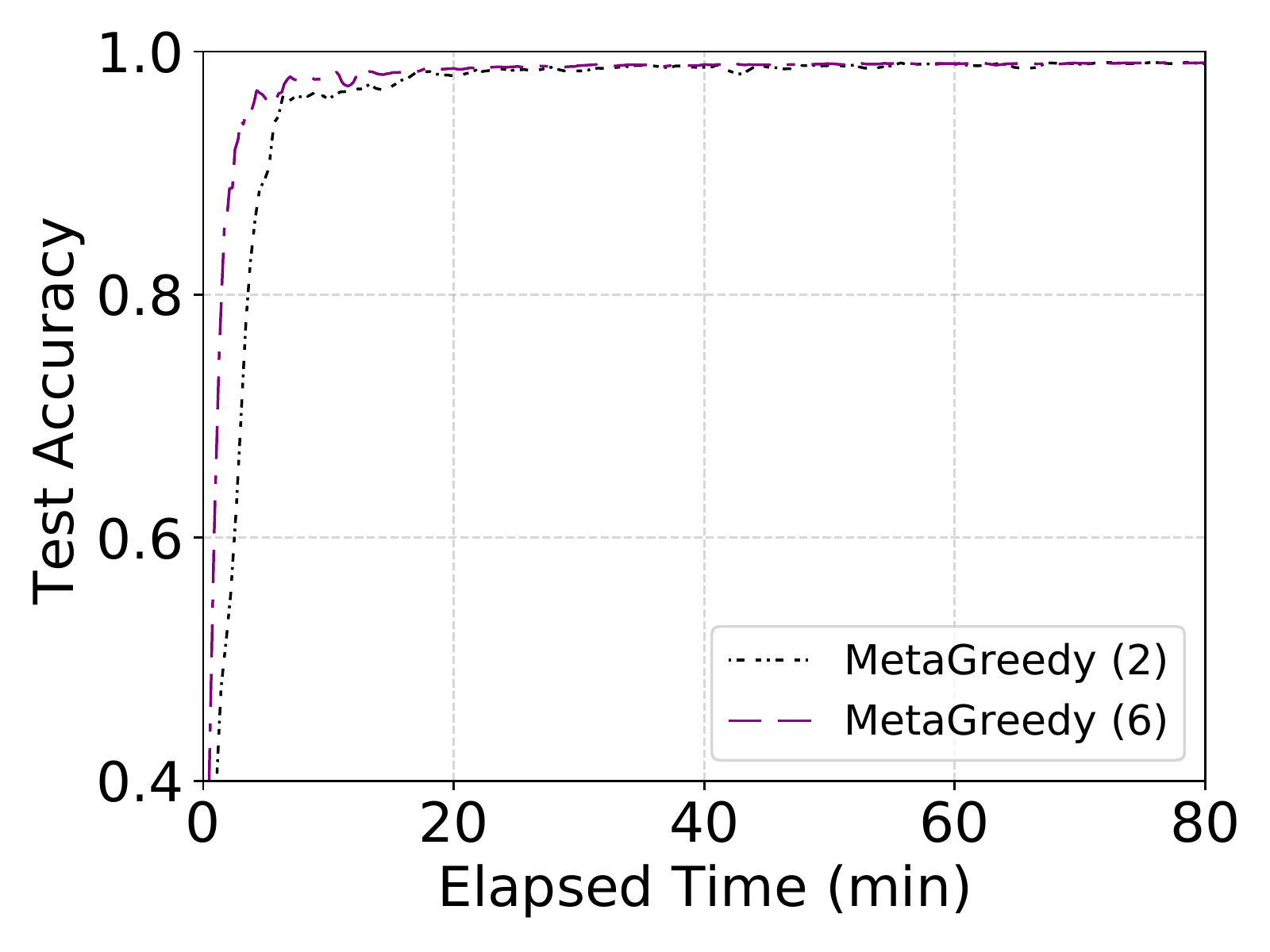}
\vspace{-4mm}
\caption{LeNet with non-IID data}
\label{fig:groupR1Exp2LeNetNIID}
\end{subfigure}
\begin{subfigure}{0.3\linewidth}
\includegraphics[width=\linewidth]{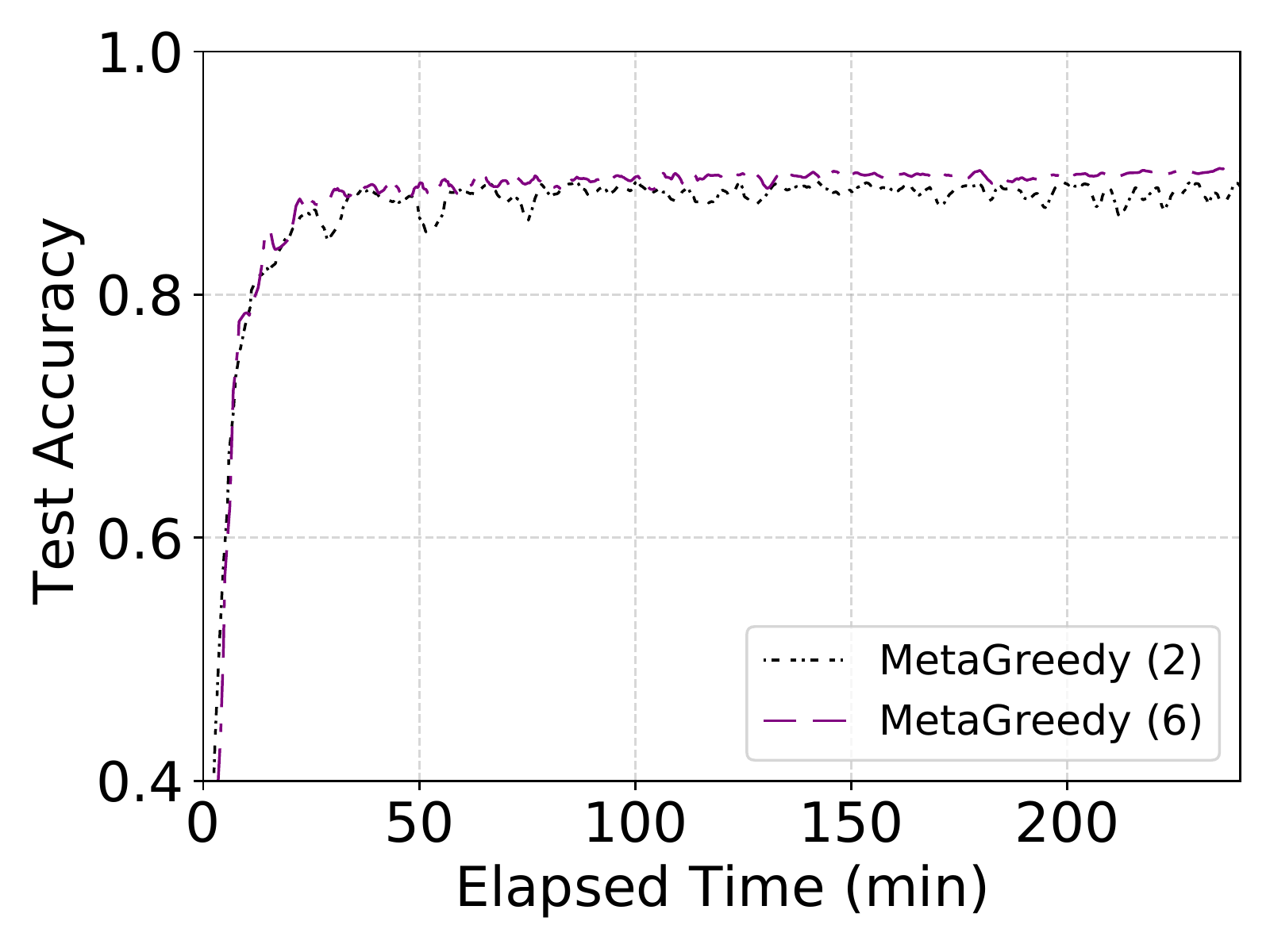}
\vspace{-4mm}
\caption{CNN with non-IID data}
\label{fig:groupR1Exp2CNNNIID}
\end{subfigure}
\begin{subfigure}{0.3\linewidth}
\includegraphics[width=\linewidth]{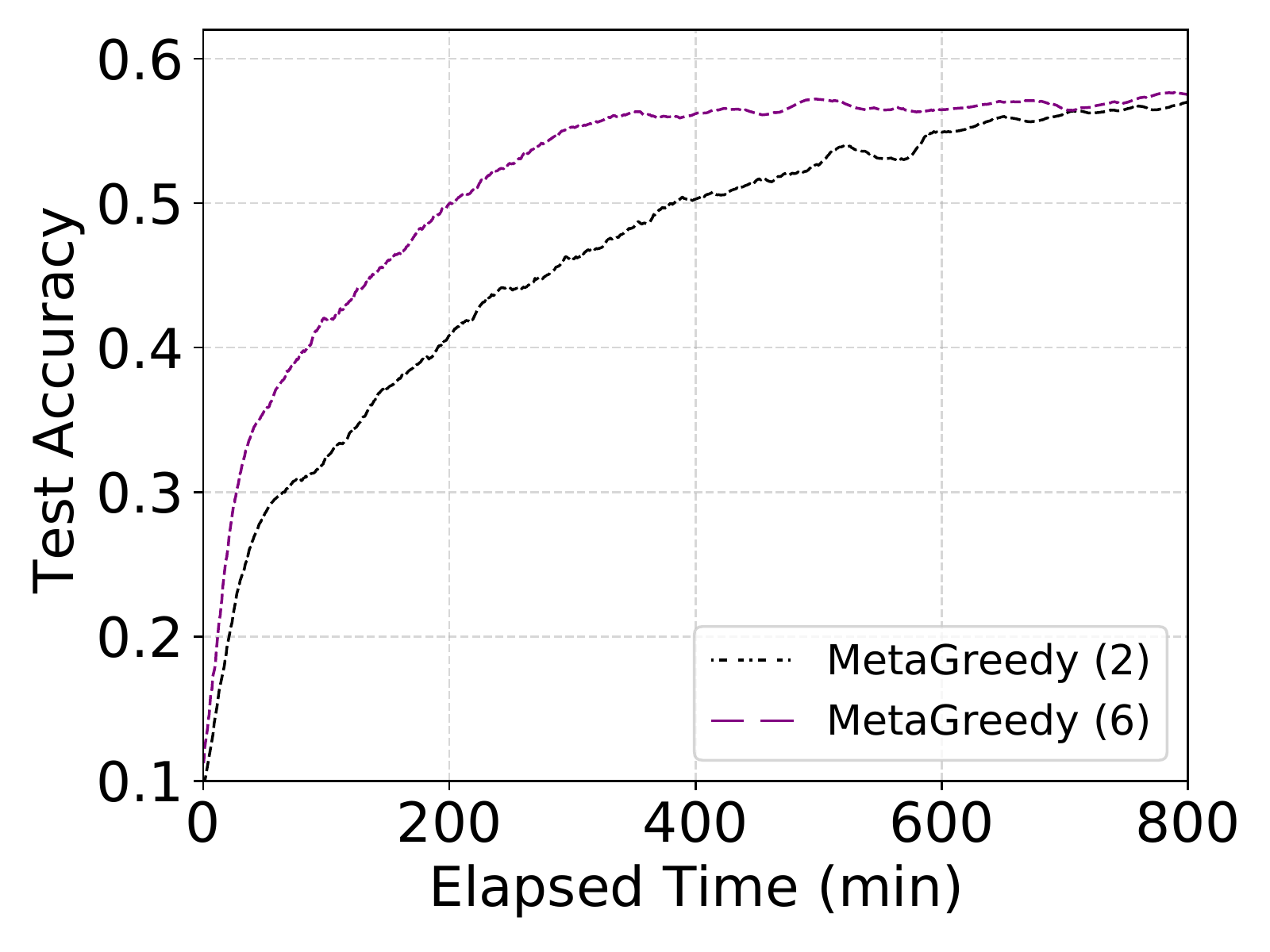}
\vspace{-4mm}
\caption{VGG with non-IID data}
\label{fig:groupR1Exp2VGGNIID}
\end{subfigure}
\vspace{-2mm}
\caption{The accuracy of different jobs in Group A with MetaGreedy (2) and MetaGreedy (6). MetaGreedy (2) represents Meta-Greedy with 2 methods and MetaGreedy (6) represents Meta-Greedy with 6 methods.}
\label{fig:groupR1Exp2}
\end{figure*}

\begin{figure*}[htbp]
\centering
\begin{subfigure}{0.45\linewidth}
\includegraphics[width=\linewidth]{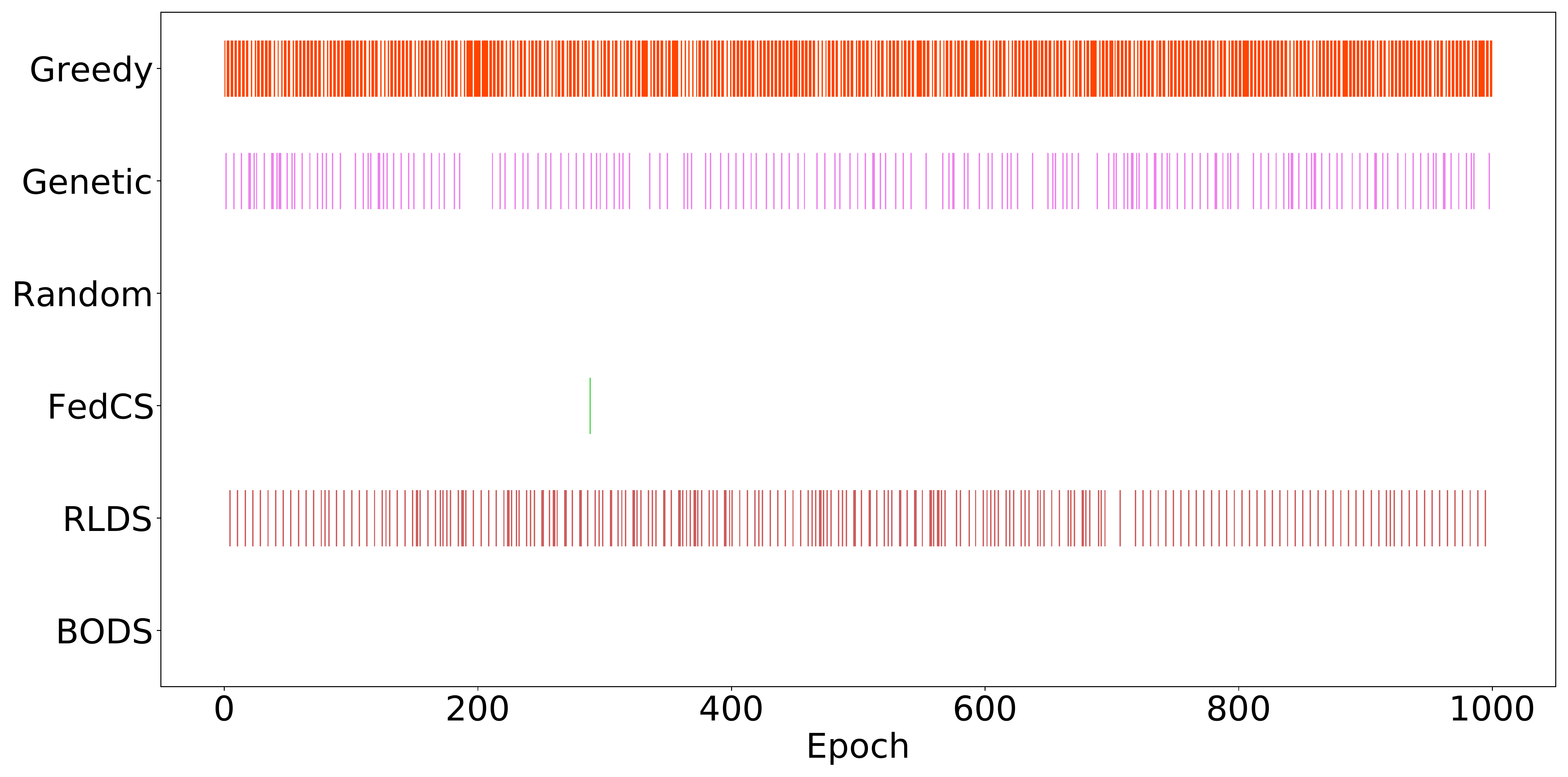}
\vspace{-4mm}
\caption{LeNet with IID data}
\label{fig:groupR1Exp3LeNetIID}
\end{subfigure}
\begin{subfigure}{0.45\linewidth}
\includegraphics[width=\linewidth]{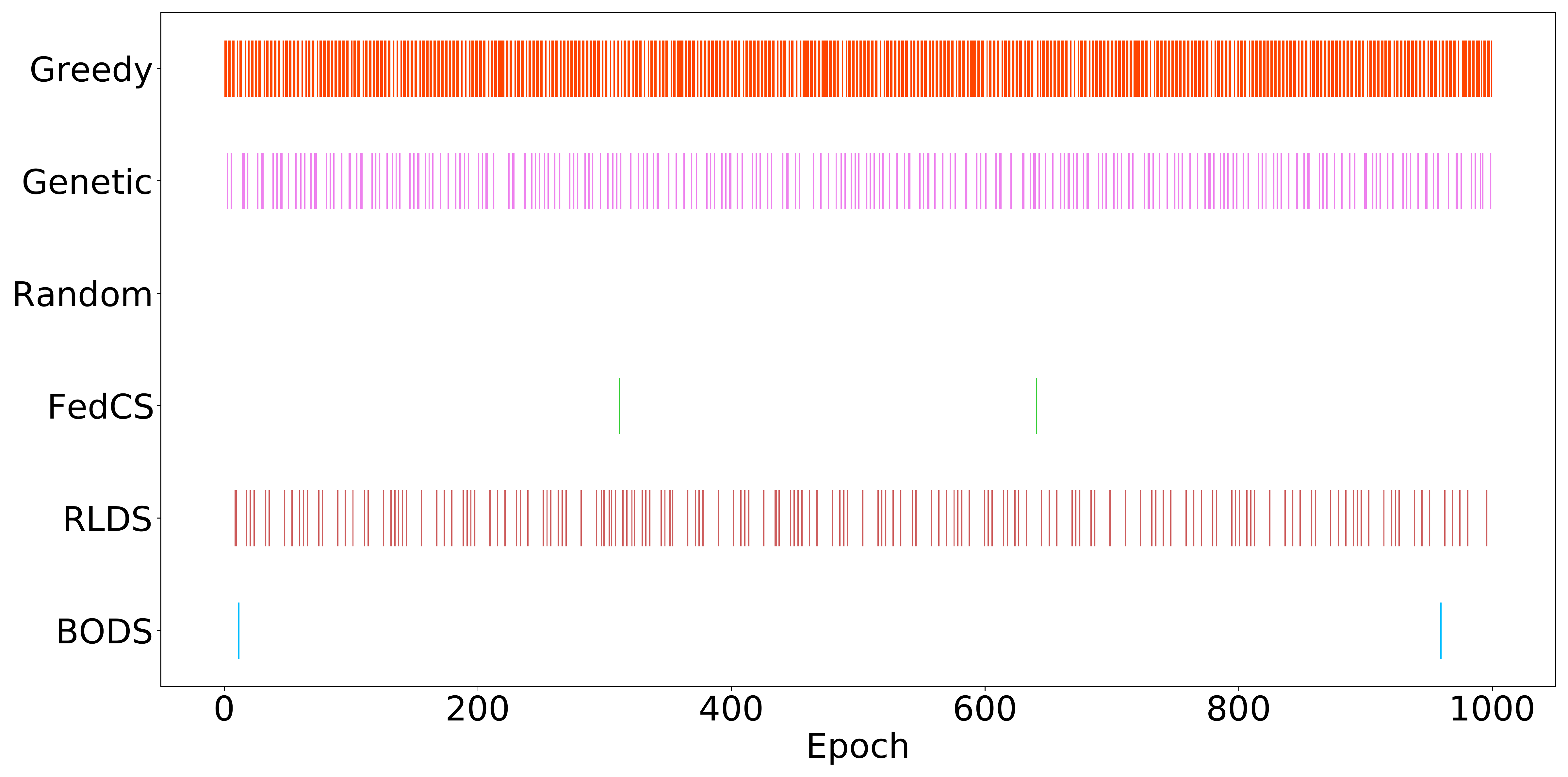}
\vspace{-4mm}
\caption{LeNet with non-IID data}
\label{fig:groupR1Exp3LeNetNIID}
\end{subfigure}
\begin{subfigure}{0.45\linewidth}
\includegraphics[width=\linewidth]{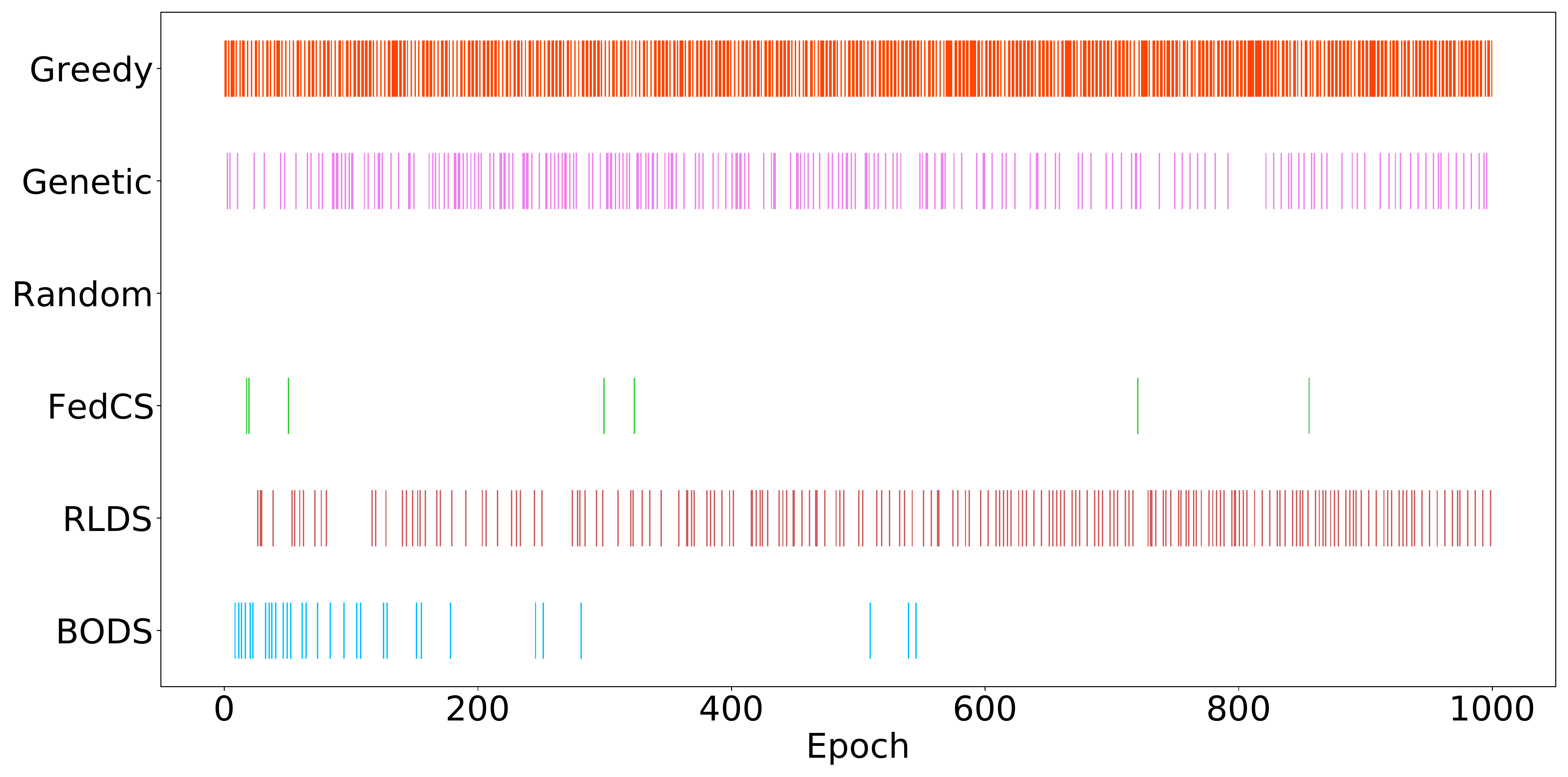}
\vspace{-4mm}
\caption{CNN with IID data}
\label{fig:groupR1Exp3CNNIID}
\end{subfigure}
\begin{subfigure}{0.45\linewidth}
\includegraphics[width=\linewidth]{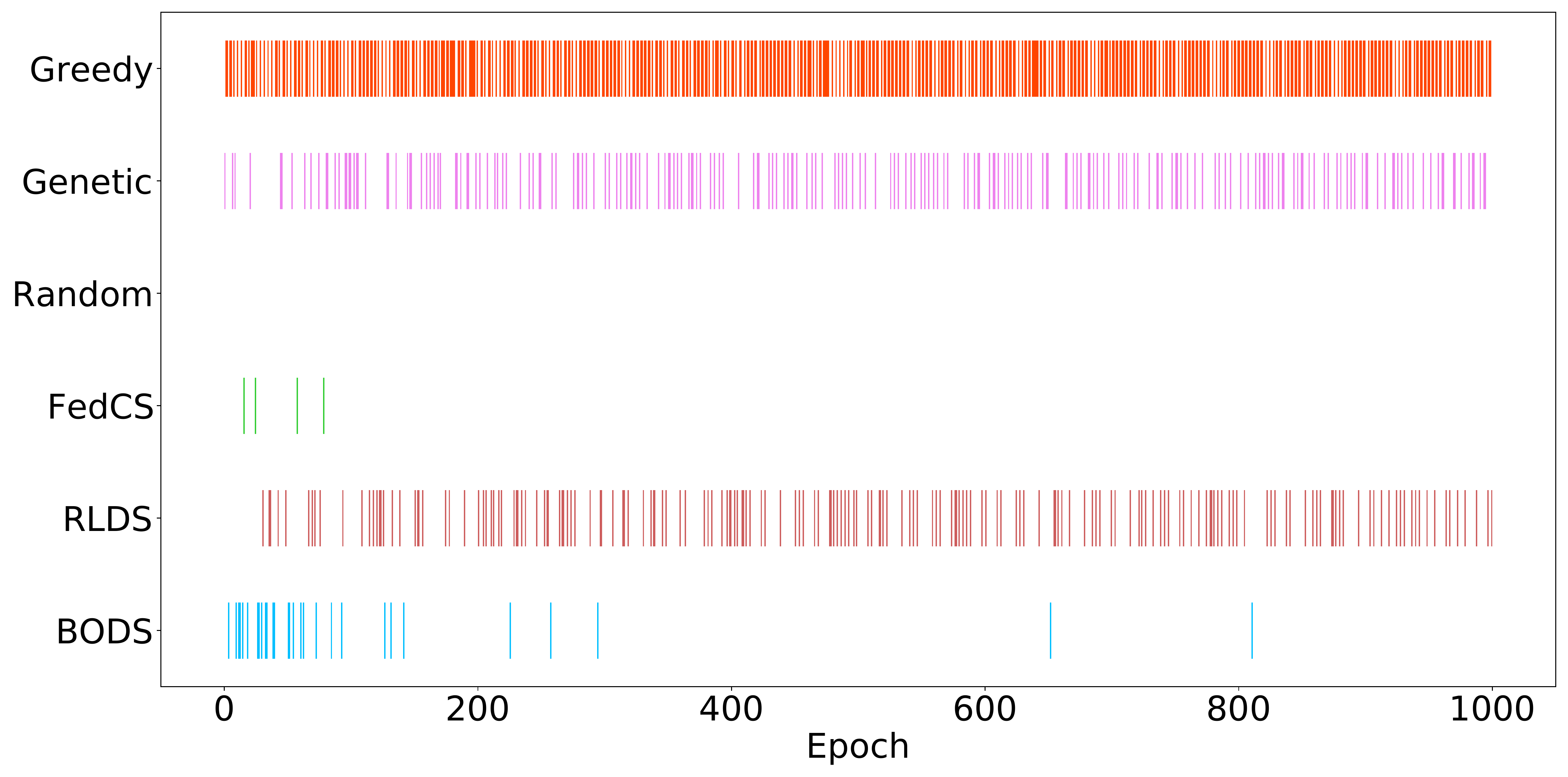}
\vspace{-4mm}
\caption{CNN with non-IID data}
\label{fig:groupR1Exp3CNNNIID}
\end{subfigure}
\begin{subfigure}{0.45\linewidth}
\includegraphics[width=\linewidth]{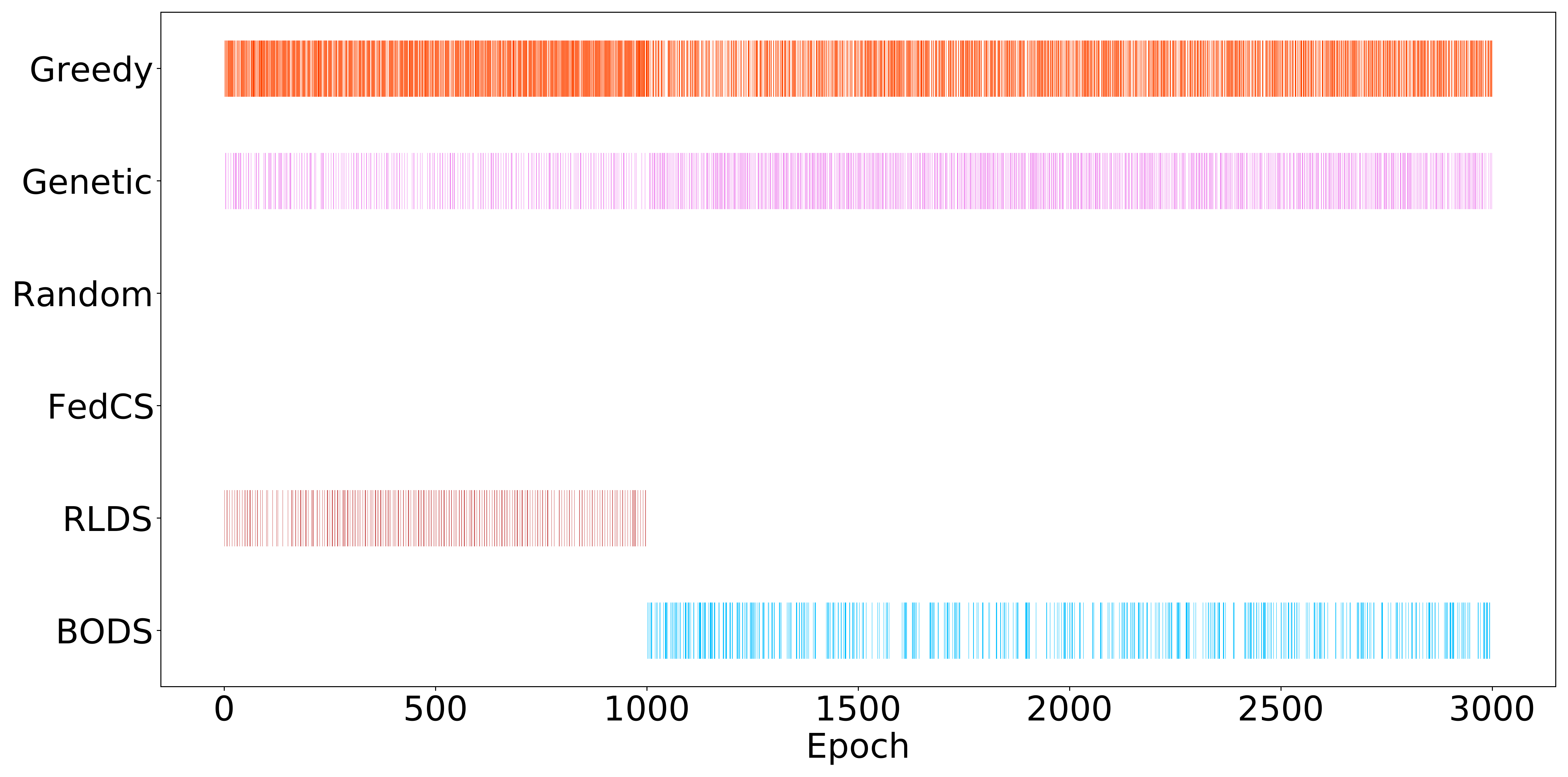}
\vspace{-4mm}
\caption{VGG with IID data}
\label{fig:groupR1Exp3VGGIID}
\end{subfigure}
\begin{subfigure}{0.45\linewidth}
\includegraphics[width=\linewidth]{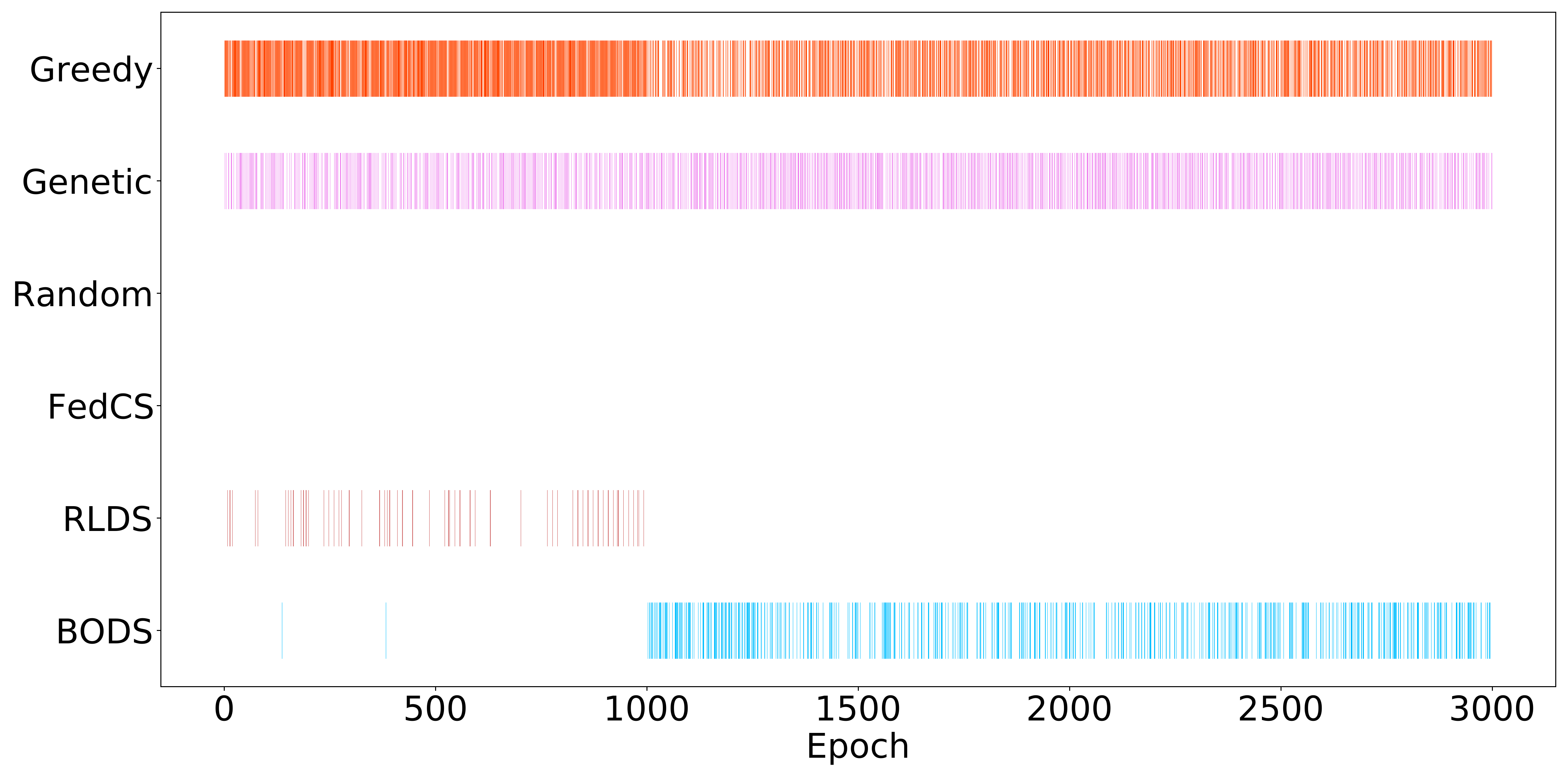}
\vspace{-4mm}
\caption{VGG with non-IID data}
\label{fig:groupR1Exp3VGGNIID}
\end{subfigure}
\vspace{-2mm}
\caption{The participation frequency of diverse methods in the training process of Group A with Meta-Greedy and 6 methods.}
\label{fig:groupR1Exp3}
\end{figure*}

As shown in Figure \ref{fig:groupR1Exp1}, RLDS favors complex jobs (VGG and ResNet) while BOSD corresponds to better performance for a simple job (CNN). We exploit VGG19 (21,240,010) \cite{simonyan2015very} and ResNet18 (595,466) \cite{He2016} to train models with non-IID Cifar10 \cite{krizhevsky2009learning} dataset. We exploit CNN with 491,920 parameters to train a model with the emnist-digital dataset \cite{cohen2017emnist}.

As RLDS can learn more information through a complex neural network, RLDS outperforms BODS for complex jobs (0.008 and 0.029 in terms of accuracy with VGG19 and ResNet18, and 46.7\% and 34.8\% faster for the target accuracy of 0.7 with VGG19 and 0.5 with ResNet18). Due to the emphasis on the combination of data fairness and device capabilities, i.e.,computation and communication capabilities, BODS can lead to high convergence accuracy and fast convergence speed for simple jobs (0.018 in terms of accuracy and 38\% faster for the target accuracy of 0.97 with CNN; see details in Appendix).

\subsubsection*{Meta-Greedy with 2 Methods and 6 Methods}

As shown in Figure \ref{fig:groupR1Exp2}, Meta-Greedy with 2 methods (BODS and RLDS) significantly outperforms that with 6 methods (BODS, RLDS, Genetic, Greedy, Random, and FedCS).

\subsubsection*{Frequency of Each Method in Meta-Greedy}

As shown in Figure \ref{fig:groupR1Exp3}, Meta-Greedy with 2 methods (BODS and RLDS) significantly outperforms that with 6 methods (BODS, RLDS, Genetic, Greedy, Random, and FedCS). We find that Greedy and Genetic are extensively exploited, RLDS is selected at the beginning of the training process, BODS is chosen at the end of the training process, FedCS participates with less frequency, and Random is seldomly utilized. This result shows that when there are Greedy, Genetic, FedCS, and RLDS, Meta-Greedy can combine them to generate better scheduling plans for a simple job (LeNet with IID). When the job becomes complex, Meta-Greedy combines Greedy, Genetic, FedCS, RLDS, and BODS to generate proper scheduling plans (LeNet with non-IID, CNN with both IID and non-IID). However, when the job becomes even more complex, Meta-Greedy exploits more frequently RLDS and BODS. RLDS is utilized at the beginning of the training process because of its superior performance while BODS is exploited at the end. As BODS may introduce some randomness to the scheduling process, it may correspond to better data fairness and higher accuracy at the end. Please note that does not contradict with the claim ``BODS favors simple jobs and RLDS favors complex jobs'' as the combination of Greedy and Genetic can well address the simple jobs and the training process is quite different from that of a single scheduling method. As Meta-Greedy can intelligently select a proper scheduling plan based on method, it corresponds to efficiency training process. 


\end{document}